\newlength{\figinferskip}
\newcommand{\jrule}[1]{\text{\scshape \MakeLowercase{#1}}}
\newcommand{\srule}[1]{\jrule{$#1$S}}
\newcommand{\arule}[1]{\jrule{$#1$F}}
\newcommand{\brule}[1]{\jrule{$#1$B}}
\newcommand{\prule}[1]{\jrule{$#1$P}}
\newcommand{\sig}{\Sigma}
\NewDocumentCommand{\sige}{s}{\IfBooleanT{#1}{(} \cdot \IfBooleanT{#1}{)}}
\newcommand{\alphas}{\vec{\alpha}}
\NewDocumentCommand{\betas}{s}{\IfBooleanTF{#1}{\smash{\betas}\vphantom{beta}}{\vec{\beta}}}
\newcommand{\defd}{\triangleq}
\newcommand{\tensor}{\mathbin{\times}}
\newcommand{\one}{\mathord{\mathbf{1}}}
\newcommand{\imp}{\mathbin{\rightarrow}}
\NewDocumentCommand{\ctxe}{s}{\IfBooleanTF{#1}{(\ctxe)}{\cdot}}
\NewDocumentCommand{\plus}{s}{\IfBooleanTF{#1}{\multiplus}{\mathbin{+}}}
\NewDocumentCommand{\multiplus}{o m}{\mathopen{\plus}\{\tl_if_empty:nTF{#2}{\,}{#2}\}\IfValueT{#1}{\sb{#1}}}
\NewDocumentCommand{\with}{s}{\IfBooleanTF{#1}{\multiwith}{\mathbin{\binampersand}}}
\NewDocumentCommand{\multiwith}{o m}{\mathopen{\with}\{\tl_if_empty:nTF{#2}{\,}{#2}\}\IfValueT{#1}{\sb{#1}}}
\NewDocumentCommand{\dsub}{o}{\leqslant} 
\newcommand{\ndsub}{\nleqslant}
\newcommand{\asub}{\preccurlyeq}
\NewDocumentCommand{\dsubtype}{o > { \SplitArgument { 1 } { < } } m }{
  \exp_last_unbraced:No \suspendsubs:nn { \use_i:nn #2 }
  \IfValueTF{#1}{\dsub[#1]}{\dsub}
  \exp_last_unbraced:No \suspendsubs:nn { \use_ii:nn #2 }
}
\NewDocumentCommand{\asubtype}{o > { \SplitArgument { 1 } { < } } m }{
  \use_i:nn #2 \asub \use_ii:nn #2 \IfValueT{#1}{\vof #1}
}
\newcommand{\entails}{\Rightarrow}
\newcommand{\nentails}{\Arrownot\entails}
\newcommand{\cov}{\mathord{+}}
\newcommand{\ctv}{\mathord{-}}
\newcommand{\vof}{\mathrel{\text{\ttfamily\#}}}
\NewDocumentCommand{\sube}{s}{\IfBooleanT{#1}{(} \cdot \IfBooleanT{#1}{)}}
\NewDocumentCommand{\subs}{s}{\IfBooleanTF{#1}{\Phi}{\Theta}}
\NewDocumentCommand{\subse}{s}{\IfBooleanT{#1}{(} \cdot \IfBooleanT{#1}{)}}
\newcommand{\set}[1]{\lbrace#1\rbrace}
\newcommand{\applysubs}[2]{\tosub{#1}(#2)}
\newcommand{\tosub}[1]{#1}
\NewDocumentCommand{\bpa}{ s m s m }{ \IfBooleanTF{#1}{ (#2) }{#2} \fatsemi \IfBooleanTF{#3}{ (#4) }{#4} }
\NewDocumentCommand{\bpasum}{ s }{ \IfBooleanTF{#1}{ \bpasumn }{ + } }
\NewDocumentCommand{\bpasumn}{ o m }{ \sum \IfValueT{#1}{ \sb{#1} } #2 }
\NewDocumentCommand{\bpaseq}{ m m }{ #1 \cdot #2 }
\NewDocumentCommand{\bpaemp}{ }{ \epsilon }
\NewDocumentCommand{\simu}{ }{ \lesssim }
\NewDocumentCommand{\umis}{ }{ \gtrsim }
\NewDocumentCommand{\trans}{ s o }{ \IfValueTF{#2}{ \mathrel{\IfBooleanT{#1}{\smash} {\overset{\raisebox{-0.5ex}{$\scriptstyle #2$}}{\trans}}} }{ \longrightarrow } }
\newcommand{\name}[1]{\mathsf{#1}}
\newcommand{\onen}{\name{one}}
\newcommand{\nat}{\name{nat}}
\newcommand{\zero}{\name{z}}
\newcommand{\suc}{\name{s}}
\newcommand{\even}{\name{even}}
\newcommand{\odd}{\name{odd}}
\newcommand{\listn}{\name{list}}
\newcommand{\nelistn}{\name{nelist}}
\newcommand{\elistn}{\name{elist}}
\newcommand{\nil}{\name{nil}}
\newcommand{\cons}{\name{cons}}
\newcommand{\olistn}{\name{olist}}
\newcommand{\tree}{\name{tree}}
\newcommand{\leaf}{\name{leaf}}
\newcommand{\node}{\name{node}}
\newcommand{\otree}{\name{otree}}
\newcommand{\lspine}{\name{spine}}
\newcommand{\fst}{\name{fst}}
\newcommand{\snd}{\name{snd}}
\newcommand{\stree}{\name{stree}}
\newcommand{\trie}{\name{treefn}}
\newcommand{\ptrie}{\name{trie}}
\newcommand{\option}{\name{option}}
\newcommand{\dyck}{\name{d}}
\newcommand{\edyck}{\name{e}}
\newcommand{\rightn}{\name{R}}
\newcommand{\leftn}{\name{L}}
\newcommand{\stack}{\name{stack}}
\newcommand{\pushn}{\name{push}}
\newcommand{\popn}{\name{pop}}
\newcommand{\some}{\name{some}}
\newcommand{\none}{\name{none}}
\newcommand{\rstack}{\name{rstack}}
\newcommand{\fold}{\name{fold}}
\newcommand{\size}{\name{size}}
\begin{document}

\title{Parametric Subtyping for Structural Parametric Polymorphism}

\author{Henry DeYoung}
\orcid{0000-0003-1649-9953}
\affiliation{%
  \institution{Carnegie Mellon University}
  \department{Computer Science Department}
  \city{Pittsburgh}
  \country{USA}
}
\email{hdeyoung@cs.cmu.edu}

\author{Andreia Mordido}
\orcid{0000-0002-1547-0692}
\affiliation{%
  \institution{Universidade de Lisboa}
  \department{Faculdade de Ci\^{e}ncias, LASIGE}
  \city{Lisbon}
  \country{Portugal}
}
\email{afmordido@ciencias.ulisboa.pt}

\author{Frank Pfenning}
\orcid{0000-0002-8279-5817}
\affiliation{%
  \institution{Carnegie Mellon University}
  \department{Computer Science Department}
  \city{Pittsburgh}
  \country{USA}
}
\email{fp@cs.cmu.edu}

\author{Ankush Das}
\orcid{0000-0003-2459-1258}
\affiliation{%
  \institution{Amazon}
  \city{Santa Clara}
  \country{USA}
}
\email{daankus@amazon.com}

\renewcommand{\shortauthors}{DeYoung et al.}

\begin{abstract}
  We study the interaction of structural subtyping with parametric polymorphism and recursively defined type constructors.
  Although structural subtyping is undecidable in this setting, we describe a notion of parametricity for type constructors and then exploit it to define \emph{parametric subtyping}, a conceptually simple, decidable, and expressive fragment of structural subtyping that strictly generalizes \emph{rigid subtyping}.
  We present and prove correct an effective saturation-based decision procedure for parametric subtyping, demonstrating its applicability using a variety of examples.
  We also provide an implementation of this decision procedure online.
\end{abstract}

\begin{CCSXML}
<ccs2012>
<concept>
<concept_id>10003752.10003790.10011740</concept_id>
<concept_desc>Theory of computation~Type theory</concept_desc>
<concept_significance>500</concept_significance>
</concept>
<concept>
<concept_id>10003752.10010124.10010125.10010130</concept_id>
<concept_desc>Theory of computation~Type structures</concept_desc>
<concept_significance>500</concept_significance>
</concept>
<concept>
<concept_id>10003752.10010124.10010125.10010127</concept_id>
<concept_desc>Theory of computation~Functional constructs</concept_desc>
<concept_significance>500</concept_significance>
</concept>
<concept>
<concept_id>10011007.10011006.10011008.10011024.10011025</concept_id>
<concept_desc>Software and its engineering~Polymorphism</concept_desc>
<concept_significance>500</concept_significance>
</concept>
<concept>
<concept_id>10011007.10011006.10011008.10011024.10011033</concept_id>
<concept_desc>Software and its engineering~Recursion</concept_desc>
<concept_significance>500</concept_significance>
</concept>
</ccs2012>
\end{CCSXML}

\ccsdesc[500]{Theory of computation~Type theory}
\ccsdesc[500]{Theory of computation~Type structures}
\ccsdesc[500]{Theory of computation~Functional constructs}
\ccsdesc[500]{Software and its engineering~Polymorphism}
\ccsdesc[500]{Software and its engineering~Recursion}

\keywords{structural subtyping, parametric polymorphism, type constructors, saturation-based algorithms}


\maketitle

\section{Introduction}\label{sec:introduction}

Recursive types, parametric polymorphism (also called generics), and subtyping are all essential features for modern programming languages across numerous paradigms.
Recursive types describe unbounded data structures;
parametric polymorphism provides type-level modularity by allowing programmers to use instantiations of $\listn[\alpha]$ rather than separate monomorphic types for integer and boolean lists, for example;
and subtyping provides flexibility in the ways that objects and terms can be used, enabling code reuse.
Structural subtyping, in particular, is especially flexible and expressive and, in principle, relatively lightweight for programmers to incorporate.

This combination of features is present to varying degrees in many of today's widely used languages, such as Go, Rust, TypeScript and Java, but the combination is difficult to manage.
For example, subtyping for generics in Java is known to be undecidable~\cite{Grigore:ACM17},
so various restrictions on types' structure have been proposed, such as material-shape separation~\cite{Greenman+:ACM14,Mackay+:APLAS20}, and the prohibition of contravariance, unbounded expansion of types, or multiple instantiation inheritance~\cite{Kennedy+Pierce:06}, to name a few.

We contend that these restrictions are often either too limiting or too unintuitive for programmers to readily reason about.
A reconstruction of the interaction between recursive types, parametric polymorphism, and structural subtyping from first principles is needed, accompanied by a clear, relatively simple declarative characterization of subtyping.
However, to the best of our knowledge, no such work has been undertaken thus far.
This paper fills that gap.

As a first step, we prove that structural subtyping is undecidable in the presence of recursive types and parametric polymorphism.%
\footnote{This proof revises a prior, unpublished proof by the present authors~\cite{Das+:arXiv21}.
Independently, a related result was proven by \citet{Padovani:TOPLAS19} for subtyping of context-free session types~\cite{Thiemann+Vasconcelos:ICFP16}.
Our proof occurs in the setting of recursively defined type constructors (which has been shown to be more general than context-free session types~\cite{Gay+:FoSSaCS22}) and identifies two other minimal undecidable fragments.}
Given this undecidability, our goal is to identify an expressive, practical fragment of structural subtyping that has three properties:
\begin{enumerate}
\item The fragment should have a relatively simple \emph{declarative characterization}, so that the valid subtypings are readily predictable by the programmer.
\item The fragment should be \emph{decidable}, with an effective algorithm that performs well on the kinds of subtyping problems that arise in practice.
\item
  The fragment should \emph{strictly generalize ``rigid subtyping''}, a form of subtyping in which subtypings exist only between types with the same outermost type constructor, such as $\dsubtype{{\listn[\name{int}]}{} < {\listn[\name{real}]}{}}$ but never $\dsubtype{{\listn[\name{int}]}{} < {\listn'[\name{real}]}{}}$.
\end{enumerate}

It is not immediately clear that such a fragment of structural subtyping should even exist, as seemingly innocent variations of the problem are already undecidable or impractical.
\Citet{Solomon:POPL78} showed that structural \emph{equality} for parametric polymorphism can be reduced to equivalence of deterministic pushdown automata, but it took more than 20 additional years to establish decidability~\cite{Senizergues:TCS01,Stirling:TCS01,Stirling:FSTTCS01}, albeit by an intractable algorithm.
As another example, even \emph{without} recursive types, subtyping for implicit, Curry-style polymorphism~\cite{Tiuryn+Urzyczyn:IC02,Wells:BU95} and bounded quantification~\cite{Pierce:IC94} are both undecidable.

Nevertheless, in this paper, we are able to achieve our goal:
we propose a notion of parametricity for type constructors that forms the basis of a suitable fragment of structural subtyping, a fragment that we call \emph{parametric subtyping}.
Parametric constructors will map subtyping-related arguments to subtyping-related results, echoing \citeauthor{Reynolds:IFIP83}'s~\shortcite{Reynolds:IFIP83} characterization of parametric functions as those that map related arguments to related results.\footnote{This analogy will be discussed further in \cref{sec:parametric}.}
Moreover, by exploiting parametricity, we avoid unintuitive restrictions on types' structure
and can support even non-regular types~\cite{Bird+Meertens:MPC98,Mycroft:ISP84}.

Because of its fundamental nature, our notion of parametric subtyping and associated decision procedure could be applied to a wide variety of languages:
object-oriented languages; lazy and eager functional languages; imperative languages; mixed inductive/coinductive languages, such as call-by-push-value~\cite{Levy:PhD01}; session-typed languages~\cite{Honda+:ESOP98,Gay+Hole:Acta05,Caires+Pfenning:CONCUR10,Silva+:CONCUR23}; and so on.\footnote{Of course, when applied to a given language, there will be additional language-specific considerations, \textit{e.g.}, interaction with intersection types in TypeScript or type classes in Haskell.
We do not claim to address such considerations here.}

We want to emphasize this broad applicability by keeping this paper's technical framework as general as possible.
This leads us to make several concrete choices in this paper's presentation.
\begin{itemize}
\item
  We do not consider subtypings such as $\dsubtype{{\mathbf{0}}{} < \one{}}$ and $\dsubtype{\one{} < {\mathbf{0} \imp \one}{}}$ that arise when some types are uninhabited~\cite{Ligatti+:TOPLAS17}.
  This is because we choose to interpret all types coinductively, making them all inhabited, even $\mathbf{0}$.
  Nevertheless, the parametric subtyping rules in this paper are sound in languages where some types are interpreted inductively.
  Had we instead insisted on an inductive treatment of some types, parametric subtypings such as the above would be unsound for lazy functional and session-typed languages, undercutting broader applicability.

\item
  Neither do we consider subtypings that rely on implicit, Curry-style polymorphism, such as $\dsubtype{{(\forall x.\, \listn[x])}{} < {\listn[\name{int}]}{}}$, or bounded quantification, such as $\forall (\dsubtype{x{} < {\with*{\name{x}\colon \name{real}}}{}}).\, x \tensor \name{real} \imp x$, because subtyping is already undecidable in those settings, even \emph{without} recursive types~\cite{Tiuryn+Urzyczyn:IC02,Wells:BU95,Pierce:IC94}.
\end{itemize}
In summary, our primary aim is to examine the interaction of explicit, Church-style polymorphism, recursive type constructors, and the fundamental core of structural subtyping.

\subsection{Overview of parametric subtyping}

To provide the reader with some intuition for our notion of parametric subtyping, we will now sketch, at a high level, how parametric subtyping satisfies the three desired properties.
\begin{enumerate}
\item
  \emph{Relatively simple declarative characterization.}
  A pair of type constructors, $t[\alphas]$ and $u[\betas]$, will be considered parametric if
  the subtyping problem $\dsubtype{{t[\alphas]}{} < {u[\betas*]}{}}$ can be reduced to (finitely many) subtyping problems among the arguments $\alphas$ and $\betas*$ alone.

  \quad As an example, consider an interface for stack objects (or, from a functional perspective, a record type for stacks), parameterized by a type~$\alpha$ of stack elements:
  \begin{alignat*}{2}
    \stack[\alpha] &\defd \mathrlap{\with*{\pushn\colon \alpha \imp \stack[\alpha] ,\, \popn\colon \option[\alpha \tensor \stack[\alpha]]}}
\intertext{where $\option[\beta] \defd \plus*{\none\colon \one ,\, \some\colon \beta}$.
  Programmers sometimes want to ensure that a stack be used according to a particular protocol.
  For example, when implementing a queue using a pair of stacks (as sometimes done in functional languages), the protocol in which all pushes must occur before any pops can be expressed by the types}
  \name{qstack_1}[\beta] &\defd \with*{\pushn\colon \beta \imp \name{qstack_1}[\beta] ,\, &&\popn\colon \option[\beta \tensor \name{qstack_2}[\beta]]}
  \\
  \name{qstack_2}[\beta] &\defd \with*{&&\popn\colon \option[\beta \tensor \name{qstack_2}[\beta]]}
  \mathrlap{\,.}
\end{alignat*}
  By virtue of having definitions with compatible structures, $\stack[-]$ is a subtype of itself, $\name{qstack_1}[-]$, and $\name{qstack_2}[-]$ according to the admissible subtyping rules
  \begin{equation*}
    \qquad\quad
    \infer{\dsubtype{{\stack[\alpha]}{} < {\stack[\beta]}{}}}{
      \dsubtype{\alpha{} < {\beta}{}} & \dsubtype{{\beta}{} < \alpha{}}}
    \:,\quad
    \infer{\dsubtype{{\stack[\alpha]}{} < {\name{qstack_1}[\beta]}{}}}{
      \dsubtype{\alpha{} < \beta{}} & \dsubtype{\beta{} < \alpha{}}}
    \:,\enspace\,\text{\emph{and}}\quad
    \infer{\dsubtype{{\stack[\alpha]}{} < {\name{qstack_2}[\beta]}{}}}{
      \dsubtype{\alpha{} < {\beta}{}}}
    \mathrlap{\,.}
  \end{equation*}
  These rules are admissible in the sense that there exist corresponding infinite derivations that use only the standard structural subtyping rules.
  More importantly, these are valid \emph{parametric} subtyping rules because the premises involve only arguments, $\alpha$ and $\beta$.
  A rule such as ``$\dsubtype{{\name{intlist}}{} < {\listn[\beta]}{}}$ if $\dsubtype{{\name{int}}{} < \beta{}}$'', where $\name{intlist} \defd \plus*{\nil\colon \one ,\, \cons\colon \name{int} \tensor \name{intlist}}$, would \emph{not} be parametric because its premise involves a type, $\name{int}$, that is not an argument.

  \quad Given such rules, parametric subtyping is then conceptually rather straightforward:
  A subtyping between types holds because it has a finite derivation from the admissible \emph{parametric} subtyping rules.
  For example, $\dsubtype{{\stack[\stack[\name{int}]]}{} < {\name{qstack_2}[\name{qstack_1}[\name{int}]]}{}}$ is a valid parametric subtyping because we can derive
  \begin{equation*}
    \infer{\dsubtype{{\stack[\stack[\name{int}]]}{} < {\name{qstack_2}[\name{qstack_1}[\name{int}]]}{}}}{
      \infer{\dsubtype{{\stack[\name{int}]}{} < {\name{qstack_1}[\name{int}]}{}}}{
        \infer{\dsubtype{{\name{int}}{} < {\name{int}}{}}}{} &
        \infer{\dsubtype{{\name{int}}{} < {\name{int}}{}}}{}}}
  \end{equation*}
  from the above admissible parametric subtyping rules.
  On the other hand, $\dsubtype{{\name{intlist}}{} < {\listn[\name{int}]}{}}$ is \emph{not} a valid parametric subtyping because there is no admissible parametric subtyping rule for $\dsubtype{{\name{intlist}}{} < {\listn[\beta]}{}}$.%
  \footnote{Our interest in subtyping is primarily motivated by the desire to express more program properties (\emph{e.g.}, $\dsubtype{{\nelistn[\name{int}]}{} < {\listn[\name{int}]}{}}$ for a function that always returns a nonempty list), rather than a desire to type as many programs as possible.
  From this point of view, we find it acceptable that programmers may sometimes be forced to use $\listn[\name{int}]$ in place of $\name{intlist}$.}

  \quad In \cref{sec:parametric}, we present an equivalent characterization of parametric subtyping that is more amenable to metatheoretic proofs.
  A series of examples in \cref{sec:examples} demonstrates that the valid parametric subtypings are readily predictable by the programmer and expressive enough for many subtypings desired in practice.\\[-0.75\baselineskip]

\item
  \emph{Decidable.}
  In \cref{sec:decide-poly}, we prove that parametric subtyping is decidable by giving a saturation algorithm that is sound and complete with respect to the declarative characterization of parametric subtyping (\cref{thm:sound-poly,thm:complete-poly}).
  The algorithm infers, for each pair of type constructors, the most general parametric subtyping rule, if one exists.
  Moreover, when no such parametric rule exists, the algorithm determines whether the cause is a fundamental violation of structural subtyping or merely a violation of parametricity.
  After inferring all such admissible rules, a given subtyping problem can be decided by backward proof construction of a finite derivation
 using the inferred rules.

  \quad
  We have implemented this decision procedure, and it is available online in a virtual machine image~\cite{DeYoung+:Zenodo23}; the source files are also available in an online repository~\cite{DeYoung+:Bitbucket23}.
\\[-0.75\baselineskip]

\item
  \emph{Generalizes rigid subtyping.}
  Rigid subtyping is characterized by those parametric rules that relate identical type constructors, such as the above rule for $\dsubtype{{\stack[\alpha]}{} < {\stack[\beta]}{}}$.
  Our notion of parametric subtyping is indeed strictly more general than rigid subtyping, in that it also admits those parametric rules that relate distinct type constructors, such as the above rule for $\dsubtype{{\stack[\alpha]}{} < {\name{qstack_1}[\beta]}{}}$.

  \quad
  This is a simple but important property.
  If parametric subtyping somehow did not generalize rigid subtyping, that failure of type constructor ``reflexivity'' would make parametric subtyping very unintuitive and would likely be indicative of other serious problems.
  (Comparison to \emph{nominal subtyping}~\cite[\emph{e.g.},][]{Kennedy+Pierce:06} as common in object-oriented languages is left to future work.)
\end{enumerate}

The most closely related work is that on refinement types, specifically datasort refinements~\cite{Freeman+Pfenning:PLDI91}; there are significant differences, however.
First, whereas the refinement system refines a nominal type into a collection of structural sorts, we use a single-layer, fully structural system.
Second, \citet[Sec.~7.4]{Davies:PhD05} defines an algorithm for subsorting parameterized sort constructors that respects parametricity, but requires explicit declarations for the constructors' variance and handles only very limited cases of nested sorts.
(On the other hand, he deals with general pattern matching, module boundaries, and intersections, which are beyond the scope of the present work.)
Third, \citet{Skalka:MSc97} gives an algorithm to decide the emptiness of refinement types, but does not give a subtyping algorithm and handles only regular type constructors.
Last, whereas the nominal core of refinement types means that a defined type cannot later be widened into a supertype, our fully structural system has the advantage of naturally permitting widening.

In summary, the contributions of this paper are:
to identify several minimal fragments for which structural subtyping is undecidable~(\cref{sec:structural:undecidability});
to give a simple, declarative characterization of parametric subtyping, as a fragment of structural subtyping~(\cref{sec:parametric});
to present a saturation algorithm for deciding parametric subtyping for parametric polymorphism~(\cref{sec:decide-poly}), as well as proofs of its soundness and completeness with respect to the declarative characterization (\cref{thm:sound-poly,thm:complete-poly});
to implement this decision procedure~(\cref{sec:implementation}); and
to give, as a special case of this decision procedure, a saturation-based decision procedure for structural subtyping of monomorphic types that has several advantages over existing algorithms~(\cref{sec:decide-mono}).
Details of the proofs sketched here can be found in Appendix~\ref{sec:appendix:proofs}.

\section{Structural subtyping for parametric polymorphism}\label{sec:structural}

In this \lcnamecref{sec:structural}, we describe the syntax of types, present a declarative characterization of structural subtyping, and show that it is undecidable in the presence of recursively defined type constructors.

\subsection{Syntax of types}

Programmers write types in the form to which they are accustomed, such as in the type definition $\listn[\alpha] \defd \plus*{\nil\colon \one ,\, \cons\colon \alpha \tensor \listn[\alpha]}$.
However, throughout this paper, it will often be convenient to work with types in a normal form that maintains a strict distinction and alternation between \emph{named types} $\tau$ and \emph{structural types} $A$.
For this reason, the programmer-defined types will be normalized in a preliminary elaboration phase that
inserts additional type constructors, in a manner reminiscent of the conversion of context-free grammars to \citeauthor{Greibach:JACM65} normal form~\shortcite{Greibach:JACM65} and the syntax \citet{Huet:MSCS98} used in deciding extensional equality of total B\"{o}hm trees.
Details of this elaboration are postponed to \cref{sec:implementation}.

For types in normal form, the syntax is as follows.
In addition to these syntactic categories, we use $\alpha$ for type constructor parameters and $x$ for explicitly quantified type variables.
\begin{alignat*}{2}
  \text{\emph{Named types}} &\quad&
    \tau , \sigma &\Coloneqq t[\theta] \mid \alpha \mid x
  \\
  \text{\emph{Type substitutions}} &&
    \theta , \phi &\Coloneqq \sube* \mid \theta , \tau/\alpha
  \\
  \text{\emph{Structural types}} &&
    A , B &\Coloneqq \begin{lgathered}[t]
                       \tau_1 \tensor \tau_2 \mid \one \mid \plus*[\ell \in L]{\ell\colon \tau_\ell} \mid \exists x.\tau \\[-\jot]
                       \mathllap{\mid {}} \tau_1 \imp \tau_2 \mid \with*[\ell \in L]{\ell\colon \tau_\ell} \mid \forall x.\tau
                     \end{lgathered}
  \\
  \text{\emph{Definitions}} &&
    \sig &\Coloneqq \sige* \mid \sig , t[\alphas] \defd A
\quad
\text{\emph{(exactly one defn.\ }}\mathrlap{\text{\emph{per $t$)}}}
\end{alignat*}

\subsubsection{Structural types}
The structural types $A$ consist of:
product types $\tau_1 \tensor \tau_2$ and the unit type~$\one$;
variant record types $\plus*[\ell \in L]{\ell\colon \tau_\ell}$, indexed by (possibly empty) finite sets $L$ of alternatives~$\ell$;
existentially quantified types $\exists x.\tau$;
function types $\tau_1 \imp \tau_2$;
record types $\with*[\ell \in L]{\ell\colon \tau_\ell}$, again indexed by (possibly empty) finite sets $L$; and
universally quantified types $\forall x.\tau$.
The somewhat nonstandard feature of this syntax is that structural types $A$ have only named types $\tau$ as immediate subformulas.
This enforces the first part of the strict alternation between structural and named types that is prescribed by the normal form.

\subsubsection{Named types and definitions}\label{sec:structural:named}
Named types $\tau$ are primarily type constructor instantiations of the form $t[\theta]$, where $t$ is a defined type constructor\footnote{%
\emph{Defined} type constructors $t$ are distinct from structural type constructors like $\imp$.
However, in the remainder of this paper, we will frequently drop the `defined' qualifier for conciseness and simply use `type constructor' to refer exclusively to defined type constructors.}
and $\theta$ is a type substitution.
Such type constructors $t$ are recursively defined\footnote{%
We could have chosen to use a recursion operator $\mu$ and explicit $\mathsf{fold}$s and $\mathsf{unfold}$s, but by using definitions, we avoid the complication of comparing $\mu$-types for equality.
We also find definitions easier to read and closer to actual practice.}
in a set $\sig$ of definitions.
There are finitely many definitions of the form $t[\alphas] \defd A$, exactly one for each defined type constructor, where the structural type $A$ may contain free occurrences of the parameters $\alphas$ but must be otherwise closed.
The substitution~$\theta$ in $t[\theta]$ then serves to instantiate the type parameters $\alphas$ used in $t$'s definition.
(In examples, we use an application-like syntax in place of substitutions, such as $t[\tau]$ instead of $t[\tau/\alpha]$.)

Notice that definitions enforce the other part of the strict alternation between structural types and named types that is prescribed by this normal form: type constructors $t[\alphas]$ are defined only in terms of structural types $A$, not named types $\tau$.
Moreover, this ensures that all definitions are contractive~\cite{Gay+Hole:Acta05}.

Given the shallow syntax of structural types, named types $\tau$ must also include type parameters $\alpha$, so that the structural body of a definition $t[\alphas] \defd A$ may indeed contain occurrences of parameters $\alphas$.
Similarly, named types $\tau$ also include type variables $x$ bound by the $\forall$ and $\exists$ quantifiers.

\subsubsection{Type substitutions}\label{sec:structural:subs}
In structural subtyping, definitions $t[\alphas] \defd A$ will be interpreted transparently, with $t[\theta]$ and its unfolding, $\theta(A)$, being treated indistinguishably (aside from belonging to distinct syntactic categories).
Because such type definitions are closed apart from their parameters $\alphas$, the domains of type substitutions $\theta$ consist only of type parameters $\alpha$.
Moreover, substitutions map these parameters to named types $\tau$, not to structural types, so that the instantiation of a structural type, $\theta(A)$, is itself a well-formed structural type.

\subsubsection{Examples}

Here we present two examples to which we will repeatedly return in this paper.

\paragraph{Even and odd natural numbers}

As a simple example of a type, the programmer could write the following type definitions to describe a unary representation of natural numbers, as well as even and odd natural numbers.%
\footnote{Recall that we choose, in this paper, to treat all types coinductively.
  Strictly speaking, the types $\nat$, $\even$, and $\odd$ therefore represent the respective finite natural numbers \emph{together with} the first limit ordinal, $\omega = \suc\mkern2mu\suc \dotsb$.
  This subtlety is familiar from lazy functional languages such as Haskell.}
(We omit $[]$ when a defined type takes no parameters.)
\begin{alignat*}{5}
%
  &&\qquad&&&&&&
  \odd &\defd \plus*{\suc\colon \even}
  \\[-\jot]
  &&&&
  \nat &\defd \plus*{\zero\colon \one ,\, \suc\colon \nat}
  &\qquad&&
  \even &\defd \plus*{\zero\colon \one ,\, \suc\colon \odd}
\intertext{The elaboration phase would normalize these types by introducing an auxiliary type name $\onen$ and revising the definitions of $\nat$ and $\even$ so that structural and named types alternate:}
  &&&&&&&&
  \odd &\defd \plus*{\suc\colon \even}
  \\[-\jot]
  \onen &\defd \one 
  &&&
  \nat &\defd \plus*{\zero\colon \onen ,\, \suc\colon \nat} 
  &&&
  \even &\defd \plus*{\zero\colon \onen ,\, \suc\colon \odd}
  %
  %
\end{alignat*}
To avoid the bureaucracy of having to write types in normal form, future examples given in this paper presume that types will be normalized during elaboration.

The even and odd natural numbers are, of course, subsets of the natural numbers.
So, taking a sets-of-values interpretation of subtyping, we ought to have $\even$ and $\odd$ as subtypes of $\nat$, but we ought \emph{not} to have $\nat$ as a subtype of $\even$ and $\odd$.

\paragraph{Context-free languages}
As a more complex example, we consider the type of words belonging to the context-free language $\set{\leftn^n \rightn^n \$ \mid n \geq 0}$.
(The terminal symbol, $\$$, is necessary to make the language prefix-free~\cite{Korenjak+Hopcroft:SWAT66} and thereby represent the empty word in a typable way.)
To aid intuition, we show both the context-free grammar (in \citeauthor{Greibach:JACM65} normal form~\shortcite{Greibach:JACM65}) for this language on the left and the corresponding, quite parallel, type definitions on the right.%
\footnote{Once again, because all types are interpreted coinductively in this paper, the type $\edyck_0$ would also be inhabited by the infinite word $\leftn\mkern2mu\leftn\dotsb$.}
\begin{equation*}
  \begin{array}[t]{@{}c@{}}
    \text{\emph{CFG in Greibach normal form}}
    \\ \midrule \addlinespace
    \,\begin{aligned}[t]
      e_0 &\rightarrow \leftn \, e \, \mathit{end} \mid \$
      \\
      e &\rightarrow \leftn \, e \, r \mid \rightn
    \end{aligned}
    \quad\enspace
    \begin{aligned}[t]
      \mathit{end} &\rightarrow \$
      \\
      r &\rightarrow \rightn
    \end{aligned}
  \end{array}
  \qquad\qquad
  \begin{array}[t]{@{}c@{}}
    \text{\emph{Type definitions}}
    \\ \midrule \addlinespace
    \,\begin{aligned}[t]
      \edyck_0 &\defd \plus*{\leftn\colon \edyck[\name{end}] ,\, \$\colon \onen}
      \\
      \edyck[\kappa] &\defd \plus*{\leftn\colon \edyck[\name{r}[\kappa]] ,\, \rightn\colon \kappa}
    \end{aligned}
    \enspace\text{\emph{where}}\enspace
    \begin{aligned}[t]
      \name{end} &\defd \plus*{\$\colon \onen}
      \\
      \name{r}[\kappa] &\defd \plus*{\rightn\colon \kappa}
    \end{aligned}
  \end{array}
\end{equation*}
Here, the type $\edyck_0$ relies on the constructor $\edyck[\kappa]$, which describes the language $\set{\leftn^n \rightn^{n+1} \kappa \mid n \geq 0}$; that is, the parameter $\kappa$ maintains a continuation to be used when the unmatched $\rightn$ is produced.
Because the type $\edyck_0$ refers to $\edyck[-]$ only after producing an initial $\leftn$, the words described by $\edyck_0$ are indeed a string of $\leftn$s followed by the same number of $\rightn$s (followed by $\$$).

In a similar way the context-free grammar (again in Greibach normal form) and the type $\dyck_0$ given below describe the ($\$$-terminated) Dyck language of balanced delimiters, here $\leftn$ and $\rightn$.
\begin{equation*}
  \begin{array}[t]{@{}c@{}}
    \text{\emph{CFG in Greibach normal form}}
    \\ \midrule \addlinespace \,
    \begin{aligned}
      d_0 &\rightarrow \leftn \, d \, d_0 \mid \$
      \\
      d &\rightarrow \leftn \, d \, d \mid \rightn
    \end{aligned}
  \end{array}
  \qquad\qquad
  \begin{array}[t]{@{}c@{}}
    \text{\emph{Type definitions}}
    \\ \midrule \addlinespace \,
    \begin{aligned}
      \dyck_0 &\defd \plus*{\leftn\colon \dyck[\dyck_0] ,\, \$\colon \onen}
      \\
      \dyck[\kappa'] &\defd \plus*{\leftn\colon \dyck[\dyck[\kappa']] ,\, \rightn\colon \kappa'}
    \end{aligned}
  \end{array}
\end{equation*}
The type $\dyck_0$ relies on the type constructor $\dyck[\kappa']$, which describes the context-free language of ``nearly balanced'' delimiters, in which words of balanced delimiters are followed by one additional unmatched $\rightn$;
once again, the type parameter $\kappa$ maintains a continuation to be used when that unmatched $\rightn$ is produced.
The type $\dyck_0$ refers to $\dyck[-]$ only after producing an initial $\leftn$, so the words described by $\dyck_0$ are indeed balanced.

Because $\set{\leftn^n \rightn^n \$ \mid n \geq 0}$ is a subset of the $\$$-terminated Dyck language, we ought to have $\edyck_0$ as a subtype of $\dyck_0$, but \emph{not} $\dyck_0$ as a subtype of $\edyck_0$.

\subsection{Structural subtyping}

Because our normalized types are separated into named types and structural types, structural subtyping will be given a declarative characterization in terms of derivations of two judgments:
$\dsubtype{\tau{} < \sigma{}}$ for named type $\tau$ as a subtype of named type $\sigma$, and $\dsubtype{A{} < B{}}$ for structural type $A$ as a subtype of structural type $B$.
Derivations of the $\dsubtype{\tau{} < \sigma{}}$ and $\dsubtype{A{} < B{}}$ judgments will be defined coinductively.
That is, these derivations may be infinitely deep (but will be finitely wide).
Stated differently, subtyping's coinductive nature and underlying greatest fixed point mean that a subtyping relationship holds in the absence of a counterexample, and that absence is witnessed by a potentially infinite derivation.\footnote{%
For monomorphic subtyping, merely \emph{circular} derivations~\cite{Brotherston+Simpson:JLC10}, which are finite representations of \emph{regular} infinite derivations, would suffice~\cite{Lakhani+:ESOP22}.}

Returning to the first of our running examples, for $\even$ to be a subtype of $\nat$, we must be able to construct infinite derivations of $\dsubtype{\even{} < \nat{}}$.
On the other hand, because $\nat$ ought \emph{not} to be a subtype of $\even$, there must \emph{not} exist a derivation of $\dsubtype{\nat{} < \even{}}$.

The entire set of inference rules used to construct (potentially) infinite derivations of subtyping judgments can be found in \cref{fig:structural-subtyping}.
\begin{figure}
  \begin{gather*}
    \infer[\srule{\jrule{UNF-}}]{\dsubtype{{t[\theta]}{} < {u[\phi]}{}}}{
      t[\alphas] \defd A & u[\betas*] \defd B &
      \dsubtype{{\theta(A)}{} < {\phi(B)}{}}}
    \\[\figinferskip]
    \infer[\srule{\jrule{VAR-}}]{\dsubtype{x{} < x{}}}{}
    \qquad
    \raisebox{0.15\baselineskip}{(no rules for $\dsubtype{x{} < \tau{}}$ and $\dsubtype{\tau{} < x{}}$ when $\tau \neq x$)}
    \\[\figinferskip]
    \infer[\srule{\tensor}]{\dsubtype{{\tau_1 \tensor \tau_2}{} < {\sigma_1 \tensor \sigma_2}{}}}{
      \dsubtype{{\tau_1}{} < {\sigma_1}{}} &
      \dsubtype{{\tau_2}{} < {\sigma_2}{}}}
    \qquad
    \infer[\srule{\one}]{\dsubtype{\one{} < \one{}}}{}
    \qquad
    \infer[\srule{\plus}]{\dsubtype{{\plus*[\ell \in L]{\ell\colon \tau_\ell}}{} < {\plus*[k \in K]{k\colon \sigma_k}}{}}}{
      (L \subseteq K) &
      \forall \ell \in L\colon 
        \dsubtype{{\tau_\ell}{} < {\sigma_\ell}{}}}
    \\[\figinferskip]
    \infer[\srule{\imp}]{\dsubtype{{\tau_1 \imp \tau_2}{} < {\sigma_1 \imp \sigma_2}{}}}{
      \dsubtype{{\sigma_1}{} < {\tau_1}{}} &
      \dsubtype{{\tau_2}{} < {\sigma_2}{}}}
    \qquad
    \infer[\srule{\with}]{\dsubtype{{\with*[\ell \in L]{\ell\colon \tau_\ell}}{} < {\with*[k \in K]{k\colon \sigma_k}}{}}}{
      (K \subseteq L) &
      \forall k \in K\colon 
        \dsubtype{{\tau_k}{} < {\sigma_k}{}}}
    \\[\figinferskip]
    \infer[\srule{\forall}]{\dsubtype{{\forall x.\tau}{} < {\forall y.\sigma}{}}}{
      \text{($z$ fresh)} &
      \dsubtype{{[z/x]\tau}{} < {[z/y]\sigma}{}}}
    \qquad
    \infer[\srule{\exists}]{\dsubtype{{\exists x.\tau}{} < {\exists y.\sigma}{}}}{
      \text{($z$ fresh)} &
      \dsubtype{{[z/x]\tau}{} < {[z/y]\sigma}{}}}
  \end{gather*}
  \caption{Structural subtyping rules ({\normalfont\scshape s} for `structural').  These inference rules are interpreted coinductively and are most clearly read bottom-up, from conclusion to premises.}\label{fig:structural-subtyping}
\end{figure}
These rules are interpreted coinductively and are most clearly read bottom-up, from conclusion to premises.
We will now comment on a few of the rules.

\subsubsection{Structural subtyping of named types}

Structural subtyping treats type definitions in an entirely transparent way:
when $t[\alphas] \defd A$ and $u[\betas*] \defd B$, the type $t[\theta]$ is a subtype of $u[\phi]$ exactly when the same subtyping relationship holds for their unfoldings, $\theta(A)$ and $\phi(B)$, respectively.
This is expressed by the $\srule{\jrule{UNF-}}$ rule.
Additionally, a type variable $x$ is considered to be a subtype of only itself, as captured in the $\srule{\jrule{VAR-}}$ rule.

\subsubsection{Structural subtyping of structural types}

Aside from the alternation of structural and named types, the rules for structural subtyping of structural types, $\dsubtype{A{} < B{}}$, are standard~\cite{Pierce:TAPL02}.
The rules decompose the structural types into their immediate subformulas and then require certain subtyping relationships on those subformulas.
For example, the $\srule{\plus}$ rule for variant record types is standard (see e.g.~\cite{Pierce:TAPL02,Gay+Hole:Acta05}).
For $\plus*[\ell \in L]{\ell\colon \tau_\ell}$ to be a subtype of $\plus*[k \in K]{k\colon \sigma_k}$, the condition $L \subseteq K$ demands that the latter type offer at least as many alternatives as, but possibly more than, the former type, thereby accounting for width subtyping of variant record types.
Moreover, by requiring that $\dsubtype[i]{{\tau_\ell}{} < {\sigma_\ell}{}}$ holds for all alternatives $\ell$ shared by the two types, this rule also accounts for covariant depth subtyping of variant record types.

Subtyping for the polymorphic quantifiers $\forall x.\tau$ and $\exists x.\tau$ is also standard for explicit, Church-style polymorphism.
We certainly could have unified the $\srule{\forall}$ and $\srule{\exists}$ rules into a single $\srule{\mu}$ rule, with a side condition that $\mu \in \set{\forall,\exists}$.
However, because $\forall x.\tau$ and $\exists x.\tau$ will type different terms, we prefer to maintain distinct subtyping rules for $\forall$ and $\exists$.
Moreover, as previously mentioned in \cref{sec:introduction}, we do not consider subtyping for implicit, Curry-style polymorphism, nor bounded quantification, in this paper, leaving these as future work.

As previously mentioned, we do \emph{not} consider subtyping for implicit, Curry-style polymorphism~\cite{Odersky+Laeufer:POPL96} or bounded quantification~\cite{Cardelli+Wegner:CSUR85,Cardelli+:IC94} in this paper.
Because our interest is in the interaction of subtyping, Church-style polymorphism, and recursion, these are outside the scope of this paper and left as future work.

\subsubsection{Examples}\label{sec:structural:examples}

We now return to the running examples in the context of structural subtyping.

\paragraph{Even and odd natural numbers}

For $\even$ and $\odd$ to be subtypes of $\nat$, we must be able to construct derivations of $\dsubtype{\even{} < \nat{}}$ and $\dsubtype{\odd{} < \nat{}}$.
Because structural subtyping derivations are (potentially) infinite, they cannot be directly written down in their entirety.
A finite, constructive proof of their existence instead suffices, and a useful proof technique here is coinduction.
For example, for $\dsubtype{\even{} < \nat{}}$ and $\dsubtype{\odd{} < \nat{}}$, mutual coinduction can be used:
\begin{equation*}
  \infer[\srule{\jrule{UNF-}}]{\dsubtype{\even{} < \nat{}}}{
    \infer[\mathrlap{\srule{\plus}}]{\dsubtype{{\plus*{\zero\colon \onen ,\, \suc\colon \odd}}{} < {\plus*{\zero\colon \onen ,\, \suc\colon \nat}}{}}}{
      \infer[\srule{\jrule{UNF-}}]{\dsubtype{\onen{} < \onen{}}}{
        \infer[\srule{\one}]{\dsubtype{\one{} < \one{}}}{}} &
      \infer-{\dsubtype{\odd{} < \nat{}}}{}}}
  \quad\;\text{\emph{and}}\quad\enspace
  \infer[\srule{\jrule{UNF-}}]{\dsubtype{\odd{} < \nat{}}}{
    \infer[\mathrlap{\srule{\plus}}]{\dsubtype{{\plus*{\suc\colon \even}}{} < {\plus*{\zero\colon \onen ,\, \suc\colon \nat}}{}}}{
      \infer-{\dsubtype{\even{} < \nat{}}}{}}}
\end{equation*}
We use a dotted line to indicate the coinductive appeals to $\dsubtype{\odd{} < \nat{}}$ and $\dsubtype{\even{} < \nat{}}$, which can also be thought of as admissible structural subtyping rules -- admissible in the sense that they can always be unfolded to the corresponding infinite derivations, which involve only rules found in \cref{fig:structural-subtyping}.
Each of these appeals is guarded by the $\srule{\jrule{UNF-}}$ and $\srule{\plus}$ rules.

Here, the full expressive power of infinite derivations is not needed.
Because the types are monomorphic, circular derivations~\cite{Brotherston+Simpson:JLC10}, which are finite representations of \emph{regular} infinite derivations, would suffice~\cite{Lakhani+:ESOP22}:
For $\dsubtype{\even{} < \nat{}}$, the right-hand derivation segment could be inlined within the left-hand segment, with the inlined coinductive appeal to $\dsubtype{\even{} < \nat{}}$ then circling back to $\dsubtype{\even{} < \nat{}}$ at the ``root''.
Then $\dsubtype{\odd{} < \nat{}}$ is similar.

As a negative example, we \emph{cannot} derive $\dsubtype[i]{\nat{} < \odd{}}$ because, after unfolding $\nat$ and $\odd$ with the $\srule{\jrule{UNF-}}$ rule, we would need to show that $\set{\zero , \suc} \subseteq \set{\suc}$, which is simply false.
Similarly, we \emph{cannot} derive $\dsubtype[i+1]{\nat{} < \even{}}$ because, after unfolding $\nat$ and $\even$, we would need to derive $\dsubtype[i]{\nat{} < \odd{}}$.

\paragraph{Context-free languages}

Recall that $\set{\leftn^n \rightn^n \$ \mid n \geq 0}$ is a subset of the Dyck language and that the type $\edyck_0$ should accordingly be a subtype of $\dyck_0$; there ought therefore to exist a derivation of $\dsubtype{{\edyck_0}{} < {\dyck_0}{}}$.
However, direct application of coinduction is not enough to establish $\dsubtype{{\edyck[\name{end}]}{} < {\dyck[\dyck_0]}{}}$
because it produces an infinite stream of subgoals, $\dsubtype{{\edyck[\name{end}]}{} < {\dyck[\dyck_0]}{}} , \dsubtype{{\edyck[\name{r}[\name{end}]]}{} < {\dyck[\dyck[\dyck_0]]}{}} , \dotsc$, none of which is a direct instance of any preceding one. 
Instead, we generalize the coinductive hypothesis, proving that $\dsubtype{\kappa{} < {\kappa'}{}}$ implies $\dsubtype{{\edyck[\kappa]}{} < {\dyck[\kappa']}{}}$ for all named types $\kappa$ and $\kappa'$.
Then, because $\dsubtype{{\name{end}}{} < {\dyck_0}{}}$, derivations of $\dsubtype{{\edyck[\name{end}]}{} < {\dyck[\dyck_0]}{}}$ and hence of $\dsubtype{{\edyck_0}{} < {\dyck_0}{}}$ indeed exist.
\begin{equation*}
  \small
  \infer[\srule{\jrule{UNF-}}]{\dsubtype{{\edyck_0}{} < {\dyck_0}{}}}{
    \infer[\srule{\plus}]{\dsubtype{{\plus*{\leftn\colon \edyck[\name{end}] ,\, \$\colon \one}}{} < {\plus*{\leftn\colon \dyck[\dyck_0] ,\, \$\colon \one}}{}}}{
      \infer-{\dsubtype{{\edyck[\name{end}]}{} < {\dyck[\dyck_0]}{}}}{
        \infer[\mathrlap{\srule{\jrule{UNF-}}}]{\dsubtype{{\name{end}}{} < {\dyck_0}{}}}{
          \infer[\srule{\plus}]{\dsubtype{{\plus*{\$\colon \one}}{} < {\plus*{\leftn\colon \dyck[\dyck_0] ,\, \$\colon \one}}{}}}{
            \infer[\srule{\one}]{\dsubtype{\one{} < \one{}}}{}}}} & \hspace{-1em}
      \infer[\srule{\one}]{\dsubtype{\one{} < \one{}}}{}}}
  \enspace\text{\emph{and}}\hspace{-0.5em}
  \infer[\srule{\jrule{UNF-}}]{\dsubtype{{\edyck[\kappa]}{} < {\dyck[\kappa']}{}}}{
    \infer[\srule{\plus}]{\dsubtype{{\plus*{\leftn\colon \edyck[\name{r}[\kappa]] ,\, \rightn\colon \kappa}}{} < {\plus*{\leftn\colon \dyck[\dyck[\kappa']] ,\, \rightn\colon \kappa'}}{}}}{
      \infer-{\dsubtype{{\edyck[\name{r}[\kappa]]}{} < {\dyck[\dyck[\kappa']]}{}}}{
        \infer[\mathrlap{\srule{\jrule{UNF-}}}]{\dsubtype{{\name{r}[\kappa]}{} < {\dyck[\kappa']}{}}}{
          \infer[\srule{\plus}]{\dsubtype{{\plus*{\rightn\colon \kappa}}{} < {\plus*{\leftn\colon \dyck[\dyck[\kappa']] ,\, \rightn\colon \kappa'}}{}}}{
            \dsubtype{\kappa{} < {\kappa'}{}}}}} & \hspace{-1em}
      \dsubtype{\kappa{} < {\kappa'}{}}}{}}
\end{equation*}
In other words, the rules marked with dotted lines are admissible and can always be unfolded to the partial derivation on the right-hand side above.

This example demonstrates why circular derivations do not suffice for subtyping of recursively defined type constructors that employ non-regular recursion:
In the right-hand derivation, we cannot directly close a cycle from $\dsubtype{{\edyck[\name{r}[\kappa]]}{} < {\dyck[\dyck[\kappa']]}{}}$ back to $\dsubtype{{\edyck[\kappa]}{} < {\dyck[\kappa']}{}}$.
The former is indeed an instance of the latter, but the subtyping depends on having $\dsubtype{\kappa{} < {\kappa'}{}}$ and so we are required to show the instance $\dsubtype{{\name{r}[\kappa]}{} < {\dyck[\kappa']}{}}$: we need the expressive power of \emph{non}-regular infinite derivations.
Given that $\set{\leftn^n \rightn^n \$ \mid n \geq 0}$ and the Dyck language are context-free languages, perhaps it is not surprising that the regularity of circular derivations is insufficiently expressive to establish $\dsubtype{{\edyck_0}{} < {\dyck_0}{}}$.

\subsection{Undecidability of structural subtyping}\label{sec:structural:undecidability}

\textit{A priori}, it seems possible that structural subtyping in the presence of recursively defined type constructors might be decidable.
After all, structural \emph{equality} for coinductively interpreted types is decidable~\cite{Das+:TOPLAS22}, although intractable, by reducing from trace equivalence for deterministic first-order grammars~\cite{Jancar:JCSS21}.

But as it turns out, structural subtyping is undecidable in the presence of recursively defined type constructors.
As previously discussed in \cref{sec:introduction,sec:structural}, the types in this paper are interpreted coinductively.
To prove undecidability, we therefore need a reduction from a correspondingly coinductive property.
We choose to reduce from simulation of guarded Basic Process Algebra~(BPA) processes~\cite{Bergstra+Klop:IC84}, a property which is itself undecidable~\cite{Groote+Huettel:IC94}.

\subsubsection{Background on basic process algebra}

Guarded BPA processes are defined by a set of guarded equations.
For our purposes, a general definition of guardedness is unimportant; what is important is that any set of guarded BPA equations can be put into the following Greibach normal form~\cite{Baeten+:JACM93}:
\begin{equation*}
  X \defd \bpasum*[\ell \in L]{(\bpaseq{\ell}{p'_\ell})}
  \,
  \text{, where $L$ is nonempty and $p, q \Coloneqq \bpaemp \mid \bpaseq{Y}{p} \,.$}
\end{equation*}
As usual for process algebras, there is a labeled transition system to describe process behavior.
When restricted to processes in Greibach normal form, the labeled transition system consists of a single rule:
\begin{equation*}
  \infer{\bpaseq{X}{q} \trans[a] p'_a \odot q}{
    X \defd \bpasum*[\ell \in L]{(\bpaseq{\ell}{p'_\ell})} &
    (a \in L)}
  \qquad
  \raisebox{0.85\baselineskip}{\text{(no rule for $\bpaemp$)}}
  \qquad\text{\emph{where}}\qquad
  \begin{aligned}[b]
    \bpaemp \odot q &= q \\
    (\bpaseq{X}{p}) \odot q &= \bpaseq{X}{(p \odot q)}
  \end{aligned}
  \,.
\end{equation*}

The simulation (or ``is-simulated-by'') relation, $\simu$, for BPA processes is the largest relation such that whenever $p \simu q$ and $p \trans*[a] p'$ hold, there exists a process $q'$ for which $p' \simu q'$ and $q \trans*[a] q'$ hold.
In particular, $\bpaemp \simu q$ holds for all processes $q$ because $\bpaemp$ cannot make any transitions.

\subsubsection{Reduction of BPA simulation to structural subtyping}

For each guarded BPA equation in Greibach normal form, $X \defd \bpasum*[\ell \in L]{(\bpaseq{\ell}{p'_\ell})}$ where $L$ is nonempty, we define a corresponding type constructor $t_X[\alpha]$ that encodes the behavior of process variable $X$, parametrically in a type $\alpha$ that describes the behavior to follow that of $X$:
\begin{equation*}
  t_X[\alpha] \defd \with*[\ell \in L]{\ell\colon \bpa{p'_\ell}{\alpha}}
  \quad\text{\emph{where}}\quad\,
    \bpa{\bpaemp}{\tau} = \tau 
    \enspace\text{\emph{and}}\enspace
    \bpa*{\bpaseq{X}{p}}{\tau} = t_X[\bpa{p}{\tau}]
  \,.
\end{equation*}
($\bpa{p}{\tau}$ yields a type in normal form because $p$ is finite and $\tau$ is a named type.)
The ideas behind this encoding are twofold.
First, width subtyping of $\with$ ensures that a process $\bpaseq{Y}{q}$ can match any transition that $\bpaseq{X}{p}$ can take: width subtyping ensures that the type $t_Y[\beta]$ offers at least those alternatives that the type $t_X[\alpha]$ does.
Second, depth subtyping for $\with$ ensures that this simulation holds hereditarily for the processes to which $\bpaseq{X}{p}$ and $\bpaseq{Y}{q}$ transition.
We can prove the following.%
\begin{restatable}{theorem}{undecidablewith}\label{thm:undecidable-with-empty}
  Let $t \defd \with*{}$.
  Then $p \simu q$ if and only if $\dsubtype{{(\bpa{q}{t})}{} < {(\bpa{p}{t})}{}}$, for all processes $p$ and $q$.
\end{restatable}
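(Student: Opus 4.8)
This statement relates two coinductively defined relations --- simulation on BPA processes and structural subtyping on types --- so I would prove each direction by a coinductive argument, one producing an infinite subtyping derivation and one producing a simulation. The orientation is \emph{reversed}: $p \simu q$ corresponds to $\bpa{q}{t}$ being \emph{below} $\bpa{p}{t}$, matching the contravariance of the width rule for $\with$, since a process able to perform more transitions is encoded by a record with more fields, hence a \emph{smaller} record type. The two extreme cases line up as they must: $\bpaemp$ is simulated by every process, while $\with*{}$ --- the one-step unfolding of $\bpa{\bpaemp}{t} = t$ --- is a supertype of every record. As a preliminary I would show, by routine inductions on the list $p$, that substituting $\bpa{r}{t}$ for $\alpha$ in $\bpa{p}{\alpha}$ yields $\bpa{p}{(\bpa{r}{t})}$ and that $\bpa{p}{(\bpa{r}{t})} = \bpa{(p \odot r)}{t}$. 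The consequence used throughout is that a $\srule{\jrule{UNF-}}$ step on $\bpa{\bpaseq{X}{p_1}}{t} = t_X[\bpa{p_1}{t}]$ produces the structural type $\with*[\ell \in L]{\ell\colon \bpa{p_\ell}{t}}$, where $L$ is the nonempty set of labels on which $\bpaseq{X}{p_1}$ can transition and $\bpaseq{X}{p_1} \trans[\ell] p_\ell$.

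For the ``only if'' direction, assume $p \simu q$; I would establish $\dsubtype{{\bpa{q}{t}}{} < {\bpa{p}{t}}{}}$ by coinduction, using the pair of relations
\[
  \begin{aligned}
    R_n &= \set{(\bpa{q}{t}, \bpa{p}{t}) \mid p \simu q}, \\
    R_s &= \set{(\with*[a \in M]{a\colon \bpa{r_a}{t}},\, \with*[b \in N]{b\colon \bpa{s_b}{t}}) \mid N \subseteq M,\ \forall b \in N.\ s_b \simu r_b},
  \end{aligned}
\]
and checking that they are jointly backward closed under the rules of \cref{fig:structural-subtyping} read coinductively. Each pair in $R_n$ is the conclusion of a $\srule{\jrule{UNF-}}$ step whose premise lands in $R_s$: if $p = \bpaemp$ the premise has $\with*{}$ on the right, so it is the $N = \emptyset$ instance; if $p = \bpaseq{X}{p_1}$, then, since $p$ can move, $p \simu q$ forces $q = \bpaseq{Y}{q_1}$, and --- writing $L_p, L_q$ for their label sets and $p \trans[a] p_a$, $q \trans[a] q_a$ for their transitions --- the premise unfolds to $\dsubtype{{\with*[a \in L_q]{a\colon \bpa{q_a}{t}}}{} < {\with*[a \in L_p]{a\colon \bpa{p_a}{t}}}{}}$, which lies in $R_s$ because $L_p \subseteq L_q$ (every move of $p$ is matched by $q$) and, for $a \in L_p$, the simulation witness $q'$ with $q \trans[a] q'$ and $p_a \simu q'$ must equal $q_a$ since BPA transitions are \emph{determined by their label}, giving $p_a \simu q_a$. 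Dually, each pair in $R_s$ is the conclusion of a $\srule{\with}$ step: the side condition $N \subseteq M$ is built in, and each premise $\dsubtype{{\bpa{r_b}{t}}{} < {\bpa{s_b}{t}}{}}$ lies in $R_n$ since $s_b \simu r_b$. As subtyping is the greatest backward-closed relation, $R_n \cup R_s$ is contained in it, proving the implication.

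For the ``if'' direction I would show $S = \set{(p,q) \mid \dsubtype{{\bpa{q}{t}}{} < {\bpa{p}{t}}{}}}$ is a simulation. Suppose $(p,q) \in S$ and $p \trans[a] p'$; then $p = \bpaseq{X}{p_1}$ with nonempty label set $L_p \ni a$ and $p' = p_a$. Destructing the (possibly infinite) subtyping derivation by the greatest-fixed-point characterization, its root is $\srule{\jrule{UNF-}}$ because $\bpa{p}{t}$ is a type-constructor instance; unfolding the right side gives $\with*[b \in L_p]{b\colon \bpa{p_b}{t}}$. If $q = \bpaemp$, the left side unfolds to $\with*{}$ and the residual goal $\dsubtype{{\with*{}}{} < {\with*[b \in L_p]{b\colon \bpa{p_b}{t}}}{}}$ is underivable, since $\srule{\with}$ would require $L_p \subseteq \emptyset$; hence $q = \bpaseq{Y}{q_1}$, with some label set $L_q$, and the residual goal unfolds to $\dsubtype{{\with*[b \in L_q]{b\colon \bpa{q_b}{t}}}{} < {\with*[b \in L_p]{b\colon \bpa{p_b}{t}}}{}}$. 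Its root $\srule{\with}$ step supplies $L_p \subseteq L_q$ and the per-field premises; instantiating at $b = a$ gives $\dsubtype{{\bpa{q_a}{t}}{} < {\bpa{p_a}{t}}{}}$, i.e.\ $(p', q_a) \in S$, while $a \in L_q$ yields $q \trans[a] q_a$. So $S$ is a simulation and $S \subseteq {\simu}$.

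I expect the main obstacle to be bookkeeping rather than any single deep step: choosing $R_n$ and $R_s$ so that all the boundary instances ($M$ or $N$ empty, $p$ or $q$ equal to $\bpaemp$) are absorbed uniformly, keeping the two contravariantly-flipped orientations straight, and handling the alternation between the named-type rule $\srule{\jrule{UNF-}}$ and the structural rule $\srule{\with}$. The two genuinely load-bearing observations --- that transitions of BPA processes in Greibach normal form are label-deterministic (so the simulation witness is unique and recoverable from a $\srule{\with}$ premise), and that once both sides are $\with$-types the only applicable subtyping rule is $\srule{\with}$ --- are each easy once isolated.
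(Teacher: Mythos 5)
Your proposal is correct and follows essentially the same route as the paper's proof: both directions proceed by coinduction (you phrase it via explicit backward-closed relations, the paper by coinduction on the respective derivation), with the same case split on whether $p$ (resp.\ $q$) is $\bpaemp$, the same use of width subtyping for $\with$ to reverse the orientation, and the same implicit reliance on label-determinism of Greibach-normal-form transitions, which you usefully make explicit.
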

\begin{proof}[Proof sketch]
  We prove each direction separately, by coinduction on the respective conclusion.
\end{proof}
\noindent
The key properties of $t$ necessary for the proof are: that $\dsubtype{{(\bpa{q}{t})}{} < t{}}$, for all processes $q$; and
that $\dsubtype{t{} < {(\bpa{p}{t})}{}}$ implies $p = \bpaemp$, for all processes $p$.
Because simulation for BPA processes is undecidable~\cite{Groote+Huettel:IC94}, we therefore have the following \lcnamecref{cor:undecidable-with-empty}.
\begin{corollary}\label{cor:undecidable-with-empty}
  In the presence of record types with no alternatives and recursively defined type constructors, structural subtyping is undecidable.
\end{corollary}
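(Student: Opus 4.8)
The plan is to read \cref{thm:undecidable-with-empty} as providing a \emph{computable} many-one reduction from BPA simulation to structural subtyping whose image lies entirely within the fragment named in the statement; the \lcnamecref{cor:undecidable-with-empty} then follows at once from the undecidability of simulation for guarded BPA processes~\cite{Groote+Huettel:IC94}.

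First I would make the reduction explicit as a total computable function on instances. Its input is a finite set of guarded BPA equations, which by~\cite{Baeten+:JACM93} we may assume is already in Greibach normal form, $X \defd \bpasum*[\ell \in L_X]{(\bpaseq{\ell}{p'_\ell})}$ with every $L_X$ nonempty. From this data, compute the finite signature $\sig$ that collects one definition $t_X[\alpha] \defd \with*[\ell \in L_X]{\ell\colon \bpa{p'_\ell}{\alpha}}$ per process variable $X$, together with the one extra definition $t \defd \with*{}$; every right-hand side is a structural type already in normal form, so this step is clearly effective. Given two processes $p, q$ (finite sequences of process variables, as generated by the grammar $p, q \Coloneqq \bpaemp \mid \bpaseq{Y}{p}$), further compute by the obvious recursion the named types $\bpa{q}{t}$ and $\bpa{p}{t}$ over $\sig$. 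The reduction then sends the instance ``does $p \simu q$ hold?'' to the instance ``does $\dsubtype{{(\bpa{q}{t})}{} < {(\bpa{p}{t})}{}}$ hold over $\sig$?'', and \cref{thm:undecidable-with-empty} says precisely that these two instances have the same answer.

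Second I would verify that the target of the reduction stays within the advertised fragment: the sole structural type former appearing anywhere in $\sig$ is the record type $\with$, and outside the constructors $t_X$ the only record type used is $\with*{}$, the record type with no alternatives. Consequently a decision procedure for structural subtyping --- even one restricted to signatures built solely from record types (the empty one included) and recursively defined type constructors --- would decide BPA simulation, contradicting~\cite{Groote+Huettel:IC94}. Hence structural subtyping is undecidable, as claimed.

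I do not expect a real obstacle; the content is entirely bookkeeping, and the two points deserving care are (i) \emph{totality} of the reduction --- every finite guarded BPA system does admit a Greibach normal form and every $L_X$ is nonempty, so each $t_X[\alpha] \defd \with*[\ell \in L_X]{\ell\colon \bpa{p'_\ell}{\alpha}}$ is a legitimate definition --- and (ii) fixing the statement of the decision problem being refuted, namely: given a finite signature $\sig$ and named types $\tau, \sigma$, decide whether $\dsubtype{\tau{} < \sigma{}}$, which is exactly what the reduction reduces to. One may additionally remark that running the same argument through variant encodings yields the further minimal undecidable fragments alluded to in \cref{sec:introduction}, but for this \lcnamecref{cor:undecidable-with-empty} only the $\with*{}$ encoding of \cref{thm:undecidable-with-empty} is needed.
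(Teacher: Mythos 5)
Your proposal is correct and takes essentially the same route as the paper: the \lcnamecref{cor:undecidable-with-empty} is obtained exactly by reading \cref{thm:undecidable-with-empty} as a computable many-one reduction from guarded BPA simulation (undecidable by \citet{Groote+Huettel:IC94}) into the fragment built from record types (including $\with*{}$) and recursively defined type constructors. The paper leaves the effectiveness and totality bookkeeping implicit, which you merely spell out.
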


\noindent
Although they make for an arguably cleaner proof, record types with no alternatives are not at all essential.
Even if all record types must have at least one alternative, structural subtyping is still undecidable.
The encoding can be revised to include an endmarker, $\$$, as an additional alternative for each $t_X$.
Let $t_0$ be any closed type, such as $t_0 \defd \with*{\$\colon t_0}$ or $t_0 \defd \one$ (among others), and define
\begin{equation*}
  t_X[\alpha] \defd \with*[\ell \in L]{\ell\colon \bpa{p'_\ell}{\alpha}} \with \lbrace\$\colon t_0\rbrace
  \,,
\end{equation*}
where $\bpa{p}{\tau}$ is defined as above.
With the revised encoding, we can prove the following \lcnamecref{thm:undecidable-with-eow}.
\begin{theorem}\label{thm:undecidable-with-eow}
  Let $t \defd \with*{\$\colon t_0}$.
  Then $p \simu q$ if and only if $\dsubtype{{(\bpa{q}{t})}{} < {(\bpa{p}{t})}{}}$, for all processes $p$ and $q$.
\end{theorem}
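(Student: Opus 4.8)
The plan is to transcribe the proof of \cref{thm:undecidable-with-empty} almost line for line, after checking that the added endmarker $\$$ is inert.  The two properties of $t$ that powered the earlier argument were that $\dsubtype{{\bpa{q}{t}}{} < t{}}$ holds for every process $q$, and that $\dsubtype{t{} < {\bpa{p}{t}}{}}$ holds only when $p = \bpaemp$.  Both survive for the revised $t \defd \with*{\$\colon t_0}$ under the revised encoding of $t_X$.  For the first: every $\bpa{q}{t}$ unfolds to a record that offers the $\$$ alternative with component $t_0$, so $\dsubtype{{\bpa{q}{t}}{} < t{}}$ reduces to $\dsubtype{{t_0}{} < {t_0}{}}$, that is, to reflexivity of structural subtyping, which is itself a standard fact provable by coinduction on the diagonal relation.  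For the second: if $p = \bpaseq{X}{p_0}$, then $\bpa{p}{t}$ unfolds to a record whose alternatives form the set $L_X \cup \set{\$}$, which strictly contains $\set{\$}$ because $L_X$ is nonempty, and so the width side-condition of \srule{\with} makes $\dsubtype{t{} < {\bpa{p}{t}}{}}$ impossible, forcing $p = \bpaemp$.  The key point is that the blanket presence of $\$$ on every $t_X$ does \emph{not} let the lone $\$$-alternative of $t$ absorb a nonempty $L_X$: width subtyping of $\with$ requires the \emph{supertype} to offer the \emph{smaller} alternative set, not the larger.

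As in the earlier proof, we establish the two directions separately, each by coinduction, and we use the identity $\bpa*{p \odot r}{\tau} = \bpa{p}{\bpa{r}{\tau}}$ (a routine induction on $p$, relating threading to composition) so that process transitions align with the depth premises of \srule{\with}.  For the forward direction, assuming $p \simu q$, the candidate relation is $\set{({\bpa{q}{t}},{\bpa{p}{t}}) \mid p \simu q}$ together with the diagonal, which we show is backward closed under the structural subtyping rules.  A pair $({\bpa{q}{t}},{\bpa{p}{t}})$ with $p \simu q$ is decomposed by \srule{\jrule{UNF-}} and then \srule{\with}: if $q = \bpaemp$ then $p = \bpaemp$ as well (since $\bpaemp$ has no transitions) and the obligation is $\dsubtype{t{} < t{}}$; if $q = \bpaseq{Y}{q_0}$ but $p = \bpaemp$ the obligation is the first key property above; and if $q = \bpaseq{Y}{q_0}$ and $p = \bpaseq{X}{p_0}$ the unfolded records have alternative sets $L_X \cup \set{\$}$ and $L_Y \cup \set{\$}$, with $L_X \subseteq L_Y$ supplied by simulation (each label $a \in L_X$ labels a transition of $\bpaseq{X}{p_0}$ that $\bpaseq{Y}{q_0}$ must match).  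The depth premise at such a label $a$ then rewrites, by the homomorphism above, to a pair of the form $({\bpa*{p'_{Y,a} \odot q_0}{t}},{\bpa*{p'_{X,a} \odot p_0}{t}})$, which lies in the candidate relation precisely because the matched transition gives $p'_{X,a} \odot p_0 \simu p'_{Y,a} \odot q_0$; and the depth premise at $\$$ is $\dsubtype{{t_0}{} < {t_0}{}}$, handled by the diagonal.

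For the backward direction, assuming $\dsubtype{{\bpa{q}{t}}{} < {\bpa{p}{t}}{}}$, we show that $\set{(p,q) \mid \dsubtype{{\bpa{q}{t}}{} < {\bpa{p}{t}}{}}}$ is a simulation.  Given a transition $p \trans[a] p'$ we must have $p = \bpaseq{X}{p_0}$ with $a \in L_X$ and $p' = p'_{X,a} \odot p_0$; inverting \srule{\jrule{UNF-}} and then \srule{\with} on the subtyping hypothesis forces $q \neq \bpaemp$ (otherwise $L_X \cup \set{\$}$ would have to be contained in $\set{\$}$, contradicting $L_X$ nonempty), so $q = \bpaseq{Y}{q_0}$ with $a \in L_Y$; reading off the depth premise at $a$ and rewriting by the homomorphism yields $\dsubtype{{\bpa*{p'_{Y,a} \odot q_0}{t}}{} < {\bpa*{p'_{X,a} \odot p_0}{t}}{}}$, so $q \trans[a] p'_{Y,a} \odot q_0$ is the required matching transition, with the resulting pair back in the candidate simulation.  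I expect this inversion step to be the only place that needs genuine care --- specifically, confirming that the omnipresence of the $\$$ alternative never weakens the width discipline that makes the encoding faithful.  Once that is checked, the argument is a direct transcription of the proof of \cref{thm:undecidable-with-empty}, and the undecidability of BPA simulation~\cite{Groote+Huettel:IC94} then yields undecidability of structural subtyping even when every record type is required to have at least one alternative.
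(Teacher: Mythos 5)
Your proposal is correct and matches the paper's intended argument: the paper proves \cref{thm:undecidable-with-eow} by transcribing the coinductive proof of \cref{thm:undecidable-with-empty}, relying on exactly the two key properties of $t$ you verify (that $\dsubtype{{(\bpa{q}{t})}{} < t{}}$ for all $q$, and that $\dsubtype{t{} < {(\bpa{p}{t})}{}}$ forces $p = \bpaemp$), with the $\$$-alternative handled by reflexivity at $t_0$. Your width-subtyping check that the ubiquitous $\$$ alternative does not break either property is precisely the point the paper's revised encoding is designed around, so no gap remains.
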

\begin{corollary}\label{cor:undecidable-with-eow}
  In the presence of record types and recursively defined type constructors, structural subtyping is undecidable.
\end{corollary}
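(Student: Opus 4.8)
The plan is to derive \cref{cor:undecidable-with-eow} immediately from \cref{thm:undecidable-with-eow}. Given guarded BPA processes $p$ and $q$ in Greibach normal form, the types $\bpa{p}{t}$ and $\bpa{q}{t}$ are effectively computable, and --- choosing $t_0 \defd \with*{\$\colon t_0}$ --- both use only record types (each now having at least the alternative $\$$) together with recursively defined type constructors. Since \cref{thm:undecidable-with-eow} gives $p \simu q \iff \dsubtype{{(\bpa{q}{t})}{} < {(\bpa{p}{t})}{}}$, any decision procedure for structural subtyping restricted to this fragment would decide simulation of guarded BPA processes, which is undecidable~\cite{Groote+Huettel:IC94}. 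So the real work is \cref{thm:undecidable-with-eow}, which I would prove as follows.

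I would reuse the proof of \cref{thm:undecidable-with-empty} essentially verbatim. As the remark following that theorem observes, its two coinductive arguments rely only on two properties of the base type: (a)~$\dsubtype{{(\bpa{q}{t})}{} < t{}}$ for every process $q$, and (b)~$\dsubtype{t{} < {(\bpa{p}{t})}{}}$ implies $p = \bpaemp$. So the only genuinely new step is to re-establish (a) and (b) for the revised encoding $t \defd \with*{\$\colon t_0}$ and $t_X[\alpha] \defd \with*[\ell \in L_X]{\ell\colon \bpa{p'_\ell}{\alpha}} \with \lbrace\$\colon t_0\rbrace$ for $X \defd \bpasum*[\ell \in L_X]{(\bpaseq{\ell}{p'_\ell})}$. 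First I would adopt the harmless convention that $\$$ is distinct from every BPA action label, so $\$ \notin L_X$ for all $X$; this keeps the endmarker inert. For (a): if $q = \bpaemp$ then $\bpa{q}{t} = t$ and reflexivity of structural subtyping applies; if $q = \bpaseq{X}{q_0}$ then $\bpa{q}{t} = t_X[\bpa{q_0}{t}]$ unfolds by \srule{\jrule{UNF-}} to a $\with$-type over $L_X \cup \set{\$}$, $t$ unfolds to $\with*{\$\colon t_0}$, and \srule{\with} applies (since $\set{\$} \subseteq L_X \cup \set{\$}$), leaving only the premise $\dsubtype{{t_0}{} < {t_0}{}}$, again by reflexivity. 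For (b): if $p = \bpaseq{X}{p_0}$, then $\bpa{p}{t}$ unfolds to a $\with$-type over $L_X \cup \set{\$}$ with $L_X$ nonempty and $\$ \notin L_X$, so the side condition $L_X \cup \set{\$} \subseteq \set{\$}$ of \srule{\with} fails and $\dsubtype{t{} < {(\bpa{p}{t})}{}}$ has no derivation; hence $p = \bpaemp$.

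With (a) and (b) in hand I would replay the two coinductive arguments from \cref{thm:undecidable-with-empty}. For the forward direction, assuming $p \simu q$, one coinducts on $\dsubtype{{(\bpa{q}{t})}{} < {(\bpa{p}{t})}{}}$: the base case $p = \bpaemp$ is property (a); otherwise $p = \bpaseq{X}{p_0}$, the fact that $p$ can move forces $q = \bpaseq{Y}{q_0}$, and after unfolding both encodings with \srule{\jrule{UNF-}}, the rule \srule{\with} applies because its width condition $L_X \cup \set{\$} \subseteq L_Y \cup \set{\$}$ reduces (using $\$ \notin L_X$) to $L_X \subseteq L_Y$, which $p \simu q$ supplies; its $\$$-premise is the trivial $\dsubtype{{t_0}{} < {t_0}{}}$, and --- after rewriting field types via the compositionality identity $\bpa{p_1}{\bpa{p_2}{\tau}} = \bpa*{p_1 \odot p_2}{\tau}$ (routine, already used for \cref{thm:undecidable-with-empty}) and the transition rule --- each action premise is an instance $\dsubtype{{(\bpa{q'}{t})}{} < {(\bpa{p'}{t})}{}}$ with $p'$ and $q'$ the matching successors of $p$ and $q$ and $p' \simu q'$, so the coinductive hypothesis applies, guarded by \srule{\jrule{UNF-}} and \srule{\with}. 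For the backward direction, assuming $\dsubtype{{(\bpa{q}{t})}{} < {(\bpa{p}{t})}{}}$, one shows $\set{(p,q) \mid \dsubtype{{(\bpa{q}{t})}{} < {(\bpa{p}{t})}{}}}$ is a simulation: if $p \trans[a] p'$ then $p = \bpaseq{X}{p_0}$ with $a \in L_X$, the right-hand encoding unfolds to a $\with$-type over $L_X \cup \set{\$} \ni a$, and the only applicable rules --- \srule{\jrule{UNF-}} then \srule{\with} --- force $q \neq \bpaemp$ (else the width condition fails, since $a \neq \$$) and $L_X \cup \set{\$} \subseteq L_Y \cup \set{\$}$, whence $a \in L_Y$ yields a matching $q \trans[a] q'$ and the $a$-premise, via the same identity, is exactly $\dsubtype{{(\bpa{q'}{t})}{} < {(\bpa{p'}{t})}{}}$.

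I do not expect a genuine obstacle: the whole point of the revised encoding is that the added field $\$$ is inert --- its label is always present on the supertype side, so it never blocks \srule{\with} when decomposing from the subtype side, and its depth premise is always the trivial $\dsubtype{{t_0}{} < {t_0}{}}$ --- while, with $\$$ chosen fresh, the encoding of transitions is unchanged from \cref{thm:undecidable-with-empty}. The only point needing a word of care is reflexivity of structural subtyping, used for $\dsubtype{{t_0}{} < {t_0}{}}$ and for property (a) at $q = \bpaemp$: since subtyping derivations are coinductive, reflexivity is witnessed by the evident \srule{\jrule{UNF-}}-guarded infinite derivation (circular when $t_0 \defd \with*{\$\colon t_0}$, finite when $t_0 \defd \one$), not by a finite one in general.
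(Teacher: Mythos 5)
Your proposal is correct and follows essentially the same route as the paper: \cref{cor:undecidable-with-eow} is obtained by reduction from BPA simulation via \cref{thm:undecidable-with-eow}, whose proof the paper leaves as a straightforward adaptation of \cref{thm:undecidable-with-empty} using exactly the two key properties of $t$ you re-establish (every encoded process is a subtype of $t$, and $t$ is a subtype of $\bpa{p}{t}$ only for $p = \bpaemp$), with the endmarker $\$$ inert as you argue. Your extra care about $\$$ being fresh and about coinductive reflexivity for $t_0$ fills in details the paper only gestures at, but does not change the argument.
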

\noindent
Furthermore, because we assign a coinductive interpretation to all types, virtually the same \lcnamecrefs{thm:undecidable-with-empty} hold for variant record types, with $\plus$ substituted for $\with$ in the definitions of $t_X[\alpha]$ -- only the subtyping direction changes to ``$p \simu q$ if and only if $\dsubtype{{(\bpa{p}{t})}{} < {(\bpa{q}{t})}{}}$.''
That is, structural subtyping remains undecidable in the presence of variant record types and recursively defined type constructors, \emph{even if\/} there are no record types at all.%
\footnote{In a mixed inductive/coinductive setting, where variant record types would be interpreted inductively, we conjecture that structural subtyping would remain undecidable even in the purely inductive fragment (\textit{i.e.}, without record types), and that this could be proved by reducing from context-free/BPA language inclusion~\cite{Friedman:TCS76,Groote+Huettel:IC94}, not BPA simulation.
}

\section{Parametric subtyping for parametric polymorphism}\label{sec:parametric}

In this \lcnamecref{sec:parametric}, we identify a fragment of structural subtyping for parametric polymorphism that has a relatively simple declarative characterization.
(\cref{sec:decide-poly} will show that this fragment is also decidable.)
We call this fragment \emph{parametric subtyping}, for its basis is a notion of parametricity.

Recall the context-free languages example from \cref{sec:structural:examples} in which we proved that the type $\edyck_0$, corresponding to $\set{\leftn^n\rightn^n \$ \mid n \geq 0}$, is a subtype of the Dyck language type $\dyck_0$.
We could not use direct coinduction to prove the existence of a derivation of $\dsubtype{{\edyck_0}{} < {\dyck_0}{}}$ because that led to an infinite stream of subgoals, 
$\dsubtype{{\edyck[\name{end}]}{} < {\dyck[\dyck_0]}{}} ,
\dsubtype{{\edyck[\name{r}[\name{end}]]}{} < {\dyck[\dyck[\dyck_0]]}{}} ,
\dsubtype{{\edyck[\name{r}[\name{r}[\name{end}]]]}{} < {\dyck[\dyck[\dyck[\dyck_0]]]}{}} ,
\dotsc$,
none of which is an instance of any preceding one.
We somehow need to quotient this space into finitely many subproblems, each of which is decidable.

The key insight behind this quotienting comes in revisiting the coinductive generalization that we used to prove $\dsubtype{{\edyck_0}{} < {\dyck_0}{}}$: in \cref{sec:structural:examples}, we proved that $\dsubtype{\kappa{} < {\kappa'}{}}$ implies $\dsubtype{{\edyck[\kappa]}{} < {\dyck[\kappa']}{}}$ for all named types $\kappa$ and $\kappa'$.
This could also be viewed as proving that the following inference rule is admissible from the structural subtyping rules of \cref{fig:structural-subtyping}.
\begin{equation*}
  \infer-{\dsubtype{{\edyck[\kappa]}{} < {\dyck[\kappa']}{}}}{
    \dsubtype{\kappa{} < {\kappa'}{}}}
\end{equation*}
This rule looks very much like the kind of rules around which rigid subtyping is based.
However, there is a key difference: rigid subtyping requires such parametric rules to use the same type constructor on both sides of the conclusion.

Most importantly for our purposes, this admissible rule is parametric, in the sense that the type constructors $\edyck[-]$ and $\dyck[-]$ map related arguments, $\dsubtype{\kappa{} < {\kappa'}{}}$, to related results, $\dsubtype{{\edyck[\kappa]}{} < {\dyck[\kappa']}{}}$.
This echoes \citeauthor{Reynolds:IFIP83}'s characterization of parametric functions as those that map related arguments to related results~\shortcite[Sec.~3]{Reynolds:IFIP83}.
Specifically, the right-hand side of his clause,
\begin{equation*}
  (f_1, f_2) \in r \imp r' \text{ iff } (f_1\,x_1, f_2\,x_2) \in r' \text{ for all } (x_1, x_2) \in r
  \,,
\end{equation*}
parallels the parametric subtyping rule for $\edyck[-]$ and $\dyck[-]$ above.%
\footnote{Moreover, the observation that parametric subtyping strictly generalizes rigid subtyping (which relates types only if they have the same outermost constructor) echoes \citeauthor{Reynolds:IFIP83}'s Abstraction Theorem, which states that his relational semantics relates the same expression in different environments.}

The importance of parametricity in the coinductive generalization used in this example suggests
that we ought to consider a notion of subtyping that uses parametric rules, \textit{i.e.}, rules of the form
\begin{equation*}
  \infer{\dsubtype{{t[\alphas]}{} < {u[\betas]}{}}}{
    \dsubtype{{\alpha_{i_1}}{} < {\beta_{j_1}}{}} 
    \enspace\!\dotsb\enspace
    \dsubtype{{\alpha_{i_m}}{} < {\beta_{j_m}}{}} &
    \dsubtype{{\beta_{j_{m+1}}}{} < {\alpha_{i_{m+1}}}{}}
    \enspace\!\dotsb\enspace
    \dsubtype{{\beta_{j_{m+n}}}{} < {\alpha_{i_{m+n}}}{}}}
  \enspace\text{{but not rules like}}\enspace
  \infer{\dsubtype{{t[\alphas]}{} < {u[\betas]}{}}}{
    \dsubtype{\alpha{} < \one{}} &
    \dsubtype{{t'[\alpha]}{} < \beta{}}}
  ,
\end{equation*}
as a candidate for being a decidable fragment of structural subtyping with a relatively simple declarative characterization.

\subsection{Declarative characterization of parametric subtyping}\label{sec:parametric:declarative}

The requirement that the candidate fragment use only parametric rules could already serve as a relatively simple declarative characterization.
However, to develop a decision procedure and prove its correctness, it is very useful to devise an equivalent declarative characterization that is more closely aligned with the presentation of structural subtyping.
Before doing so, it is helpful to see why structural subtyping, as defined in \cref{fig:structural-subtyping}, violates parametricity.

Consider the type definition $\name{snat}[\kappa] \defd \plus*{\zero\colon \kappa ,\, \suc\colon \name{snat}[\kappa]}$, which generalizes the type $\nat$ via the structural subtyping $\dsubtype{\nat{} < {\name{snat}[\one]}{}}$.
(The name $\name{snat}$, for ``serialized $\nat$'', alludes to serialized data structures, as discussed in \cref{sec:examples:serialized}.)
However, the admissible rule for $\nat$ and $\name{snat}[\kappa]$ would be the \emph{non-parametric} rule ``$\dsubtype{\nat{} < {\name{snat}[\kappa]}{}}$ if $\dsubtype{\one{} < \kappa{}}$.''
The $\srule{\jrule{UNF-}}$ rule of structural subtyping cannot detect non-parametric judgments, such as $\dsubtype{\one{} < \kappa{}}$ here, because unfolding eagerly applies substitutions and free type parameters do not appear:
by the time that structural subtyping reaches this non-parametric judgment, it will be $\dsubtype{\one{} < {[\one/\kappa]\kappa = \one}{}}$, with the non-parametricity no longer apparent in the judgment.

Therefore, instead of eagerly applying substitutions when unfolding, we need to postpone the substitutions, applying them only after determining that they do not conceal any non-parametricity.
The idea of postponing substitutions in this way is inspired by the \citeauthor{Girard:PhD72}--\citeauthor{Reynolds:IFIP83} logical relation for parametricity~\cite{Girard:PhD72,Reynolds:IFIP83}.
The judgments $\dsubtype{\tau{} < \sigma{}}$ and $\dsubtype{A{} < B{}}$ are revised to postpone substitutions by pushing them onto stacks when unfolding type constructor instantiations.
Substitution stacks are given by the grammar
\begin{alignat*}{2}
  \text{\emph{Substitution stacks}} &\quad&
    \subs , \subs* &\Coloneqq \sube* \mid \theta ; \subs
\end{alignat*}
and we thus arrive at the judgments $\dsubtype{\tau{\subs} < \sigma{\subs*}}$ and $\dsubtype{A{\subs} < B{\subs*}}$ for parametric subtyping.
As with structural subtyping, a parametric subtyping judgment holds if there exists a (potentially infinite) derivation of that judgment using the rules found in \cref{fig:parametric-subtyping}.
These declarative rules are again interpreted coinductively and are most clearly read bottom-up, from conclusion to premises.
\begin{figure}
  \begin{gather*}
    \infer[\prule{\jrule{INST-}}]{\dsubtype{{t[\theta]}{\subs} < {u[\phi]}{\subs*}}}{
      t[\alphas] \defd A & u[\betas*] \defd B &
      \dsubtype{A{\theta ; \subs} < B{\phi ; \subs*}}}
    \\[\figinferskip]
    \infer[\prule{\jrule{PARAM-}}]{\dsubtype{\alpha{\theta ; \subs} < \beta{\phi ; \subs*}}}{
      \dsubtype{{\theta(\alpha)}{\subs} < {\phi(\beta)}{\subs*}}}
    \qquad
    \infer[\prule{\jrule{VAR-}}]{\dsubtype{x{\subs} < x{\subs*}}}{}
    \\[\figinferskip]
    \text{(no rules for $\dsubtype{\alpha{\subs} < \tau{\subs*}}$ and $\dsubtype{\tau{\subs} < \beta{\subs*}}$ when $\tau$ is not a parameter)}
    \\[\figinferskip]
    \text{(no rules for $\dsubtype{x{\subs} < \tau{\subs*}}$ and $\dsubtype{\tau{\subs} < x{\subs*}}$ when $\tau \neq x$)}
    \\[\figinferskip]
    \infer[\prule{\tensor}]{\dsubtype{{\tau_1 \tensor \tau_2}{\subs} < {\sigma_1 \tensor \sigma_2}{\subs*}}}{
      \dsubtype{{\tau_1}{\subs} < {\sigma_1}{\subs*}} &
      \dsubtype{{\tau_2}{\subs} < {\sigma_2}{\subs*}}}
    \qquad
    \infer[\prule{\one}]{\dsubtype{\one{\subs} < \one{\subs*}}}{}
    \\[\figinferskip]
    \infer[\prule{\plus}]{\dsubtype{{\plus*[\ell \in L]{\ell\colon \tau_\ell}}{\subs} < {\plus*[k \in K]{k\colon \sigma_k}}{\subs*}}}{
      (L \subseteq K) &
      \forall \ell \in L\colon 
        \dsubtype{{\tau_\ell}{\subs} < {\sigma_\ell}{\subs*}}}
    \\[\figinferskip]
    \infer[\prule{\imp}]{\dsubtype{{\tau_1 \imp \tau_2}{\subs} < {\sigma_1 \imp \sigma_2}{\subs*}}}{
      \dsubtype{{\sigma_1}{\subs*} < {\tau_1}{\subs}} &
      \dsubtype{{\tau_2}{\subs} < {\sigma_2}{\subs*}}}
    \qquad
    \infer[\prule{\with}]{\dsubtype{{\with*[\ell \in L]{\ell\colon \tau_\ell}}{\subs} < {\with*[k \in K]{k\colon \sigma_k}}{\subs*}}}{
      (K \subseteq L) &
      \forall k \in K\colon 
        \dsubtype{{\tau_k}{\subs} < {\sigma_k}{\subs*}}}
    \\[\figinferskip]
    \infer[\prule{\forall}]{\dsubtype{{\forall x.\tau}{\subs} < {\forall y.\sigma}{\subs*}}}{
      (\text{$z$ fresh}) &
      \dsubtype{{[z/x]\tau}{\subs} < {[z/y]\sigma}{\subs*}}}
    \qquad
    \infer[\prule{\exists}]{\dsubtype{{\exists x.\tau}{\subs} < {\exists y.\sigma}{\subs*}}}{
      (\text{$z$ fresh}) &
      \dsubtype{{[z/x]\tau}{\subs} < {[z/y]\sigma}{\subs*}}}
  \end{gather*}
  \caption{Parametric subtyping rules ({\normalfont\scshape p} for `parametric').  These rules are interpreted coinductively.}\label{fig:parametric-subtyping}
\end{figure}

To better understand these rules, it can be helpful to imagine constructing a (potentially infinite) derivation of $\dsubtype{{t[\theta]}{\subs} < {u[\phi]}{\subs*}}$, where $t[\alphas] \defd A$ and $u[\betas*] \defd B$.
That would proceed as follows.
\begin{enumerate}
\item\label{item:inst-p}
  This judgment can only be derived by the $\prule{\jrule{INST-}}$ rule, which is parametric subtyping's answer to structural subtyping's $\srule{\jrule{UNF-}}$ rule.
  It unfolds $t[\theta]$ and $u[\phi]$, but it does not apply the substitutions $\theta$ and $\phi$ eagerly, instead postponing them by pushing them onto their respective stacks, $\subs$ and $\subs*$.
  The unfoldings, \textit{i.e.}, structural types $A$ and $B$, are then compared under the extended stacks, $(\theta ; \subs)$ and $(\phi ; \subs*)$, respectively.
  Notice that $A$ and $B$ will, in general, contain free occurrences of the respective parameters $\alphas$ and $\betas*$.

\item
  Next, these structural types $A$ and $B$ are decomposed according to parametric subtyping's rules for structural types, such as $\prule{\tensor}$ and $\prule{\imp}$.
  These rules are virtually the same as structural subtyping's rules for structural types, with the only difference being that substitution stacks are threaded through, unchanged, from conclusion to premises.
  After decomposing $A$ and $B$, there are several parametric subtyping subgoals of the form $\dsubtype{\tau{\theta ; \subs} < \sigma{\phi ; \subs*}}$ (or, in the case of the $\prule{\imp}$ rule's first premise, of the form $\dsubtype{\sigma{\phi ; \subs*} < \tau{\theta ; \subs}}$).

\item
  Depending on the structure of the named types $\tau$ and $\sigma$, there are several possibilities for each such judgment:
\begin{itemize}
\item If both $\tau$ and $\sigma$ are type constructor instantiations, then the judgment can only be derived by the $\prule{\jrule{INST-}}$ rule, returning us to step~(\ref{item:inst-p}).

\item If $\tau$ and $\sigma$ are type parameters $\alpha$ and $\beta$, then there is no violation of parametricity here, and the judgment can be derived by the $\prule{\jrule{PARAM-}}$ rule.
  The substitutions $\theta$ and $\phi$ are popped from their respective stacks and finally applied; the resulting subgoal is of the form $\dsubtype{{\theta(\alpha)}{\subs} < {\phi(\beta)}{\subs*}}$.
In types in normal form, substitutions map parameters to \emph{named} types only (recall \cref{sec:structural:subs}), so one of these three cases will again apply.

\item If either $\tau$ or $\sigma$ is a type parameter and the other one is not, this judgment violates parametricity.
  Accordingly, there is no rule that can derive this judgment, and therefore $\dsubtype{{t[\theta]}{\subs} < {u[\phi]}{\subs*}}$ does \emph{not} hold.
\end{itemize}
\end{enumerate}

The notion of parametric subtyping given in \cref{fig:parametric-subtyping} is sound with respect to structural subtyping as defined in \cref{fig:structural-subtyping}.
The substitutions postponed in stacks $\subs$ and $\subs*$ can instead be composed and applied eagerly, transforming instances of the $\prule{\jrule{INST-}}$ rule into instances of the $\srule{\jrule{UNF-}}$ rule and eliminating occurrences of the $\prule{\jrule{PARAM-}}$ rule.
\begin{restatable}[Soundness of parametric subtyping]{theorem}{parasound}\leavevmode
  If $\dsubtype{\tau{\subs} < \sigma{\subs*}}$, then $\dsubtype{{\applysubs{\subs}{\tau}}{} < {\applysubs{\subs*}{\sigma}}{}}$.
  Likewise, if $\dsubtype{A{\subs} < B{\subs*}}$, then $\dsubtype{{\applysubs{\subs}{A}}{} < {\applysubs{\subs*}{B}}{}}$.
\end{restatable}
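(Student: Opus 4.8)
The plan is to prove both statements simultaneously by coinduction. The key observation is that the stack of postponed substitutions $\subs = \theta_1 ; \dotsb ; \theta_n ; \sube*$ can be collapsed into a single composed substitution, which I will write $\hat{\subs}$, defined so that $\applysubs{\subs}{\tau}$ is the result of applying $\theta_n$, then $\theta_{n-1}$, \dots, then $\theta_1$ to $\tau$ (or, equivalently, applying the composite). The statement to establish coinductively is: for all $\tau, \sigma, \subs, \subs*$, if $\dsubtype{\tau{\subs} < \sigma{\subs*}}$ then $\dsubtype{{\applysubs{\subs}{\tau}}{} < {\applysubs{\subs*}{\sigma}}{}}$, and the analogous statement for structural types. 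Since structural subtyping derivations are defined coinductively, it suffices to show that the set of pairs $\{(\applysubs{\subs}{\tau}, \applysubs{\subs*}{\sigma}) \mid \dsubtype{\tau{\subs} < \sigma{\subs*}}\}$ (together with the structural-type analogue) is \emph{backward closed} under the rules of \cref{fig:structural-subtyping}: that is, for any such pair, there is a structural rule whose conclusion is that pair and whose premises are again in the set.

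The proof proceeds by case analysis on the last rule of the given parametric derivation of $\dsubtype{\tau{\subs} < \sigma{\subs*}}$ (or $\dsubtype{A{\subs} < B{\subs*}}$). The cases split into three groups. \textbf{(i)} The \prule{\jrule{INST-}} rule: here $\tau = t[\theta]$, $\sigma = u[\phi]$, and the premise is $\dsubtype{A{\theta ; \subs} < B{\phi ; \subs*}}$ where $t[\alphas] \defd A$ and $u[\betas*] \defd B$. I would invoke the \srule{\jrule{UNF-}} rule with conclusion $\dsubtype{{\applysubs{\subs}{t[\theta]}}{} < {\applysubs{\subs*}{u[\phi]}}{}}$; its premise is $\dsubtype{{\applysubs{\subs}{\theta(A)}}{} < {\applysubs{\subs*}{\phi(B)}}{}}$, and the crucial step is the substitution-composition lemma $\applysubs{\subs}{\theta(A)} = \applysubs{\theta ; \subs}{A}$ (and likewise on the right), after which the premise is exactly of the required form. \textbf{(ii)} The \prule{\jrule{PARAM-}} rule: here $\tau = \alpha$, $\sigma = \beta$, $\subs = \theta ; \subs'$, $\subs* = \phi ; \subs'\!\!*$, and the premise is $\dsubtype{{\theta(\alpha)}{\subs'} < {\phi(\beta)}{\subs'\!\!*}}$. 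But $\applysubs{\theta ; \subs'}{\alpha} = \applysubs{\subs'}{\theta(\alpha)}$ and likewise on the right, so the parametric conclusion and the parametric premise collapse to the \emph{same} structural subtyping pair — the \prule{\jrule{PARAM-}} step simply disappears, and we are already done (no structural rule is applied, we just re-use membership in the set). \textbf{(iii)} The structural-type rules (\prule{\tensor}, \prule{\one}, \prule{\plus}, \prule{\imp}, \prule{\with}, \prule{\forall}, \prule{\exists}) and \prule{\jrule{VAR-}}: each threads the stacks unchanged to its premises, so applying the corresponding structural rule from \cref{fig:structural-subtyping} and using the fact that $\applysubs{\subs}{\cdot}$ commutes with each structural type constructor (pushing the substitution down to the immediate named subformulas) immediately yields premises of the required form. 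For \prule{\forall} and \prule{\exists}, one must be a little careful that the fresh variable $z$ is not in the domain of $\hat{\subs}$ or $\hat{\subs*}$ (which holds since substitution domains contain only parameters $\alpha$, not variables $x$), so $\applysubs{\subs}{[z/x]\tau} = [z/x]\applysubs{\subs}{\tau}$ after suitable renaming, matching the \srule{\forall} / \srule{\exists} premise.

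The main obstacle is purely bookkeeping rather than conceptual: getting the substitution-composition lemma $\applysubs{\subs}{\theta(A)} = \applysubs{\theta ; \subs}{A}$ stated and oriented correctly, and checking that substitutions interact properly with the shallow normal-form syntax — in particular that $\theta(A)$ is again a well-formed structural type (which holds because substitutions map parameters only to named types, as noted in \cref{sec:structural:subs}) and that capture-avoidance for the quantifier cases causes no trouble (which holds because substitution domains are disjoint from the bound-variable namespace). Once that lemma is in hand, the coinductive argument is a routine rule-by-rule match, with the \prule{\jrule{PARAM-}} case being the one place where the parametric derivation is strictly "longer" than the structural one — there, one should observe that since parametric derivations are finitely branching and the \prule{\jrule{PARAM-}} rule strictly decreases the combined stack height, only finitely many consecutive \prule{\jrule{PARAM-}} steps can occur before a genuinely structural or \prule{\jrule{INST-}} step, so the constructed structural derivation is well-defined as a coinductive (possibly infinite) object.
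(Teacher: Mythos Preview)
Your proposal is correct and takes essentially the same approach as the paper. The one point worth noting is that what you handle informally in your final paragraph---observing that only finitely many consecutive \prule{\jrule{PARAM-}} steps can occur because each strictly decreases the stack height, so the coinductive construction of the structural derivation is well-defined---is exactly what the paper packages as a lexicographic \emph{mixed induction and coinduction} in the style of Danielsson and Altenkirch: the outer coinduction is on the structural derivation being built, and the inner induction is on the length of $\subs$, invoked precisely in the \prule{\jrule{PARAM-}} case where no guarding structural rule is produced.
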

\begin{proof}[Proof sketch]
  Using the mixed induction and coinduction proof technique described by \citet{Danielsson+Altenkirch:MPC10}.
  Specifically, the proof is by lexicographic mixed induction and coinduction, first by coinduction on the (potentially infinite) structural subtyping derivation, and then by induction on the finite substitution stack $\subs$.
\end{proof}
\noindent
However, the converse does not hold: parametric subtyping is incomplete with respect to structural subtyping, as the above example involving $\nat$ and $\name{snat}[\one]$ demonstrates.
\begin{theorem}[Incompleteness of parametric subtyping]
  $\tau$ and $\sigma$ exist such that the structural subtyping $\dsubtype{\tau{} < \sigma{}}$ holds but the parametric subtyping $\dsubtype{\tau{\subs} < \sigma{\subs*}}$ does not, for any\/ $\subs$ and $\subs*$.
\end{theorem}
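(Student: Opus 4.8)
The plan is to make precise the counterexample foreshadowed just before the statement. I would take $\tau = \nat$, with its normalized definition $\nat \defd \plus*{\zero\colon \onen ,\, \suc\colon \nat}$ (where $\onen \defd \one$), and $\sigma = \name{snat}[\onen]$, where $\name{snat}[\kappa] \defd \plus*{\zero\colon \kappa ,\, \suc\colon \name{snat}[\kappa]}$. The theorem then follows from two obligations: (i) the structural subtyping $\dsubtype{\nat{} < {\name{snat}[\onen]}{}}$ holds, and (ii) for every $\subs$ and $\subs*$, the parametric subtyping $\dsubtype{\nat{\subs} < {\name{snat}[\onen]}{\subs*}}$ fails. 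For (i) I would exhibit a circular (hence regular) infinite derivation by coinduction: $\srule{\jrule{UNF-}}$ reduces the goal to $\dsubtype{{\plus*{\zero\colon \onen ,\, \suc\colon \nat}}{} < {\plus*{\zero\colon \onen ,\, \suc\colon \name{snat}[\onen]}}{}}$, since substituting $\onen$ for $\kappa$ in $\name{snat}$'s body yields exactly $\plus*{\zero\colon \onen ,\, \suc\colon \name{snat}[\onen]}$; then $\srule{\plus}$ (the two label sets coincide, so the inclusion side condition is trivial) leaves $\dsubtype{\onen{} < \onen{}}$, discharged by $\srule{\jrule{UNF-}}$ followed by $\srule{\one}$, and a coinductive appeal to $\dsubtype{\nat{} < {\name{snat}[\onen]}{}}$, guarded by $\srule{\jrule{UNF-}}$ and $\srule{\plus}$.

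For (ii) I would reason about the shape of any hypothetical derivation from \cref{fig:parametric-subtyping}, for arbitrary fixed $\subs$ and $\subs*$. Since both $\nat$ and $\name{snat}[\onen]$ are type-constructor instantiations, the only rule whose conclusion can match is $\prule{\jrule{INST-}}$; its premise pushes the (empty) substitution of $\nat$ and the substitution $\onen/\kappa$ of $\name{snat}$ onto the two stacks \emph{without applying them}, namely $\dsubtype{{\plus*{\zero\colon \onen ,\, \suc\colon \nat}}{\sube* ; \subs} < {\plus*{\zero\colon \kappa ,\, \suc\colon \name{snat}[\kappa]}}{(\onen/\kappa) ; \subs*}}$. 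Both sides are variant records over the label set $\set{\zero,\suc}$, so only $\prule{\plus}$ applies, and its premise for the $\zero$ component is $\dsubtype{\onen{\sube* ; \subs} < \kappa{(\onen/\kappa) ; \subs*}}$. But the left-hand named type $\onen$ is a type-constructor instantiation, not a type parameter, whereas the right-hand named type $\kappa$ is a parameter, and \cref{fig:parametric-subtyping} provides no rule for a judgment of the form $\dsubtype{\tau{\subs} < \beta{\subs*}}$ when $\tau$ is not a parameter; hence this subgoal has no derivation, and therefore neither does $\dsubtype{\nat{\subs} < {\name{snat}[\onen]}{\subs*}}$. As $\subs$ and $\subs*$ were arbitrary, (ii) follows, so incompleteness is immediate from (i) and (ii).

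There is no substantial obstacle here; the only delicate point lies in (ii), and it is purely a matter of bookkeeping. One must verify that $\prule{\jrule{INST-}}$ genuinely \emph{postpones} the two substitutions, so that the parameter $\kappa$ of $\name{snat}$ is still an unresolved parameter at the moment it is next compared against $\onen$ from $\nat$'s body --- exactly the non-parametric pattern that \cref{fig:parametric-subtyping} declines to decompose, and exactly the pattern that \cref{fig:structural-subtyping}'s eager $\srule{\jrule{UNF-}}$ would instead have hidden by replacing $\kappa$ with $\one$. Since the obstruction surfaces after only two rule applications, no further case analysis on $\subs$ or $\subs*$ is required.
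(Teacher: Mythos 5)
Your proposal is correct and follows exactly the route the paper intends: the theorem is justified by the $\nat$ versus $\name{snat}[\one]$ example sketched just before the statement, and you have simply made that example precise (using the normalized $\name{snat}[\onen]$), giving the circular structural derivation for one direction and observing that after $\prule{\jrule{INST-}}$ and $\prule{\plus}$ the $\zero$ branch hits the ruleless judgment $\dsubtype{\onen{\sube* ; \subs} < \kappa{(\onen/\kappa) ; \subs*}}$ for the other. No gaps; this matches the paper's argument, only spelled out in more detail than the paper provides.
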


\section{Deciding structural subtyping for monomorphic types}\label{sec:decide-mono}

In the next \lcnamecref{sec:decide-poly}, we will present a saturation-based decision procedure for parametric subtyping for parametric polymorphism.
But parameterized type constructors involve some complications, so in this \lcnamecref{sec:decide-mono}, we will provide a gentle introduction to the algorithm by presenting a saturation-based decision procedure for structural subtyping of monomorphic types -- that is, recursively defined types that do not take parameters nor use the structural types $\forall x.\tau$ and $\exists x.\tau$.

Establishing the decidability of structural subtyping for monomorphic types is not a contribution of this paper.
One existing decision procedure (see, \emph{e.g.},~\cite{Lakhani+:ESOP22}) directly employs backward search for a derivation of the structural subtyping judgment $\dsubtype{t{} < u{}}$, using the subtyping rules themselves~(\cref{fig:structural-subtyping}).
This procedure crucially depends on three properties: for monomorphic types, merely circular derivations suffice to characterize structural subtyping; circular derivations are finite; and there are finitely many subtyping problems involving named monomorphic types.

For polymorphic types, these key properties will no longer hold -- which is why we will introduce a forward-inference, saturation-based procedure here.
But even if one is uninterested in polymorphic types, this forward-inference procedure offers several distinct advantages over the backward-search algorithm, as we will discuss below.

\subsection{A forward-inference decision procedure for monomorphic structural subtyping}

To devise a decision procedure for monomorphic subtyping based on forward inference, we will exploit the fact that subtyping is a safety property and return to the idea that, in keeping with the safety slogan ``nothing bad ever happens,'' a subtyping relationship $\dsubtype{t{} < u{}}$ holds when there is no counterexample.
Very roughly speaking, our algorithm proceeds as a kind of automated refutation by contradiction, assuming that a derivation of $\dsubtype{t{} < u{}}$ exists and repeatedly inverting that assumed derivation to check that no violations of subtyping occur (\textit{i.e.}, that nothing bad happens) before reaching another subtyping problem, $\dsubtype{{t'}{} < {u'}{}}$.
Because the given set $\sig$ of type definitions contains finitely many definitions $t \defd A$ and $u \defd B$ and there are therefore finitely many subtyping problems $\dsubtype{t{} < u{}}$, we can check each problem in this way.

More precisely, the forward-inference procedure uses three judgments: the primary judgment, $\asubtype{t < u} \entails \bot$; and two intermediate judgments, $\asubtype{t < u} \entails \asubtype{A < B}$ and $\asubtype{t < u} \entails \asubtype{\tau < \sigma}$.
(Notice that we use $\preccurlyeq$ to distinguish these from the declarative $\dsub$.)
Ultimately, the judgment $\asubtype{t < u} \entails \bot$ will be inferred if and only if $t \ndsub u$, allowing us to decide the structural subtyping $\dsubtype{t{} < u{}}$ by running forward inference to saturation and checking that $\asubtype{t < u} \entails \bot$ has \emph{not} been inferred.
(Saturation is guaranteed, as we will prove in \cref{thm:mono-saturation} below.)
The judgments $\asubtype{t < u} \entails \asubtype{A < B}$ and $\asubtype{t < u} \entails \asubtype{\tau < \sigma}$ are inferred if and only if $\dsubtype{A{} < B{}}$ and $\dsubtype{\tau{} < \sigma{}}$, respectively, would necessarily occur as subderivations of any derivation of $\dsubtype{t{} < u{}}$ (assuming such a derivation exists).

Alternatively, these judgments can be seen as describing necessary consequences of $\dsubtype{t{} < u{}}$.
From yet another perspective, these judgments can be seen as stating those \emph{constraints} that must hold for the structural subtyping $\dsubtype{t{} < u{}}$ to be derivable, with $\bot$ being the unsatisfiable constraint.
This last perspective will prove particularly useful in \cref{sec:decide-poly} and the decision procedure for parametric subtyping of polymorphic types presented there.

\subsubsection{Forward inference}
Forward inference proceeds according to the rules found in \cref{fig:decide-mono}.%
\begin{figure}
  \begin{gather*}
    \infer[\arule{\jrule{INIT-}}]{\asubtype{t < u} \entails \asubtype{A < B}}{
      t \defd A & u \defd B}
    \qquad
    \infer[\arule{\jrule{COMPOSE-}}_\bot]{\asubtype{t < u} \entails \bot}{
      \asubtype{t < u} \entails \asubtype{t' < u'} &
      \asubtype{t' < u'} \entails \bot} 
    \\[\figinferskip]
    \infer[\arule{\tensor}]{\asubtype{t < u} \entails \asubtype{\tau_i < \sigma_i}}{
      \asubtype{t < u} \entails \asubtype{\tau_1 \tensor \tau_2 < \sigma_1 \tensor \sigma_2} & (i \in \set{1,2})}
    \qquad
    \raisebox{0.5\baselineskip}{(no $\arule{\one}$ rule for $\asubtype{t < u} \entails \asubtype{\one < \one}$)}
    \\[\figinferskip]
    \infer[\arule{\plus}]{\asubtype{t < u} \entails \asubtype{\tau_\ell < \sigma_\ell}}{
      \asubtype{t < u} \entails \asubtype{\plus*[\ell \in L]{\ell\colon \tau_\ell} < \plus*[k \in K]{k\colon \sigma_k}} &
      (\ell \in L \subseteq K)}
    \\[\figinferskip]
    \infer[\arule{\imp}_1]{\asubtype{t < u} \entails \asubtype{\sigma_1 < \tau_1}}{
      \asubtype{t < u} \entails \asubtype{\tau_1 \imp \tau_2 < \sigma_1 \imp \sigma_2}}
    \qquad
    \infer[\arule{\imp}_2]{\asubtype{t < u} \entails \asubtype{\tau_2 < \sigma_2}}{
      \asubtype{t < u} \entails \asubtype{\tau_1 \imp \tau_2 < \sigma_1 \imp \sigma_2}}
    \\[\figinferskip]
    \infer[\arule{\with}]{\asubtype{t < u} \entails \asubtype{\tau_k < \sigma_k}}{
      \asubtype{t < u} \entails \asubtype{\with*[\ell \in L]{\ell\colon \tau_\ell} < \with*[k \in K]{k\colon \sigma_k}} &
      (k \in K \subseteq L)}
    \\[\figinferskip]
    \infer[\arule{\plus\plus}_\bot]{\asubtype{t < u} \entails \bot}{
      \asubtype{t < u} \entails \asubtype{\plus*[\ell \in L]{\ell\colon \tau_\ell} < \plus*[k \in K]{k\colon \sigma_k}} &
      (L \nsubseteq K)}
    \\[\figinferskip]
    \infer[\arule{\with\with}_\bot]{\asubtype{t < u} \entails \bot}{
      \asubtype{t < u} \entails \asubtype{\with*[\ell \in L]{\ell\colon \tau_\ell} < \with*[k \in K]{k\colon \sigma_k}} &
      (K \nsubseteq L)}
    \quad
    \infer[\arule{\jrule{MISMATCH-}}_\bot]{\asubtype{t < u} \entails \bot}{
      \asubtype{t < u} \entails \asubtype{A < B} &
      A \mathrel{\Bot} B}
  \end{gather*}
  \caption{Forward inference rules for deciding structural subtyping of monomorphic types ({\normalfont\scshape f} for `forward').
These rules are interpreted inductively.
The notation $A \mathrel{\Bot} B$ means that $A$ and $B$ use distinct top-level structural type constructors, such as $\plus$ and $\one$.}\label{fig:decide-mono}
\end{figure}
Unlike the structural subtyping rules of \cref{fig:structural-subtyping}, these algorithmic rules are interpreted inductively and most clearly read top-down, from premises to conclusion.
To provide some intuition for this forward-inference decision procedure, we will walk through a few of the rules in detail.

\paragraph{The $\arule{\jrule{INIT-}}$ rule}
Suppose that the premises $t \defd A$ and $u \defd B$ hold.
If $\dsubtype{t{} < u{}}$ is derivable, then, by inversion, it must have been derived by applying the $\srule{\jrule{UNF-}}$ structural subtyping rule to a subderivation of $\dsubtype{A{} < B{}}$.
That is, $\dsubtype{A{} < B{}}$ would necessarily occur as a subderivation of $\dsubtype{t{} < u{}}$ when $t$ and $u$ are defined by $t \defd A$ and $u \defd B$, justifying the inference of $\asubtype{t < u} \entails \asubtype{A < B}$ by the $\arule{\jrule{INIT-}}$ rule.

\paragraph{The $\arule{\plus}$ and $\arule{\plus\plus}_\bot$ rules}
Suppose that the shared premise $\asubtype{t < u} \entails \asubtype{\plus*[\ell \in L]{\ell\colon \tau_\ell} < \plus*[k \in K]{k\colon \sigma_k}}$ has already been inferred -- that is, that any derivation of $\dsubtype{t{} < u{}}$ would necessarily contain a subderivation of $\dsubtype{{\plus*[\ell \in L]{\ell\colon \tau_\ell}}{} < {\plus*[k \in K]{k\colon \sigma_k}}{}}$.
By inversion, this subderivation can only be formed by applying the $\srule{\plus}$ structural subtyping rule with subderivations of $\dsubtype{{\tau_\ell}{} < {\sigma_\ell}{}}$, for all $\ell \in L$, and only when $L \subseteq K$.
Therefore, when $L \subseteq K$, the inference of $\asubtype{t < u} \entails \asubtype{\tau_\ell < \sigma_\ell}$, for all $\ell \in L$, by the $\arule{\plus}$ rule is justified.
On the other hand, when $L \nsubseteq K$, the subtypings $\dsubtype{{\plus*[\ell \in L]{\ell\colon \tau_\ell}}{} < {\plus*[k \in K]{k\colon \sigma_k}}{}}$ and hence $\dsubtype{t{} < u{}}$ are \emph{not} derivable, justifying the inference of $\asubtype{t < u} \entails \bot$ by the $\arule{\plus\plus}_\bot$ rule.

\paragraph{The $\arule{\jrule{COMPOSE-}}_\bot$ rule}
Suppose that the premises $\asubtype{t < u} \entails \asubtype{t' < u'}$ and $\asubtype{t' < u'} \entails \bot$ have already been inferred.
Thus, any derivation of $\dsubtype{t{} < u{}}$ would necessarily contain a subderivation of $\dsubtype{{t'}{} < {u'}{}}$, and moreover $\dsubtype{{t'}{} < {u'}{}}$ is \emph{not} derivable.
Therefore, $\dsubtype{t{} < u{}}$ is also not derivable, justifying the inference of $\asubtype{t < u} \entails \bot$ by the $\arule{\jrule{COMPOSE-}}_\bot$ rule.

\paragraph{The $\arule{\jrule{MISMATCH-}}_\bot$ rule}
Suppose that the premises $\asubtype{t < u} \entails \asubtype{A < B}$ and $A \mathrel{\Bot} B$ have already been inferred, where $A \mathrel{\Bot} B$ indicates that $A$ and $B$ have distinct top-level structural type constructors, such as $\plus*[\ell \in L]{\ell\colon \tau_\ell} \mathrel{\Bot} \one$.
Because the first premise has been inferred, any derivation of $\dsubtype{t{} < u{}}$ would necessarily contain a subderivation of $\dsubtype{A{} < B{}}$.
Because $A \mathrel{\Bot} B$, inversion shows there is no structural subtyping rule that could possibly form this subderivation.
Hence $\dsubtype{t{} < u{}}$ is \emph{not} derivable, justifying the inference of $\asubtype{t < u} \entails \bot$ by the $\arule{\jrule{MISMATCH-}}_\bot$ rule.

\subsubsection{Example}
Returning to our running example of even and odd natural numbers, we can examine the inferences made by our algorithm.
By virtue of the $\arule{\jrule{INIT-}}$ rule, the following judgments, among others, will be inferred:
\begin{align*}
  &\asubtype{\even < \nat} \entails \asubtype{\plus*{\zero\colon \onen ,\, \suc\colon \odd} < \plus*{\zero\colon \onen ,\, \suc\colon \nat}}
   ; \\[-\jot]
  &\asubtype{\odd < \nat} \entails \asubtype{\plus*{\suc\colon \even} < \plus*{\zero\colon \onen ,\, \suc\colon \nat}}
    ;
  \\[-\jot]
  &\asubtype{\onen < \onen} \entails \asubtype{\one < \one}
  \,.
\intertext{Because $\set{\zero,\suc} \subseteq \set{\zero,\suc}$ as well as $\set{\suc} \subseteq \set{\zero,\suc}$, the $\arule{\plus}$ rule then allows us to infer}
  &\asubtype{\even < \nat} \entails \asubtype{\onen < \onen}
  \phantom{;}\enspace\text{\emph{and}}\enspace
  \\[-\jot]
  &\asubtype{\even < \nat} \entails \asubtype{\odd < \nat}
   {;}
   \enspace\text{\emph{as well as}}\enspace\phantom{;}
  \asubtype{\odd < \nat} \entails \asubtype{\even < \nat}
\end{align*}
as necessary consequences of the initial judgments about $\asubtype{\even < \nat}$ and $\asubtype{\odd < \nat}$.
At this point, saturation has been reached: no inference deduces any judgment that has not already been inferred.
Because $\asubtype{\even < \nat} \entails \bot$ has \emph{not} been inferred upon saturation, we may conclude that $\dsubtype{\even{} < \nat{}}$ is derivable -- \textit{i.e.}, that $\even$ is a subtype of $\nat$.
Likewise, we conclude that $\odd$ is a subtype of $\nat$.

\subsection{Correctness of the forward-inference decision procedure}

The forward-inference algorithm is both sound and complete with respect to (the monomorphic fragment of) structural subtyping as defined in \cref{fig:structural-subtyping}. 
The proof of soundness relies on a key \lcnamecref{lem:sound-mono}.%
\begin{lemma}\label{lem:sound-mono}\label{lem:mono-sound}
  Upon saturation:
  \begin{enumerate}
  \item\label{item:lem-sound-mono:named}
    If $\asubtype{t < u} \entails \asubtype{\tau < \sigma}$ and
       $\asubtype{t < u} \nentails \bot$,
    then $\tau \dsub[i] \sigma$.
  \item\label{item:lem-sound-mono:structural}
    If $\asubtype{t < u} \entails \asubtype{A < B}$ and
       $\asubtype{t < u} \nentails \bot$,
    then $A \dsub[i] B$.
  \end{enumerate}
\end{lemma}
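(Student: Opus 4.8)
The plan is to prove both parts at once by exhibiting an explicit coinductive invariant for the declarative subtyping judgments of \cref{fig:structural-subtyping}. Working at a saturated state, let $R$ be the pair of relations --- one between named types, one between structural types --- defined by: $(\tau,\sigma) \in R$ iff there exist $t, u$ such that $\asubtype{t < u} \entails \asubtype{\tau < \sigma}$ has been inferred but $\asubtype{t < u} \entails \bot$ has not; and $(A,B) \in R$ iff there exist $t, u$ such that $\asubtype{t < u} \entails \asubtype{A < B}$ has been inferred but $\asubtype{t < u} \entails \bot$ has not. The two claims of the \lcnamecref{lem:sound-mono} then follow immediately once we show that $R$ is backward-closed under the rules of \cref{fig:structural-subtyping}: by the coinduction principle for $\dsub$ this yields $\tau \dsub \sigma$ whenever $(\tau,\sigma) \in R$ and $A \dsub B$ whenever $(A,B) \in R$, and the hypotheses of the \lcnamecref{lem:sound-mono} say exactly that the relevant pair lies in $R$. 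Throughout, I use the defining property of saturation: the set of inferred judgments is closed under the forward rules of \cref{fig:decide-mono}.

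For the named-type case, suppose $(\tau,\sigma) \in R$ with witness $(t, u)$. In the monomorphic fragment every named type is a parameter-free type constructor, so $\tau = t'$ and $\sigma = u'$ with unique definitions $t' \defd A$ and $u' \defd B$. I would first check that $(A,B) \in R$, with witness $(t', u')$: the judgment $\asubtype{t' < u'} \entails \asubtype{A < B}$ is present by $\arule{\jrule{INIT-}}$, and $\asubtype{t' < u'} \entails \bot$ is absent, since otherwise $\arule{\jrule{COMPOSE-}}_\bot$ applied to the witnessing $\asubtype{t < u} \entails \asubtype{t' < u'}$ would make $\asubtype{t < u} \entails \bot$ inferred, contradicting the witness. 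The structural rule $\srule{\jrule{UNF-}}$ then relates $t'$ and $u'$, and its sole premise $A \dsub B$ is discharged by the invariant since $(A,B) \in R$.

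For the structural-type case, suppose $(A,B) \in R$ with witness $(t, u)$. If $A$ and $B$ had distinct outermost structural constructors, then $\arule{\jrule{MISMATCH-}}_\bot$ would make $\asubtype{t < u} \entails \bot$ inferred, contradicting the witness; so $A$ and $B$ agree at the top level, and I would proceed by cases on that constructor. In each case the matching forward-decomposition rule of \cref{fig:decide-mono} produces exactly the $R$-related components demanded by the corresponding rule of \cref{fig:structural-subtyping}, with the witness $(t, u)$ carried over unchanged (the side hypothesis $\asubtype{t < u} \nentails \bot$ is untouched). For instance, if $A = \plus*[\ell \in L]{\ell\colon \tau_\ell}$ and $B = \plus*[k \in K]{k\colon \sigma_k}$, then $L \subseteq K$, for otherwise $\arule{\plus\plus}_\bot$ would make $\asubtype{t < u} \entails \bot$ inferred; so $\arule{\plus}$ yields $\asubtype{t < u} \entails \asubtype{\tau_\ell < \sigma_\ell}$, hence $(\tau_\ell,\sigma_\ell) \in R$, for each $\ell \in L$, and $\srule{\plus}$ relates $A$ and $B$ with side condition $L \subseteq K$ met and premises supplied by the invariant. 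The $\tensor$ case uses $\arule{\tensor}$; the $\one$ case uses $\srule{\one}$ and needs no premise; the $\imp$ case uses $\arule{\imp}_1$ and $\arule{\imp}_2$, the contravariant first premise coming from $\arule{\imp}_1$; and the $\with$ case mirrors $\plus$ via $\arule{\with\with}_\bot$ and $\arule{\with}$. The $\forall$ and $\exists$ cases do not arise, as those constructors are excluded from monomorphic types.

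The argument is essentially bookkeeping, so I expect no genuine obstacle. The one point demanding care --- the step I would single out as the crux --- is the repeated propagation of ``non-$\bot$'': at every inversion step one must combine saturation with the contrapositive of the relevant $\bot$-producing rule ($\arule{\jrule{COMPOSE-}}_\bot$ at an $\srule{\jrule{UNF-}}$ step, and $\arule{\plus\plus}_\bot$, $\arule{\with\with}_\bot$, or $\arule{\jrule{MISMATCH-}}_\bot$ at the structural decompositions) to see that $\asubtype{t < u} \nentails \bot$ is inherited by every sub-judgment produced, so that membership in $R$ is preserved all the way down the (potentially infinite) derivation. Once that pattern is fixed, each case simply pairs one forward-inference rule with one declarative rule.
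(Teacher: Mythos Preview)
Your proposal is correct and takes essentially the same approach as the paper: mutual coinduction on the declarative derivations, using saturation together with the contrapositives of the $\bot$-producing forward rules to propagate the ``non-$\bot$'' condition and switch the witness from $(t,u)$ to $(t',u')$ at each $\srule{\jrule{UNF-}}$ step. The paper merely sketches this (deferring details to the polymorphic \cref{lem:poly-sound}), whereas you make the coinductive invariant $R$ explicit; the underlying argument is the same.
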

\begin{proof}[Proof sketch]
  By mutual coinduction on the derivations of $\dsubtype{\tau{} < \sigma{}}$ and $\dsubtype{A{} < B{}}$.
\end{proof}
\begin{theorem}[Soundness and completeness]\label{thm:mono-sound-complete}
  Upon saturation, $\asubtype{t < u} \nentails \bot$ if and only if $\dsubtype{t{} < u{}}$.
\end{theorem}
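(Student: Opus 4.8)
The plan is to prove soundness and completeness separately, relying on \cref{lem:sound-mono} for soundness.

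\emph{Soundness} ($\asubtype{t < u} \nentails \bot$ implies $\dsubtype{t{} < u{}}$).  This is essentially immediate from the lemma.  Since $\sig$ contains exactly one definition per type constructor, write $t \defd A$ and $u \defd B$; the $\arule{\jrule{INIT-}}$ rule then infers $\asubtype{t < u} \entails \asubtype{A < B}$.  Given $\asubtype{t < u} \nentails \bot$, part~(\ref{item:lem-sound-mono:structural}) of \cref{lem:sound-mono} yields $A \dsub B$, and a single application of the $\srule{\jrule{UNF-}}$ rule --- whose remaining premises are exactly $t \defd A$ and $u \defd B$ --- concludes $\dsubtype{t{} < u{}}$.

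\emph{Completeness} ($\dsubtype{t{} < u{}}$ implies $\asubtype{t < u} \nentails \bot$).  I would prove the contrapositive: whenever $\asubtype{t < u} \entails \bot$ has been inferred, $\dsubtype{t{} < u{}}$ does not hold.  This follows from a mutual claim proved by structural induction on the (inductively defined) forward-inference derivation: \textbf{(a)} if $\asubtype{t < u} \entails \asubtype{A < B}$ and $\dsubtype{t{} < u{}}$, then $\dsubtype{A{} < B{}}$; \textbf{(b)} if $\asubtype{t < u} \entails \asubtype{\tau < \sigma}$ and $\dsubtype{t{} < u{}}$, then $\dsubtype{\tau{} < \sigma{}}$; and \textbf{(c)} if $\asubtype{t < u} \entails \bot$, then $\dsubtype{t{} < u{}}$ does not hold.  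The argument is a case analysis on the last forward-inference rule, each case proceeding by inversion on the assumed structural subtyping derivation.  For $\arule{\jrule{INIT-}}$, inversion of $\dsubtype{t{} < u{}}$ is forced (the only applicable structural rule for monomorphic named types is $\srule{\jrule{UNF-}}$) and exposes the subderivation of $\dsubtype{A{} < B{}}$.  For the decomposition rules $\arule{\tensor}$, $\arule{\plus}$, $\arule{\imp}_1$, $\arule{\imp}_2$, $\arule{\with}$, the induction hypothesis~(a) supplies a derivation of the structural subtyping of the compound types, whose last rule is forced by the matching top-level constructor; inverting it yields the relevant component subtyping, and the set-inclusion side conditions produced by $\srule{\plus}$ and $\srule{\with}$ are exactly those demanded by the corresponding forward rules.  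For $\arule{\jrule{COMPOSE-}}_\bot$, applying~(b) then~(c) to the two premises contradicts $\dsubtype{t{} < u{}}$.  For $\arule{\plus\plus}_\bot$ and $\arule{\with\with}_\bot$, hypothesis~(a) would produce a derivation of a subtyping between two variant record types (resp.\ record types) whose only possible last rule, $\srule{\plus}$ (resp.\ $\srule{\with}$), requires exactly the set inclusion that the forward rule assumes to fail.  For $\arule{\jrule{MISMATCH-}}_\bot$, hypothesis~(a) would produce a derivation of $\dsubtype{A{} < B{}}$ with $A$ and $B$ having distinct top-level structural constructors, which no structural rule can form.  With~(c) in hand, completeness is immediate: upon saturation, if $\dsubtype{t{} < u{}}$ holds then $\asubtype{t < u} \entails \bot$ cannot have been inferred.

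\emph{Main obstacle.}  The genuinely hard part is already isolated in \cref{lem:sound-mono}, whose proof needs a real coinduction and is assumed here.  Within the present theorem, the only delicate points are phrasing the mutual claim~(a)--(c) so that every inversion step on a structural subtyping derivation goes through cleanly, and observing that ``upon saturation'' denotes the least set of judgments closed under the forward rules --- so that $\asubtype{t < u} \nentails \bot$ is precisely the absence of an inductive forward derivation of $\asubtype{t < u} \entails \bot$, which is what makes the structural induction in the completeness argument legitimate.
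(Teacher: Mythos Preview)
Your proposal is correct and takes essentially the same approach as the paper: soundness via \cref{lem:sound-mono} plus one application of $\srule{\jrule{UNF-}}$, and completeness by induction on the forward-inference derivation of $\asubtype{t < u} \entails \bot$ to reach a contradiction. Your mutual claim (a)--(c) is exactly the shape the paper spells out in detail for the polymorphic generalization (\cref{lem:poly-complete}), so the level of detail you give here is more than the paper's sketch but matches its actual structure.
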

\begin{proof}[Proof sketch]
  From left to right, by appealing to \cref{lem:sound-mono};
  from right to left, by induction on the finite derivation of $\asubtype{t < u} \entails \bot$ to establish a (meta-)contradiction.
\end{proof}
\noindent
We do not provide further details of these specific proofs here because this forward-inference procedure for structural subtyping of monomorphic types will be subsumed by the decision procedure for parametric subtyping of polymorphic types that will eventually be presented in \cref{sec:decide-poly}.

The preceding \lcnamecref{thm:mono-sound-complete} establishes that the above forward-inference algorithm is a semi-decision procedure.
However, in this setting, forward inference is, in fact, guaranteed to saturate, making our algorithm a full-fledged decision procedure for structural subtyping of monomorphic types.
\begin{theorem}[Termination]\label{thm:mono-saturation}
  Forward inference according to the rules of \cref{fig:decide-mono} always saturates.
\end{theorem}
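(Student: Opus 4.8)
The plan is to exhibit a finite set that contains every judgment that forward inference can ever infer, so that saturation must be reached after finitely many steps. The key observation is that forward inference, read top-down, only ever produces judgments of three shapes: $\asubtype{t < u} \entails \bot$, $\asubtype{t < u} \entails \asubtype{A < B}$, and $\asubtype{t < u} \entails \asubtype{\tau < \sigma}$, where in every case $t$ and $u$ are \emph{defined} type constructors drawn from the finite set $\sig$. So the first component, $\asubtype{t < u}$, ranges over a finite set (namely pairs of the finitely many type constructors). It then suffices to bound the right-hand side.

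First I would make precise what can appear on the right. In the monomorphic fragment, the structural types $A, B$ occurring in any inferred judgment are \emph{subformulas of the bodies of definitions in $\sig$}: the $\arule{\jrule{INIT-}}$ rule introduces only bodies $A, B$ with $t \defd A$ and $u \defd B$, and every other rule that produces an $\asubtype{\cdot < \cdot}$ right-hand side replaces a structural type by one of its immediate named subformulas $\tau_i$ (or, via $\arule{\jrule{INIT-}}$ applied again through $\arule{\jrule{COMPOSE-}}_\bot$, by another body). Since $\sig$ is finite and each body is a finite structural type, the set of structural types reachable as subformulas of bodies is finite; call it $\mathcal{S}$. Similarly, the named types $\tau, \sigma$ appearing in inferred judgments are exactly the immediate named subformulas occurring inside members of $\mathcal{S}$ — again a finite set, call it $\mathcal{N}$. (Here I would note that in the monomorphic fragment named types are just $t[\sube*]$, i.e. bare type constructors, and there are no $\forall/\exists$ to worry about, so this subformula analysis is clean.) I would carry this out as a short lemma: every judgment inferable by the rules of \cref{fig:decide-mono} lies in the finite set $\{\asubtype{t < u} \entails \bot\} \cup \{\asubtype{t < u} \entails \asubtype{A < B} : A, B \in \mathcal{S}\} \cup \{\asubtype{t < u} \entails \asubtype{\tau < \sigma} : \tau, \sigma \in \mathcal{N}\}$ over all type constructors $t, u$ in $\sig$, proved by a straightforward induction on the derivation (one case per inference rule, checking each rule maps judgments in the set to judgments in the set).

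Given this finiteness lemma, termination is immediate: forward inference to saturation adds, at each step, at least one judgment not previously derived, and all such judgments inhabit the fixed finite set above; hence the process halts after at most $|\mathcal{S}|^2 |\sig|^2 + |\mathcal{N}|^2 |\sig|^2 + |\sig|^2$ steps (a crude bound, but finite).

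The main obstacle — really the only point requiring care — is pinning down precisely which structural and named types can appear on the right-hand side, i.e. getting the closure set $\mathcal{S}$ right. One must check that no inference rule can manufacture a structural or named type that is not a subformula of some definition body: this is clear for the decomposition rules (they descend into subformulas) and for $\arule{\jrule{INIT-}}$ (it produces bodies), but it relies on the normal-form discipline of \cref{sec:structural:named}, namely that bodies are built from named types which in the monomorphic case are just bare constructors, so the subformula set does not blow up. Once that closure argument is stated carefully, everything else is bookkeeping.
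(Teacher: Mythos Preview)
Your proposal is correct and follows essentially the same approach as the paper: bound the set of inferable judgments by observing that the antecedent $\asubtype{t < u}$ ranges over finitely many pairs of defined constructors and that the consequent involves only subformulas of the (finitely many, finite) definition bodies, so forward inference must saturate. The paper's sketch is slightly more refined in that it ties the subformulas on the right to the specific bodies $A$ and $B$ of $t$ and $u$ (noting the $A$/$B$ swap induced by $\arule{\imp}_1$), whereas you use a single global subformula set $\mathcal{N}$; both yield finiteness, and your coarser bound is perfectly adequate for termination.
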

\begin{proof}[Proof sketch]
  Finitely many definitions of the form $t \defd A$ and $u \defd B$ can be drawn from a given set $\sig$ of definitions.
  For each such pair of structural types $A$ and $B$, there are finitely many subformulas (without unfolding type definitions).
  Each of the rules found in \cref{fig:decide-mono} infers a judgment $\asubtype{t < u} \entails \asubtype{\tau < \sigma}$ only if either $\tau$ and $\sigma$ are subformulas of $A$ and $B$, respectively, or $\tau$ and $\sigma$ are subformulas of $B$ and $A$, respectively (again, without unfolding definitions).
  Therefore, only finitely many judgments can be inferred, so forward inference must eventually saturate.
\end{proof}

\subsection{Further remarks}\label{subsec:remarks-forward-search}

With respect to a backward-search decision procedure for structural subtyping of monomorphic types (see, \emph{e.g.}, \cite{Lakhani+:ESOP22}), the above forward-inference algorithm has two advantages.
First, it is naturally incremental and compositional:
If additional type definitions are introduced later in the program, inferences involving only prior definitions still hold and need not be performed again;
only inferences involving the newly introduced definitions need to be performed.
Second, the forward-inference algorithm can take advantage of inferences made along one branch when considering another branch.

With respect to the backward-search algorithm, our forward-inference algorithm, as formulated in \cref{fig:decide-mono}, does not account for structural subtypings that arise from uninhabited types that exist when types' interpretation is inductive or mixed inductive/coinductive, such as those in work by \citet{Ligatti+:TOPLAS17} and \citet{Lakhani+:ESOP22}.
In this paper, we choose to work only with types that are interpreted coinductively.
Because all such types are inhabited, the above forward-inference algorithm needs not account for such subtypings.
We conjecture that the algorithm can be extended to inductive and mixed inductive/coinductive settings, but we leave that as future work.

\section{Deciding parametric subtyping for parametric polymorphism}\label{sec:decide-poly}

Leveraging the structure of the forward-inference algorithm for deciding structural subtyping of monomorphic types presented in the preceding \lcnamecref{sec:decide-mono}, we will now present a related algorithm for deciding parametric subtyping of polymorphic types.

At a high level, the algorithm uses saturating forward inference to derive the most general admissible parametric rules for each pair of defined type constructors, such as ``$\dsubtype{{\edyck[\kappa]}{} < {\dyck[\kappa']}{}}$ if $\dsubtype{\kappa{} < {\kappa'}{}}$''.
Then, once these rules have been derived, a parametric subtyping problem can be decided by a second, backward proof construction phase that builds a finite derivation using the rules derived during the first phase.

\subsection{Details of the decision procedure}

As in the special case algorithm for monomorphic types~(\cref{sec:decide-mono}), there are three judgments for necessary consequences of a subtyping relationship between two types involving type constructors.
However, now that type constructors may take parameters, these judgments must account for those parameters.
Also, we choose to explicitly incorporate variances into the judgments for convenience.
Possible variances $\xi$ and $\zeta$ are co- and contravariance, which we write as $\cov$ and $\ctv$, respectively.
(Bivariance is handled as mutual co- and contravariance, and nonvariance is handled implicitly by the algorithm.)
An operation, $\lnot$, on variances, given by $\lnot(\cov) = \ctv$ and $\lnot (\ctv) = \cov$, is also useful.

Because the forward inference judgments will use explicit variances and because we will want to relate them to the declarative characterization of parametric subtyping, it is helpful to define the abbreviation $\dsubtype{\tau{\subs} < {_\xi \sigma}{\subs*}}$ such that:
$\dsubtype{\tau{\subs} < {_{\cov} \sigma}{\subs*}}$ if and only if $\dsubtype{\tau{\subs} < \sigma{\subs*}}$; and $\dsubtype{\tau{\subs} < {_{\ctv} \sigma}{\subs*}}$ if and only if $\dsubtype{\sigma{\subs*} < \tau{\subs}}$.
The abbreviation  $\dsubtype{A{\subs} < {_\xi B}{\subs*}}$ is defined analogously.

We finally arrive at the following three judgments for forward inference.
(We again use $\preccurlyeq$ for distinction.)
The judgment $\asubtype[\xi]{t[\alphas] < u[\betas*]} \entails \bot$ will be inferred if and only if there are no substitutions $\theta$ and $\phi$ and stacks $\subs$ and $\subs*$ for which a derivation of $\dsubtype{{t[\theta]}{\subs} < {_\xi u[\phi]}{\subs*}}$ exists.
The judgments $\asubtype[\xi]{t[\alphas] < u[\betas*]} \entails \asubtype[\xi']{\tau < \sigma}$ and $\asubtype[\xi]{t[\alphas] < u[\betas*]} \entails \asubtype[\xi']{A < B}$ will be inferred if and only if $\dsubtype{{\tau}{\theta ; \subs} < {_{\xi'} \sigma}{\phi ; \subs*}}$ and $\dsubtype{A{\theta ; \subs} < {_{\xi'} B}{\phi ; \subs*}}$, respectively, would necessarily occur as subderivations of any derivation of $\dsubtype{{t[\theta]}{\subs} < {_\xi u[\phi]}{\subs*}}$ (assuming such a derivation exists).

\subsubsection{Phase 1: Forward inference}

Forward inference proceeds according to the rules found in \cref{fig:decide-poly}.
\begin{figure}
  \small
  \begin{gather*}
    \infer[\arule{\jrule{INIT-}}]{\asubtype[\xi]{t[\alphas] < u[\betas]} \entails \asubtype[\xi]{A < B}}{
      t[\alphas] \defd A & u[\betas] \defd B}
    \\[\figinferskip]
    \infer[\arule{\tensor}]{\asubtype[\xi]{t[\alphas] < u[\betas]} \entails \asubtype[\xi']{\tau_i < \sigma_i}}{
      \asubtype[\xi]{t[\alphas] < u[\betas]} \entails \asubtype[\xi']{\tau_1 \tensor \tau_2 < \sigma_1 \tensor \sigma_2} &
      (i \in \set{1,2})}
    \quad
    \raisebox{0.5\baselineskip}{(no $\arule{\one}$ rule for $\asubtype[\xi]{t[\alphas] < u[\betas]} \entails \asubtype[\xi']{\one < \one}$)}
    \\[\figinferskip]
    \infer[\arule{\plus}]{\asubtype[\xi]{t[\alphas] < u[\betas]} \entails \asubtype[\xi']{\tau_\ell < \sigma_\ell}}{
      \asubtype[\xi]{t[\alphas] < u[\betas]} \entails \asubtype[\xi']{\plus*[\ell \in L]{\ell\colon \tau_\ell} < \plus*[k \in K]{k\colon \sigma_k}} &
      (L \subseteq_{\xi'} K) & (\ell \in L \cap K)}
    \\[\figinferskip]
    \infer[\arule{\imp}_1]{\asubtype[\xi]{t[\alphas] < u[\betas]} \entails \asubtype[\lnot \xi']{\tau_1 < \sigma_1}}{
      \asubtype[\xi]{t[\alphas] < u[\betas]} \entails \asubtype[\xi']{\tau_1 \imp \tau_2 < \sigma_1 \imp \sigma_2}}
    \qquad
    \infer[\arule{\imp}_2]{\asubtype[\xi]{t[\alphas] < u[\betas]} \entails \asubtype[\xi']{\tau_2 < \sigma_2}}{
      \asubtype[\xi]{t[\alphas] < u[\betas]} \entails \asubtype[\xi']{\tau_1 \imp \tau_2 < \sigma_1 \imp \sigma_2}}
    \\[\figinferskip]
    \infer[\arule{\with}]{\asubtype[\xi]{t[\alphas] < u[\betas]} \entails \asubtype[\xi']{\tau_k < \sigma_k}}{
      \asubtype[\xi]{t[\alphas] < u[\betas]} \entails \asubtype[\xi']{\with*[\ell \in L]{\ell\colon \tau_\ell} < \with*[k \in K]{k\colon \sigma_k}} &
      (K \subseteq_{\xi'} L) & (k \in L \cap K)}
    \\[\figinferskip]
    \infer[\arule{\forall}]{\asubtype[\xi]{t[\alphas] < u[\betas]} \entails \asubtype[\xi']{[z/x]\tau < [z/y]\sigma}}{
      \asubtype[\xi]{t[\alphas] < u[\betas]} \entails \asubtype[\xi']{\forall x.\tau < \forall y.\sigma} &
      (\text{$z$ fresh})}
    \qquad\!\!
    \infer[\arule{\exists}]{\asubtype[\xi]{t[\alphas] < u[\betas]} \entails \asubtype[\xi']{[z/x]\tau < [z/y]\sigma}}{
      \asubtype[\xi]{t[\alphas] < u[\betas]} \entails \asubtype[\xi']{\exists x.\tau < \exists y.\sigma} &
      (\text{$z$ fresh})}
    \\[\figinferskip]
    \infer[\arule{\jrule{COMPOSE-}}]{\asubtype[\xi]{t[\alphas] < u[\betas]} \entails \asubtype[\zeta]{\theta'(\alpha') < \phi'(\beta')}}{
      \asubtype[\xi]{t[\alphas] < u[\betas]} \entails \asubtype[\xi']{t'[\theta'] < u'[\phi']} &
      \asubtype[\xi']{t'[\alphas'] < u'[\betas']} \entails \asubtype[\zeta]{\alpha' < \beta'}}
    \\[\figinferskip]
    \infer[\arule{\jrule{COMPOSE-}}_\bot]{\asubtype[\xi]{t[\alphas] < u[\betas]} \entails \bot}{
      \asubtype[\xi]{t[\alphas] < u[\betas]} \entails \asubtype[\xi']{t'[\theta'] < u'[\phi']} &
      \asubtype[\xi']{t'[\alphas'] < u'[\betas']} \entails \bot}
    \\[\figinferskip]
    \infer[\!\arule{\jrule{PARAM-L-}}_{\bot}]{\asubtype[\xi]{t[\alphas] < u[\betas]} \entails \bot}{
      \asubtype[\xi]{t[\alphas] < u[\betas*]} \entails \asubtype[\xi']{\alpha < \sigma} & \!\!(\sigma \neq \beta)}
    \qquad\!\!\!\!\!
    \infer[\!\arule{\jrule{PARAM-R-}}_{\bot}]{\asubtype[\xi]{t[\alphas] < u[\betas]} \entails \bot}{
      \asubtype[\xi]{t[\alphas] < u[\betas*]} \entails \asubtype[\xi']{\tau < \beta} & \!\!(\tau \neq \alpha)}
    \\[\figinferskip]
    \text{(no rule for $\asubtype[\xi]{t[\alphas] < u[\betas]} \entails \asubtype[\xi']{x < x}$)}
    \\[\figinferskip]
    \infer[\arule{\jrule{VAR-L-}}_\bot]{\asubtype[\xi]{t[\alphas] < u[\betas]} \entails \bot}{
      \asubtype[\xi]{t[\alphas] < u[\betas]} \entails \asubtype[\xi']{x < \tau} &
      (\tau \neq x)}
    \qquad
    \infer[\arule{\jrule{VAR-R-}}_\bot]{\asubtype[\xi]{t[\alphas] < u[\betas]} \entails \bot}{
      \asubtype[\xi]{t[\alphas] < u[\betas]} \entails \asubtype[\xi']{\tau < x} &
      (\tau \neq x)}
    \\[\figinferskip]
    \infer[\arule{\plus\plus}_\bot]{\asubtype[\xi]{t[\alphas] < u[\betas]} \entails \bot}{
      \asubtype[\xi]{t[\alphas] < u[\betas]} \entails \asubtype[\xi']{\plus*[\ell \in L]{\ell\colon \tau_\ell} < \plus*[k \in K]{k\colon \sigma_k}} &
      (L \nsubseteq_{\xi'} K)}
    \\[\figinferskip]
    \infer[\arule{\with\with}_\bot]{\asubtype[\xi]{t[\alphas] < u[\betas]} \entails \bot}{
      \asubtype[\xi]{t[\alphas] < u[\betas]} \entails \asubtype[\xi']{\with*[\ell \in L]{\ell\colon \tau_\ell} < \with*[k \in K]{k\colon \sigma_k}} &
      (K \nsubseteq_{\xi'} L)}
    \\[\figinferskip]
    \infer[\arule{\jrule{MISMATCH-}}_\bot]{\asubtype[\xi]{t[\alphas] < u[\betas]} \entails \bot}{
      \asubtype[\xi]{t[\alphas] < u[\betas]} \entails \asubtype[\xi']{A < B} &
      A \mathrel{\Bot} B}
    \hspace{-4em}
  \end{gather*}
  \caption{Forward inference rules for phase 1 of deciding parametric subtyping for parametric polymorphism ({\normalfont\scshape f} for `forward').
These rules are interpreted inductively.
The notation $A \mathrel{\Bot} B$ means that $A$ and $B$ use distinct top-level structural type constructors, such as $\plus$ and $\one$.
Also, $L \subseteq_{\cov} K$ iff $L \subseteq K$; and $L \subseteq_{\ctv} K$ iff $L \supseteq K$.}\label{fig:decide-poly}
\end{figure}%
Once again, these rules are interpreted inductively and are more clearly read top-down, from premises to conclusion.
Many of the rules are carried over from the decision procedure for structural subtyping of monomorphic types that was described in \cref{fig:decide-mono} (page~\pageref{fig:decide-mono}, \cref{sec:decide-mono}), with the addition of parameters and variances.
For example, the essential aspects of the $\arule{\tensor}$, $\arule{\plus}$, and $\arule{\jrule{MISMATCH-}}_\bot$ rules are unchanged from \cref{fig:decide-mono}.
We will detail a few of the other rules.

\paragraph{The $\arule{\jrule{COMPOSE-}}$ rule}
The most important difference between the rules of \cref{fig:decide-poly} and those of \cref{fig:decide-mono} is that it is now possible to infer judgments of the form $\asubtype[\xi]{t[\alphas] < u[\betas*]} \entails \asubtype[\xi']{\alpha < \beta}$.
These represent constraints that must hold of any instantiation $\asubtype[\xi]{t[\theta] < u[\phi]}$ of type constructors $t$ and $u$.
This idea is captured in the $\arule{\jrule{COMPOSE-}}$ rule:
Suppose that the first premise, $\asubtype[\xi]{t[\alphas] < u[\betas*]} \entails \asubtype[\xi']{t'[\theta'] < u'[\phi']}$, has already been inferred -- that is, that any derivation of $\dsubtype{{t[\theta]}{\subs} < {_\xi u[\phi]}{\subs*}}$ would necessarily contain a subderivation of $\dsubtype{{t'[\theta']}{\theta ; \subs} < {_{\xi'} u'[\phi']}{\phi ; \subs*}}$.
Furthermore, suppose that the premise $\asubtype[\xi']{t'[\alphas'] < u'[\betas*']} \entails \asubtype[\zeta]{\alpha' < \beta'}$ has already been inferred -- that is, that any derivation of $\dsubtype{{t'[\theta']}{\theta ; \subs} < {_{\xi'} u'[\phi']}{\phi ; \subs*}}$ would necessarily contain a subderivation of $\dsubtype{{\alpha'}{\theta' ; (\theta ; \subs)} < {_\zeta \beta'}{\phi' ; (\phi ; \subs*)}}$.
It then follows from the $\prule{\jrule{PARAM-}}$ rule and transitivity of containment that $\dsubtype{{\theta'(\alpha')}{\theta ; \subs} < {_\zeta \phi'(\beta')}{\phi ; \subs*}}$ would necessarily occur as a subderivation of $\dsubtype{{t[\theta]}{\subs} < {_\xi u[\phi]}{\subs*}}$, thereby justifying the
inference of $\asubtype[\xi]{t[\alphas] < u[\betas*]} \entails \asubtype[\zeta]{\theta'(\alpha') < \phi'(\beta')}$.

\paragraph{The $\arule{\jrule{PARAM-L-}}_\bot$ and $\arule{\jrule{PARAM-R-}}_\bot$ rules}
Suppose that the premise $\asubtype[\xi]{t[\alphas] < u[\betas*]} \entails \asubtype[\xi']{\alpha < \sigma}$ of the $\arule{\jrule{PARAM-L-}}_\bot$ rule has already been inferred, with $\sigma$ not a parameter $\beta$.
That is, any derivation of $\dsubtype{{t[\theta]}{\subs} < {_\xi u[\phi]}{\subs*}}$ would necessarily contain a subderivation of $\dsubtype{{\alpha}{\theta ; \subs} < {_{\xi'} \sigma}{\phi ; \subs*}}$.
However, because $\sigma$ is not a parameter, there is no rule in \cref{fig:parametric-subtyping} that could have derived that subderivation.
Therefore, no derivation of $\dsubtype{{t[\theta]}{\subs} < {_\xi u[\phi]}{\subs*}}$ can exist, thereby justifying the inference of $\asubtype[\xi]{t[\alphas] < u[\betas*]} \entails \bot$ by the $\arule{\jrule{PARAM-L-}}_\bot$ rule.
The $\arule{\jrule{PARAM-R-}}_\bot$ rule is analogous, as are the $\arule{\jrule{VAR-L-}}_\bot$ and $\arule{\jrule{VAR-R-}}_\bot$ rules.

\subsubsection{Example}
Returning to our running example of context-free languages, we can examine the inferences made by our algorithm to infer admissible subtyping rules for pairs of type constructors.  
By virtue of the $\arule{\jrule{INIT-}}$ rule, the following judgments, among others, will be inferred.
\begin{alignat*}{2}
  &(1)\enspace &&\asubtype[\cov]{\edyck[\kappa] < \dyck[\kappa']} \entails \asubtype[\cov]{\plus*{\leftn\colon \edyck[\name{r}[\kappa]] ,\, \rightn\colon \kappa} < \plus*{\leftn\colon \dyck[\dyck[\kappa']] ,\, \rightn\colon \kappa'}}
  \\
  &(2) &&\asubtype[\cov]{\name{r}[\kappa] < \dyck[\kappa']} \entails \asubtype[\cov]{\plus*{\rightn\colon \kappa} < \plus*{\leftn\colon \dyck[\dyck[\kappa']] ,\, \rightn\colon \kappa'}}
  \\
  &(3) &&\asubtype[\cov]{\edyck_0 < \dyck_0} \entails \asubtype[\cov]{\plus*{\leftn\colon \edyck[\name{end}] ,\, \$\colon \onen} < \plus*{\leftn\colon \dyck[\dyck_0] ,\, \$\colon \onen}}
  \\
  &(4) &&\asubtype[\cov]{\name{end} < \dyck_0} \entails \asubtype[\cov]{\plus*{\$\colon \onen} < \plus*{\leftn\colon \dyck[\dyck_0] ,\, \$\colon \onen}}
  \\
  &(5) &&\asubtype[\cov]{\onen < \onen} \entails \asubtype[\cov]{\one < \one}
\end{alignat*}
Because $\set{\rightn} \subseteq \set{\leftn,\rightn} \subseteq \set{\leftn, \rightn}$ and $\set{\$} \subseteq \set{\leftn,\$} \subseteq \set{\leftn,\$}$, the $\arule{\plus}$ rule then allows us to infer
\begin{align*}
  (6) \enspace &\asubtype[\cov]{\edyck[\kappa] < \dyck[\kappa']} \entails \asubtype[\cov]{\edyck[\name{r}[\kappa]] < \dyck[\dyck[\kappa']]}
  &
  (9) \enspace & \asubtype[\cov]{\edyck_0 < \dyck_0} \entails \asubtype[\cov]{\edyck[\name{end}] < \dyck[\dyck_0]}
  \\
  (7) \enspace &\asubtype[\cov]{\edyck[\kappa] < \dyck[\kappa']} \entails \asubtype[\cov]{\kappa < \kappa'}
  & \quad
  (10) \enspace &\asubtype[\cov]{\edyck_0 < \dyck_0} \entails \asubtype[\cov]{\onen < \onen}
  \\
  (8) \enspace &\asubtype[\cov]{\name{r}[\kappa] < \dyck[\kappa']} \entails \asubtype[\cov]{\kappa < \kappa'}
  &
  (11) \enspace &\asubtype[\cov]{\name{end} < \dyck_0} \entails \asubtype[\cov]{\onen < \onen}
\end{align*}
as necessary consequences of the initial judgments.
The $\arule{\jrule{COMPOSE-}}$ rule can be applied to $(6)$ and $(7)$, as well as to $(9)$ and $(7)$, to infer
\begin{align*}
  (12) \enspace &\asubtype[\cov]{\edyck[\kappa] < \dyck[\kappa']} \entails \asubtype[\cov]{\name{r}[\kappa] < \dyck[\kappa']}
  \\
  (13) \enspace &\asubtype[\cov]{\edyck_0 < \dyck_0} \entails \asubtype[\cov]{\name{end} < \dyck_0}
    \,.
\end{align*}
(The $\arule{\jrule{COMPOSE-}}$ rule could also be applied to $(12)$ and $(8)$ to infer $\asubtype[\cov]{\edyck[\kappa] < \dyck[\kappa']} \entails \asubtype[\cov]{\kappa < \kappa'}$, but that has already been inferred as $(7)$.)
At this point, saturation has been reached for all pairs of constructors above.
Because no such pair has had $\bot$ inferred as a consequence by the time saturation occurs, admissible subtyping rules for all such pairs do exist.
Collecting the respective atomic constraints, namely $(7)$ and $(8)$, we see that these admissible rules are
\begin{equation*}
  \infer{\asubtype[\cov]{\edyck[\kappa] < \dyck[\kappa']}}{
    \asubtype[\cov]{\kappa < \kappa'}}
  \:,\quad
  \infer{\asubtype[\cov]{\name{r}[\kappa] < \dyck[\kappa']}}{
    \asubtype[\cov]{\kappa < \kappa'}}
  \:,\quad
  \infer{\asubtype[\cov]{\edyck_0 < \dyck_0}}{}
  \:,\quad
  \infer{\asubtype[\cov]{\name{end} < \dyck_0}}{}
  \:,\enspace\text{\emph{and}}\quad
  \infer{\asubtype[\cov]{\onen < \onen}}{}
  \:.
\end{equation*}

\subsubsection{Phase 2: Backward proof construction}

Having inferred the most general admissible parametric rule for each pair of type constructors, the second, backward proof construction phase begins.
For this phase, we introduce a judgment $\asubtype[\xi]{\tau{} < \sigma{}}$.
Given a parametric subtyping problem $\asubtype[\xi]{\tau{} < \sigma{}}$, the types $\tau$ and $\sigma$ are examined, searching for a derivation according to the rules of \cref{fig:decide-poly:backward}. 
\begin{figure}
  \small
  \begin{gather*}
    \infer[\brule{\jrule{COMPOSE-}}]{\asubtype[\xi]{t[\theta] < u[\phi]}}{
      \asubtype[\xi]{t[\alphas] < u[\betas]} \nentails \bot &
      \forall (\asubtype[\xi]{t[\alphas] < u[\betas]} \entails \asubtype[\zeta]{\alpha < \beta})\colon
        \asubtype[\zeta]{\theta(\alpha) < \phi(\beta)}}
    \qquad\!\!
    \infer[\brule{\jrule{VAR-}}]{\asubtype[\xi]{x < x}}{}
  \end{gather*}
  \caption{Backward proof construction rules for phase 2 of deciding parametric subtyping for parametric polymorphism ({\normalfont\scshape b} for `backward').
These rules are interpreted inductively.}\label{fig:decide-poly:backward}
\end{figure}

Because parameters can only appear free within type definitions, neither $\tau$ nor $\sigma$ can be a parameter.
Also, a type variable is a subtype of only itself, so there are no rules for $\asubtype[\xi]{x < u[\phi]}$ nor $\asubtype[\xi]{t[\theta] < x}$.
If $\tau$ is $t[\theta]$ and $\sigma$ is $u[\phi]$, then the algorithm checks that $\asubtype[\xi]{t[\alphas] < u[\betas*]} \entails \bot$ has \emph{not} been inferred during the forward-inference phase.
Provided that is the case, then all atomic constraints $\asubtype[\xi]{t[\alphas] < u[\betas*]} \entails \asubtype[\zeta]{\alpha < \beta}$ are gathered.
Backward proof construction continues by building derivations of each $\asubtype[\zeta]{\theta(\alpha) < \phi(\beta)}$ that corresponds to an atomic constraint on $t[\alphas]$ and $u[\betas*]$.
Backward proof construction terminates here because the types become smaller.

\subsubsection{Example}
Given the admissible rules for the context-free languages example that were inferred in the first phase of the algorithm, we may conclude that $\dsubtype{{\edyck_0}{} < {\dyck_0}{}}$ holds.
Moreover, backward proof construction over these admissible rules can be used to decide other subtyping problems.
For example, $\asubtype[\cov]{\edyck[\edyck[\edyck_0]] < \dyck[\dyck[\dyck_0]]}$ can be confirmed when backward proof construction builds the following derivation.
\begin{equation*}
  \infer[\brule{\jrule{COMPOSE-}}]{\asubtype[\cov]{\edyck[\edyck[\edyck_0]] < \dyck[\dyck[\dyck_0]]}}{
    \infer[\brule{\jrule{COMPOSE-}}]{\asubtype[\cov]{\edyck[\edyck_0] < \dyck[\dyck_0]}}{
      \infer[\brule{\jrule{COMPOSE-}}]{\asubtype[\cov]{\edyck_0 < \dyck_0}}{}}}
  \hspace{-3em}
\end{equation*}

\subsection{Correctness of the decision procedure for parametric subtyping}\label{sec:decide-poly:correctness}

The algorithm described above is both sound and complete with respect to the declarative characterization of parametric subtyping given in \cref{fig:parametric-subtyping}.
We give only sketches of the proofs here; details can be found in the extended version of this paper~\cite{DeYoung+:arXiv23}.

Following the pattern laid out for monomorphic types, the proof of soundness relies on the following key lemma that generalizes \cref{lem:mono-sound}.
\begin{restatable}{lemma}{polysoundlem}\label{lem:poly-sound}
  Given a saturated database:
  \begin{enumerate}
  \item If $\asubtype[\xi]{t[\alphas] < u[\betas]} \entails \asubtype[\xi']{\tau < \sigma}$; $\asubtype[\xi]{t[\alphas] < u[\betas]} \nentails \bot$; and $\dsubtype[i]{\alpha{\subs} < {_\zeta \beta}{\subs*}}$ for each $\asubtype[\xi]{t[\alphas] < u[\betas]} \entails \asubtype[\zeta]{\alpha < \beta}$; then $\dsubtype[i]{\tau{\subs} < {_{\xi'} \sigma}{\subs*}}$.

  \item If $\asubtype[\xi]{t[\alphas] < u[\betas]} \entails \asubtype[\xi']{A < B}$; $\asubtype{t[\alphas] < u[\betas]}{\xi} \nentails \bot$; and $\dsubtype[i]{\alpha{\subs} < {_\zeta \beta}{\subs*}}$ for each $\asubtype[\xi]{t[\alphas] < u[\betas]} \entails \asubtype[\zeta]{\alpha < \beta}$;  then $\dsubtype[i]{A{\subs} < {_{\xi'} B}{\subs*}}$.
  \end{enumerate}
\end{restatable}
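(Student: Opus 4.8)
\emph{Proof plan.}
This \lcnamecref{lem:poly-sound} generalizes \cref{lem:mono-sound}, whose proof was a plain coinduction; the new ingredient needed here is a nested induction, because type parameters and the $\arule{\jrule{COMPOSE-}}$ rule mean that the ``restart'' of the argument at an unfolded constructor is no longer immediate. The plan is to prove both parts simultaneously by a lexicographic mixed coinduction and induction in the style of \citet{Danielsson+Altenkirch:MPC10}: an outer coinduction on the possibly-infinite parametric subtyping derivation being constructed, and, nested within it, an induction on the size of the left-hand type ($\tau$ in part~1, $A$ in part~2), where size is measured \emph{without} unfolding definitions. Throughout, saturation of the database is what turns a ``missing'' consequence into a contradiction with $\asubtype[\xi]{t[\alphas] < u[\betas]} \nentails \bot$; and the coinductive appeals will all be guarded by an intervening $\prule{\jrule{INST-}}$ step, exactly as the $\srule{\jrule{UNF-}}$ steps guard the infinite structural derivations in the examples of \cref{sec:structural:examples}.

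First I would establish a small auxiliary invariant by induction on the forward-inference derivation: whenever $\asubtype[\xi]{t[\alphas] < u[\betas]} \entails \asubtype[\xi']{\tau < \sigma}$ with $t[\alphas] \defd A$ and $u[\betas] \defd B$, the type $\tau$ is a subterm of $A$ and $\sigma$ a subterm of $B$ (without unfolding). This is essentially the observation behind termination (\cref{thm:mono-saturation}), and it is what keeps the nested induction well-founded and makes the inductive measure reset correctly at each $\prule{\jrule{INST-}}$ step. Then I would build the derivation top-down by cases on the shapes of $\tau, \sigma$ (resp.\ $A, B$), using the abbreviation $\dsubtype{\,\cdot\,{\subs} < {_\xi \cdot}{\subs*}}$ so that the $\xi' = \ctv$ cases mirror the $\xi' = \cov$ ones. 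For the structural-decomposition cases of part~2 --- say $A = \tau_1 \tensor \tau_2$ and $B = \sigma_1 \tensor \sigma_2$ --- I would apply the matching parametric rule ($\prule{\tensor}$) and appeal to part~1 for each immediate subformula; the corresponding forward rule ($\arule{\tensor}$) guarantees $\asubtype[\xi]{t[\alphas] < u[\betas]} \entails \asubtype[\xi']{\tau_i < \sigma_i}$ is already present, with the \emph{same} $t, u, \xi, \subs, \subs*$ and satisfaction hypothesis, so this is an inductive appeal because $\tau_i$ is a strict subterm of $A$; if a side condition could fail (e.g.\ $L \nsubseteq_{\xi'} K$, or clashing top-level constructors) then $\bot$ has already been inferred via $\arule{\plus\plus}_\bot$, $\arule{\with\with}_\bot$, or $\arule{\jrule{MISMATCH-}}_\bot$, so the case is vacuous. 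For the leaf cases of part~1: if $\tau = \alpha$ and $\sigma = \beta$ are both parameters then $\asubtype[\xi]{t[\alphas] < u[\betas]} \entails \asubtype[\xi']{\alpha < \beta}$ is one of the atomic constraints and the satisfaction hypothesis supplies $\dsubtype{\alpha{\subs} < {_{\xi'}\beta}{\subs*}}$ outright; if $\tau = x = \sigma$ apply $\prule{\jrule{VAR-}}$; and every other leaf shape has already triggered a $\arule{\jrule{PARAM-L-}}_\bot$, $\arule{\jrule{PARAM-R-}}_\bot$, $\arule{\jrule{VAR-L-}}_\bot$, or $\arule{\jrule{VAR-R-}}_\bot$ rule, contradicting $\asubtype[\xi]{t[\alphas] < u[\betas]} \nentails \bot$.

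The interesting case, and the one I expect to be the main obstacle, is the unfolding case of part~1: $\tau = t'[\theta']$ and $\sigma = u'[\phi']$. Here I apply $\prule{\jrule{INST-}}$, leaving the goal $\dsubtype{{A'}{\theta' ; \subs} < {_{\xi'}B'}{\phi' ; \subs*}}$ with $t'[\alphas'] \defd A'$ and $u'[\betas'] \defd B'$; this is the coinductive step, discharged by appealing to part~2 with the consequence $\asubtype[\xi']{t'[\alphas'] < u'[\betas']} \entails \asubtype[\xi']{A' < B'}$ (from $\arule{\jrule{INIT-}}$), the extended stacks $(\theta' ; \subs)$ and $(\phi' ; \subs*)$, and a \emph{restarted} satisfaction hypothesis. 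The side condition $\asubtype[\xi']{t'[\alphas'] < u'[\betas']} \nentails \bot$ holds, since otherwise $\arule{\jrule{COMPOSE-}}_\bot$ applied to the already-inferred $\asubtype[\xi]{t[\alphas] < u[\betas]} \entails \asubtype[\xi']{t'[\theta'] < u'[\phi']}$ would give $\asubtype[\xi]{t[\alphas] < u[\betas]} \entails \bot$. The delicate point is the restarted satisfaction hypothesis: for each atomic constraint $\asubtype[\xi']{t'[\alphas'] < u'[\betas']} \entails \asubtype[\zeta]{\alpha' < \beta'}$ I must produce a derivation of $\dsubtype{\alpha'{\theta' ; \subs} < {_\zeta \beta'}{\phi' ; \subs*}}$; by $\prule{\jrule{PARAM-}}$ this reduces to $\dsubtype{{\theta'(\alpha')}{\subs} < {_\zeta \phi'(\beta')}{\subs*}}$, and by $\arule{\jrule{COMPOSE-}}$ (applied to $\asubtype[\xi]{t[\alphas] < u[\betas]} \entails \asubtype[\xi']{t'[\theta'] < u'[\phi']}$ and the atomic constraint above) saturation guarantees $\asubtype[\xi]{t[\alphas] < u[\betas]} \entails \asubtype[\zeta]{\theta'(\alpha') < \phi'(\beta')}$ is in the database. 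This is an instance of part~1 with the \emph{same} $t, u, \xi, \subs, \subs*$ and satisfaction hypothesis, so I may invoke the induction hypothesis --- and the auxiliary invariant together with the fact that $\theta'(\alpha')$ is a component of the substitution $\theta'$ occurring inside $t'[\theta']$ shows $\theta'(\alpha')$ is a strict subterm of $\tau = t'[\theta']$, so the inductive measure strictly decreases. Since every non-coinductive appeal strictly shrinks the left-hand type while the only non-decreasing appeal, the one through $\prule{\jrule{INST-}}$, is guarded by that rule, the mixed induction/coinduction is well-founded and the object constructed is a legitimate (possibly infinite) parametric subtyping derivation. The rest is bookkeeping: threading variances through the $\imp$, $\plus$, and $\with$ cases with $\lnot$ and $\subseteq_\xi$; handling $\forall$ and $\exists$ with fresh variables exactly as in \cref{fig:parametric-subtyping}; and checking that each $\ctv$ goal is the mirror of an already-handled $\cov$ goal under the $\dsubtype{\,\cdot\,{\subs} < {_\xi \cdot}{\subs*}}$ abbreviation.
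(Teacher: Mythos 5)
Your proposal is correct, and the derivation it constructs is the same one the paper builds: the same case analysis on the shapes of the types, the same use of saturation ($\arule{\jrule{INIT-}}$ to restart at a new constructor pair, $\arule{\jrule{COMPOSE-}}_\bot$ to rule out $\bot$ for that pair, $\arule{\jrule{COMPOSE-}}$ plus $\prule{\jrule{PARAM-}}$ to discharge the restarted atomic constraints, and the $\bot$-producing rules to make the mismatch cases vacuous under the hypothesis $\asubtype[\xi]{t[\alphas] < u[\betas]} \nentails \bot$). Where you genuinely diverge is in the justification of the recursion: the paper's proof is a plain mutual coinduction on the derivations being produced, and it discharges the restarted satisfaction subgoals $\dsubtype{{\theta'(\alpha')}{\subs} < {_{\zeta} \phi'(\beta')}{\subs*}}$ by a \emph{second} coinductive appeal, argued to be guarded because those derivations end up below the $\prule{\jrule{INST-}}$ and $\prule{\jrule{PARAM-}}$ nodes; you instead discharge them by a nested structural induction, observing that $\theta'(\alpha')$ is a proper subterm of $t'[\theta']$, and reserve coinduction (and guardedness) for the single genuinely unfolding appeal through $\prule{\jrule{INST-}}$. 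This is the same lexicographic mixed-induction/coinduction style the paper itself uses for the soundness of parametric subtyping with respect to structural subtyping, and it buys a cleaner productivity argument, since the satisfaction derivations are arguments to the corecursive call rather than output emitted under a constructor, so claiming guardedness for them (as the paper's sketch does) requires more care than claiming termination of a size-decreasing inner recursion. One small remark: your auxiliary subformula invariant is not actually needed for the measure to decrease --- the only fact used is the immediate proper-subterm relation between $\theta'(\alpha')$ and $t'[\theta']$ (and between the $\tau_i$ and $A$ in the structural cases) --- so you could drop it, though keeping it does no harm and mirrors the paper's termination argument.
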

\begin{proof}[Proof sketch]
  By mutual coinduction on the (potentially infinite) derivations of $\dsubtype[i]{\tau{\subs} < {_{\xi'} \sigma}{\subs*}}$ and $\dsubtype[i]{A{\subs} < {_{\xi'} B}{\subs*}}$.
\end{proof}
\noindent
Soundness then follows by structural induction on the derivation using the rules of \cref{fig:decide-poly:backward} that was built by backward proof construction.
\begin{restatable}[Soundness]{theorem}{polysoundthm}\label{thm:sound-poly}
  If\/ $\asubtype[\xi]{\tau < \sigma}$,
  then $\dsubtype{\tau{\subs} < {_\xi \sigma}{\subs*}}$ for all stacks $\subs$ and $\subs*$.
\end{restatable}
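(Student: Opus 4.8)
The plan is to prove \cref{thm:sound-poly} by structural induction on the finite derivation of $\asubtype[\xi]{\tau < \sigma}$ produced by phase~2 (\cref{fig:decide-poly:backward}), keeping the universal quantification over the stacks $\subs$ and $\subs*$ \emph{inside} the induction hypothesis (which is exactly how the statement is phrased, and is what makes the induction go through). There are only two rules in \cref{fig:decide-poly:backward}, hence only two cases. In the $\brule{\jrule{VAR-}}$ case we have $\tau = \sigma = x$, and for arbitrary $\subs,\subs*$ we must derive $\dsubtype{x{\subs} < {_\xi x}{\subs*}}$; unfolding the polarity abbreviation leaves a goal of the shape $\dsubtype{x{\subs} < x{\subs*}}$ (with the two sides swapped when $\xi = \ctv$), which is immediate by the $\prule{\jrule{VAR-}}$ rule of \cref{fig:parametric-subtyping}.

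The interesting case is $\brule{\jrule{COMPOSE-}}$, where $\tau = t[\theta]$ and $\sigma = u[\phi]$ with $t[\alphas] \defd A$ and $u[\betas*] \defd B$. Its premises give that $\asubtype[\xi]{t[\alphas] < u[\betas*]} \nentails \bot$ in the saturated database, together with, for every atomic constraint $\asubtype[\xi]{t[\alphas] < u[\betas*]} \entails \asubtype[\zeta]{\alpha < \beta}$, a strictly smaller subderivation of $\asubtype[\zeta]{\theta(\alpha) < \phi(\beta)}$. Fix $\subs$ and $\subs*$. Since $\prule{\jrule{INST-}}$ is the only rule that can conclude $\dsubtype{t[\theta]{\subs} < {_\xi u[\phi]}{\subs*}}$ — in either polarity, reading the rule in the appropriate direction after unfolding the abbreviation — it suffices to derive $\dsubtype{A{\theta ; \subs} < {_\xi B}{\phi ; \subs*}}$. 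By the $\arule{\jrule{INIT-}}$ rule the saturated database contains $\asubtype[\xi]{t[\alphas] < u[\betas*]} \entails \asubtype[\xi]{A < B}$, so \cref{lem:poly-sound}(2) — instantiated with $\xi' := \xi$ and with the stacks $\theta ; \subs$ and $\phi ; \subs*$ — reduces the goal to showing, for each atomic constraint $\asubtype[\xi]{t[\alphas] < u[\betas*]} \entails \asubtype[\zeta]{\alpha < \beta}$, that $\dsubtype{\alpha{\theta ; \subs} < {_\zeta \beta}{\phi ; \subs*}}$. By the $\prule{\jrule{PARAM-}}$ rule, which pops $\theta$ and $\phi$ off the stacks (and again works uniformly in both polarities), this follows in turn from $\dsubtype{\theta(\alpha){\subs} < {_\zeta \phi(\beta)}{\subs*}}$ — and that is precisely an instance of the induction hypothesis applied to the subderivation of $\asubtype[\zeta]{\theta(\alpha) < \phi(\beta)}$ at the stacks $\subs$ and $\subs*$.

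The substantive content is concentrated not in this argument but in \cref{lem:poly-sound}, which we are permitted to assume; its proof is the mixed induction/coinduction over the potentially infinite declarative derivations, with the attendant bookkeeping of the substitution stacks, and I expect that lemma to be the main obstacle. Within the theorem proper, the two points that require care are: keeping the quantification over $\subs,\subs*$ inside the induction, so that in the $\brule{\jrule{COMPOSE-}}$ case the hypothesis can be re-applied at exactly the stacks $\subs,\subs*$ that remain after $\prule{\jrule{INST-}}$ and $\prule{\jrule{PARAM-}}$ have pushed and then popped $\theta$ and $\phi$; and verifying that $\prule{\jrule{INST-}}$ and $\prule{\jrule{PARAM-}}$ behave uniformly under the polarity annotations $_\xi$ and $_\zeta$ — which, when the variance is $\ctv$, amounts to nothing more than swapping the two sides of the judgment before applying the rule.
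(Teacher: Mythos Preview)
Your proof is correct and follows essentially the same approach as the paper: structural induction on the phase-2 derivation, with the $\brule{\jrule{VAR-}}$ case handled immediately by $\prule{\jrule{VAR-}}$ and the $\brule{\jrule{COMPOSE-}}$ case handled by combining $\arule{\jrule{INIT-}}$, the induction hypothesis plus $\prule{\jrule{PARAM-}}$ for each atomic constraint, \cref{lem:poly-sound}(2), and $\prule{\jrule{INST-}}$. The only difference is presentational --- you unwind the goal backward from $\prule{\jrule{INST-}}$ through the lemma to the atomic constraints, whereas the paper assembles the pieces forward --- but the logical content is identical.
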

\noindent
Completeness also requires a lemma, but then follows by structural induction on the type $\tau$.
\begin{restatable}{lemma}{polycompletelem}\label{lem:poly-complete}\leavevmode
  \begin{enumerate}[noitemsep]
  \item If $\dsubtype[i]{{t[\theta]}{\subs} < {_\xi u[\phi]}{\subs*}}$ and  $\asubtype[\xi]{t[\alphas] < u[\betas]} \entails \asubtype[\xi']{A < B}$, then $\dsubtype{A{\theta ; \subs} < {_{\xi'} B}{\phi ; \subs*}}$.
  \item If $\dsubtype[i]{{t[\theta]}{\subs} < {_\xi u[\phi]}{\subs*}}$ and $\asubtype[\xi]{t[\alphas] < u[\betas]} \entails \asubtype[\xi']{\tau < \sigma}$, then $\dsubtype{\tau{\theta ; \subs} < {_{\xi'} \sigma}{\phi ; \subs*}}$.
  \item If $\dsubtype[i]{{t[\theta]}{\subs} < {_\xi u[\phi]}{\subs*}}$, then $\asubtype[\xi]{t[\alphas] < u[\betas]} \nentails \bot$.
  \end{enumerate}
\end{restatable}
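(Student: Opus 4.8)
The plan is to prove parts~1 and~2 simultaneously by induction on the finite derivation of the forward-inference judgment appearing in the hypothesis (recall that the rules of \cref{fig:decide-poly} are interpreted inductively), and then to prove part~3 separately by induction on the finite derivation of the putative judgment $\asubtype[\xi]{t[\alphas] < u[\betas]} \entails \bot$, using parts~1 and~2 as already-established facts. The guiding intuition, following the monomorphic development of \cref{sec:decide-mono}, is that a forward-inference derivation traces a finite path into the (potentially infinite) declarative derivation of $\dsubtype[i]{{t[\theta]}{\subs} < {_\xi u[\phi]}{\subs*}}$: the rule $\arule{\jrule{INIT-}}$ corresponds to one application of $\prule{\jrule{INST-}}$; the structural rules $\arule{\tensor}$, $\arule{\plus}$, $\arule{\imp}_1$, $\arule{\imp}_2$, $\arule{\with}$, $\arule{\forall}$, $\arule{\exists}$ each correspond to inverting the matching declarative structural rule; and $\arule{\jrule{COMPOSE-}}$ corresponds to threading through a $\prule{\jrule{INST-}}$ followed by a $\prule{\jrule{PARAM-}}$. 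Because the forward-inference derivation is finite, this inversion of the coinductive derivation is carried out only finitely often -- so the approximant index $i$ in the hypothesis decreases only finitely far through the inversions and causes no trouble -- and this finiteness is precisely what lets completeness go through.

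For part~1, the only forward rule whose right-hand side is a structural-type judgment is $\arule{\jrule{INIT-}}$, so necessarily $\xi' = \xi$, $t[\alphas] \defd A$, and $u[\betas] \defd B$; since $t[\theta]$ and $u[\phi]$ are constructor instantiations, the declarative derivation of $\dsubtype[i]{{t[\theta]}{\subs} < {_\xi u[\phi]}{\subs*}}$ can only end in $\prule{\jrule{INST-}}$, whose premise, after unfolding the variance abbreviation, is exactly $\dsubtype{A{\theta ; \subs} < {_\xi B}{\phi ; \subs*}}$. For part~2 I would case-analyze the last forward rule. In each structural-decomposition case the premise is a structural-type judgment, so part~1 supplies the declarative subtyping of the relevant $A$ and $B$ under $\theta ; \subs$ and $\phi ; \subs*$; inverting the matching declarative rule then reads off the required subtyping of the named subformulas, with the substitution stacks threaded unchanged -- and with the variance flipped via $\lnot$ in the $\arule{\imp}_1$ case, matching the contravariant first premise of $\prule{\imp}$. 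The side conditions $L \subseteq_{\xi'} K$ and $\ell \in L \cap K$ of $\arule{\plus}$ (and the dual conditions for $\arule{\with}$) line up with the label-inclusion premise of $\prule{\plus}$ once the variance is unfolded; and in the $\arule{\forall}$ and $\arule{\exists}$ cases one simply takes the fresh variable $z$ to be the same one used in inverting $\prule{\forall}$ or $\prule{\exists}$, noting that substitutions act only on parameters and hence fix $z$.

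The one genuinely interesting case is $\arule{\jrule{COMPOSE-}}$, whose premises are $\asubtype[\xi]{t[\alphas] < u[\betas]} \entails \asubtype[\xi']{t'[\theta'] < u'[\phi']}$ and $\asubtype[\xi']{t'[\alphas'] < u'[\betas']} \entails \asubtype[\zeta]{\alpha' < \beta'}$ and whose conclusion is $\asubtype[\xi]{t[\alphas] < u[\betas]} \entails \asubtype[\zeta]{\theta'(\alpha') < \phi'(\beta')}$. Applying the induction hypothesis (part~2) to the first premise yields $\dsubtype{{t'[\theta']}{\theta ; \subs} < {_{\xi'} u'[\phi']}{\phi ; \subs*}}$; applying it again, with this derivation, to the second premise yields $\dsubtype{{\alpha'}{\theta' ; (\theta ; \subs)} < {_\zeta \beta'}{\phi' ; (\phi ; \subs*)}}$; and since the only declarative rule relating two parameters is $\prule{\jrule{PARAM-}}$, inverting it -- which pops $\theta'$ and $\phi'$ off the two stacks and applies them -- produces $\dsubtype{{\theta'(\alpha')}{\theta ; \subs} < {_\zeta \phi'(\beta')}{\phi ; \subs*}}$, exactly the conclusion. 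The bookkeeping here -- that the postponed substitution $\theta'$ correctly relates ``$\alpha'$ under the stack $\theta' ; \theta ; \subs$'' to ``$\theta'(\alpha')$ under the stack $\theta ; \subs$'' -- is the content of the $\prule{\jrule{INST-}}$/$\prule{\jrule{PARAM-}}$ pair, and getting this substitution-and-stack arithmetic right (here and in $\arule{\jrule{COMPOSE-}}_\bot$ below) is what I expect to be the main obstacle; the remaining cases are routine inversions.

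Finally, for part~3 I would assume, toward a contradiction, that $\asubtype[\xi]{t[\alphas] < u[\betas]} \entails \bot$ holds in addition to $\dsubtype[i]{{t[\theta]}{\subs} < {_\xi u[\phi]}{\subs*}}$, and induct on the derivation of $\asubtype[\xi]{t[\alphas] < u[\betas]} \entails \bot$, case-analyzing its last rule. Each $\bot$-concluding rule has, besides its side condition, a premise that is a forward consequence -- $\asubtype[\xi]{t[\alphas] < u[\betas]} \entails \asubtype[\xi']{A < B}$ or $\asubtype[\xi]{t[\alphas] < u[\betas]} \entails \asubtype[\xi']{\tau < \sigma}$ -- and parts~1 and~2 turn that premise into a declarative subtyping at the instance determined by $\theta ; \subs$ and $\phi ; \subs*$. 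The side condition then contradicts the structure-directedness of the rules of \cref{fig:parametric-subtyping}: $\arule{\plus\plus}_\bot$ and $\arule{\with\with}_\bot$ contradict the label-inclusion premises of $\prule{\plus}$ and $\prule{\with}$; $\arule{\jrule{MISMATCH-}}_\bot$ contradicts the fact that no declarative rule relates structural types with distinct top-level constructors; $\arule{\jrule{PARAM-L-}}_\bot$ and $\arule{\jrule{PARAM-R-}}_\bot$ contradict the absence of any declarative rule relating a parameter to a non-parameter; and $\arule{\jrule{VAR-L-}}_\bot$ and $\arule{\jrule{VAR-R-}}_\bot$ contradict the absence of any declarative rule relating a type variable to a distinct type. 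In the $\arule{\jrule{COMPOSE-}}_\bot$ case, part~2 applied to the first premise produces a declarative subtyping of an instance of $(t', u')$, so the induction hypothesis (part~3) applied to the subderivation of the second premise, $\asubtype[\xi']{t'[\alphas'] < u'[\betas']} \entails \bot$, yields $\asubtype[\xi']{t'[\alphas'] < u'[\betas']} \nentails \bot$ -- a contradiction, completing the argument.
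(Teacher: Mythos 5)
Your proposal is correct and follows essentially the same route as the paper: part~1 by inversion on the single rule ($\arule{\jrule{INIT-}}$) producing structural-type consequences together with inversion of $\prule{\jrule{INST-}}$; part~2 by induction on the forward-inference derivation, using part~1 in the structural-decomposition cases and, in the $\arule{\jrule{COMPOSE-}}$ case, two appeals to the induction hypothesis followed by inversion of $\prule{\jrule{PARAM-}}$; and part~3 by induction on the derivation of $\entails \bot$, generalized so that the declarative derivation of the instance plus the $\bot$-judgment yield a contradiction, with the $\arule{\jrule{COMPOSE-}}_\bot$ case handled via part~2 and the induction hypothesis. Framing parts~1 and~2 as a simultaneous induction rather than proving part~1 directly is a cosmetic difference only, since part~1 has a single, IH-free case.
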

\begin{proof}[Proof sketch]
  Each part is proved as follows.
  \begin{enumerate}
  \item Directly, noting that only the $\arule{\jrule{INIT-}}$ rule can derive $\asubtype[\xi]{t[\alphas] < u[\betas]} \entails \asubtype[\xi']{A < B}$.
  \item By induction on the finite derivation of $\asubtype[\xi]{t[\alphas] < u[\betas]} \entails \asubtype[\xi']{\tau < \sigma}$.
  \item We generalize the lemma to show that $\dsubtype[i]{{t[\theta]}{\subs} < {_\xi u[\phi]}{\subs*}}$ and $\asubtype[\xi]{t[\alphas] < u[\betas]} \entails \bot$ together imply a meta-contradiction.
    This is proved by induction on the finite derivation of $\asubtype[\xi]{t[\alphas] < u[\betas]} \entails \bot$.
  \qedhere
  \end{enumerate}
\end{proof}

\begin{restatable}[Completeness]{theorem}{polycomplete}\label{thm:complete-poly}
  If $\dsubtype{\tau{\subs} < {_\xi \sigma}{\subs*}}$, then $\asubtype[\xi]{\applysubs{\subs}{\tau} < \applysubs{\subs*}{\sigma}}$.
  As a particular case, $\dsubtype{\tau{\subse} < {_\xi \sigma}{\subse}}$ implies $\asubtype[\xi]{\tau < \sigma}$.
\end{restatable}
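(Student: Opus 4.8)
The plan is to prove the general, stack-carrying statement — that $\dsubtype{\tau{\subs} < {_\xi \sigma}{\subs*}}$ implies $\asubtype[\xi]{\applysubs{\subs}{\tau} < \applysubs{\subs*}{\sigma}}$ — from which the displayed particular case is immediate by taking $\subs = \subs* = \subse$ and using $\applysubs{\subse}{\tau} = \tau$. Since $\tau$ and $\sigma$ are \emph{named} types, $\tau$ is one of only three things — a type variable $x$, a parameter $\alpha$, or a type constructor instantiation $t[\theta]$ — so the case analysis is short. I would carry the two variances uniformly by working throughout with the abbreviation $\dsubtype{\tau{\subs} < {_\xi \sigma}{\subs*}}$ (with which \cref{lem:poly-complete} is already phrased), so that the co- and contravariant subcases of each case collapse. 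The argument itself is by well-founded induction, and the measure I would use is lexicographic: primarily the size of the \emph{closed} named type $\applysubs{\subs}{\tau}$, and secondarily the height of the substitution stack $\subs$. (In spirit this is the ``structural induction on $\tau$'' of the sketch, but the secondary component is what makes it go through.) The induction is on this syntactic measure, \emph{not} on the possibly infinite parametric subtyping derivation, so each step inverts only one rule of \cref{fig:parametric-subtyping}.

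Next I would run the three cases. If $\tau = x$: inversion forces $\sigma = x$ (only $\prule{\jrule{VAR-}}$ applies, for either $\xi$), and since substitution stacks never act on type variables, $\applysubs{\subs}{x} = \applysubs{\subs*}{x} = x$, so $\brule{\jrule{VAR-}}$ gives $\asubtype[\xi]{x < x}$. If $\tau = \alpha$: derivability forces $\subs$ to be non-empty, and inversion via $\prule{\jrule{PARAM-}}$ pops the top substitution off each stack, forces $\sigma$ to be a parameter $\beta$, and leaves a subderivation whose two compared closed types are unchanged — because $\applysubs{(\theta ; \subs_0)}{\alpha} = \applysubs{\subs_0}{\theta(\alpha)}$, and symmetrically on the right — but whose stacks are strictly shorter, so the induction hypothesis applied to it is literally the goal. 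If $\tau = t[\theta]$: derivability forces $\sigma = u[\phi]$ and, via $\prule{\jrule{INST-}}$, a subderivation of $\dsubtype{A{\theta ; \subs} < {_\xi B}{\phi ; \subs*}}$ where $t[\alphas] \defd A$ and $u[\betas*] \defd B$; then \cref{lem:poly-complete}(3) yields $\asubtype[\xi]{t[\alphas] < u[\betas*]} \nentails \bot$, and for each atomic constraint $\asubtype[\xi]{t[\alphas] < u[\betas*]} \entails \asubtype[\zeta]{\alpha < \beta}$ present at saturation, \cref{lem:poly-complete}(2) yields $\dsubtype{\alpha{\theta ; \subs} < {_\zeta \beta}{\phi ; \subs*}}$. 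Here the closed compared type $\applysubs{(\theta ; \subs)}{\alpha} = (\applysubs{\subs}{\theta})(\alpha)$ is a proper subterm of $\applysubs{\subs}{t[\theta]} = t[\applysubs{\subs}{\theta}]$ (writing $\applysubs{\subs}{\theta}$ for the componentwise application of $\subs$ to $\theta$), so the primary measure strictly decreases and the induction hypothesis delivers $\asubtype[\zeta]{(\applysubs{\subs}{\theta})(\alpha) < (\applysubs{\subs*}{\phi})(\beta)}$. These are exactly the premises the $\brule{\jrule{COMPOSE-}}$ rule requires with substitutions $\applysubs{\subs}{\theta}$ and $\applysubs{\subs*}{\phi}$ (the universally quantified premise being vacuous when there are no atomic constraints between $t$ and $u$), so it derives $\asubtype[\xi]{t[\applysubs{\subs}{\theta}] < u[\applysubs{\subs*}{\phi}]}$, which is $\asubtype[\xi]{\applysubs{\subs}{t[\theta]} < \applysubs{\subs*}{u[\phi]}}$ as desired.

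The hard part will be getting the induction to be genuinely well-founded. The tension is that the $\prule{\jrule{PARAM-}}$ case keeps the compared closed type fixed while shortening the stack, whereas the $\prule{\jrule{INST-}}$ case can lengthen the stack but strictly shrinks the compared type, so neither the type alone nor the stack alone is a decreasing measure — the lexicographic pairing is essential, and this is the main structural difference from the monomorphic completeness argument. The bookkeeping I would want to pin down first is how postponed substitutions compose, i.e.\ the identities $\applysubs{(\theta ; \subs)}{\alpha} = \applysubs{\subs}{\theta(\alpha)} = (\applysubs{\subs}{\theta})(\alpha)$ and $\applysubs{\subs}{t[\theta]} = t[\applysubs{\subs}{\theta}]$, since these are precisely what make the induction hypotheses line up with the premises of $\brule{\jrule{COMPOSE-}}$ and $\brule{\jrule{VAR-}}$. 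Everything else — the variance flips and the one-step inversions — is routine once \cref{lem:poly-complete} is in hand.
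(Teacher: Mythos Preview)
Your proposal is correct and follows essentially the same route as the paper: invoke \cref{lem:poly-complete}(3) for the first premise of $\brule{\jrule{COMPOSE-}}$, invoke \cref{lem:poly-complete}(2) for each atomic constraint, and recurse on the resulting smaller arguments. The only packaging difference is that the paper performs the $\prule{\jrule{PARAM-}}$ inversion \emph{inside} the $t[\theta]$ case---so the inductive hypothesis is invoked at $\theta(\alpha)$ with the \emph{same} stack $\subs$, and the induction can be phrased as plain structural induction on $\tau$---whereas you apply the hypothesis one step earlier (at $\alpha$ with the longer stack $\theta;\subs$) and then let your explicit parameter case do the pop, which is why you need the secondary stack-height component of the measure. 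Your version is arguably the more careful one: the paper's plain induction on $\tau$ silently relies on $\theta(\alpha)$ not itself being a parameter, and your lexicographic measure is exactly what is needed to close that residual case.
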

\noindent
We must also prove that forward inference and backward proof construction terminate.
\begin{theorem}[Termination]
  Forward inference and backward proof construction according to the rules of \cref{fig:decide-poly,fig:decide-poly:backward}, respectively, terminate.
\end{theorem}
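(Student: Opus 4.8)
The plan is to establish termination separately for the two phases, since they are governed by different arguments. For \textbf{phase 1 (forward inference)}, the strategy mirrors the proof of \cref{thm:mono-saturation}: bound the set of judgments that can ever be inferred, then note that forward inference over an inductively-interpreted rule set is a monotone saturation process over that set and so reaches a fixed point in finitely many steps. The ``heads'' $\asubtype[\xi]{t[\alphas] < u[\betas*]}$ range over a finite set, since $\sig$ declares finitely many type constructors and there are two variances. For the consequents, I would fix once and for all a finite set $\mathcal{N}_0$ of named types — namely the named-type subformulas occurring in the bodies of all definitions in $\sig$, augmented with all type-constructor parameters and closed under taking subterms (in particular, arguments of type-constructor instantiations) and under a fixed deterministic renaming of outermost bound variables (finitely many, since there are finitely many quantifier occurrences of bounded nesting depth) — and, symmetrically, the finite set of structural-type subformulas of those bodies.

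The core of the phase-1 argument is the invariant that every judgment inferred by the rules of \cref{fig:decide-poly} of the form $\asubtype[\xi]{t[\alphas] < u[\betas*]} \entails \asubtype[\xi']{\tau < \sigma}$ has $\tau, \sigma \in \mathcal{N}_0$, and every one of the form $\asubtype[\xi]{t[\alphas] < u[\betas*]} \entails \asubtype[\xi']{A < B}$ has $A$ and $B$ among the structural subformulas above; this I would prove by induction on the forward-inference derivation. For $\arule{\jrule{INIT-}}$, the decomposition rules ($\arule{\tensor}$, $\arule{\plus}$, $\arule{\imp}$, $\arule{\with}$), and the quantifier rules ($\arule{\forall}$, $\arule{\exists}$, using the deterministic freshness convention), the invariant is immediate. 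The one rule that genuinely threatens it — and hence the main obstacle — is $\arule{\jrule{COMPOSE-}}$, the only rule that applies a substitution and so the only place a blow-up could occur. The observation that resolves it is that the consequent's named type $\theta'(\alpha')$ is not freshly constructed: writing $\theta' = \tau_1/\alpha_1', \dots, \tau_n/\alpha_n'$, we have $\theta'(\alpha_i') = \tau_i$, which occurs syntactically inside the named type $t'[\theta']$ appearing in the rule's first premise. By the induction hypothesis $t'[\theta'] \in \mathcal{N}_0$, and $\mathcal{N}_0$ is closed under subterms, so $\theta'(\alpha') \in \mathcal{N}_0$; likewise $\phi'(\beta') \in \mathcal{N}_0$. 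Thus composition stays within the finite world, only finitely many judgments are derivable, and saturation follows.

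For \textbf{phase 2 (backward proof construction)}, the argument is a well-founded-measure argument, as already anticipated informally in the presentation of phase 2. Assign to a goal $\asubtype[\xi]{\tau < \sigma}$ the measure $|\tau| + |\sigma|$, where $|\cdot|$ counts type-constructor symbols in a finite named type. Rule $\brule{\jrule{VAR-}}$ is a leaf, and in rule $\brule{\jrule{COMPOSE-}}$ each subgoal $\asubtype[\zeta]{\theta(\alpha) < \phi(\beta)}$ involves arguments $\theta(\alpha)$ of $t[\theta]$ and $\phi(\beta)$ of $u[\phi]$ — each a proper subterm — so $|\theta(\alpha)| + |\phi(\beta)| < |t[\theta]| + |u[\phi]|$; moreover the number of such subgoals is at most $|\alphas| \cdot |\betas*| \cdot 2$, since each corresponds to an atomic constraint $\asubtype[\xi]{t[\alphas] < u[\betas*]} \entails \asubtype[\zeta]{\alpha < \beta}$ in the finite, saturated database. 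Hence the search tree has strictly decreasing measure along every branch and is finitely branching, so it is finite: backward proof construction terminates whether it ultimately succeeds or fails (the latter when $\asubtype[\xi]{t[\alphas] < u[\betas*]} \entails \bot$ has been inferred, or when a goal such as $\asubtype[\xi]{t[\theta] < x}$ or $\asubtype[\xi]{x < u[\phi]}$ admits no rule).
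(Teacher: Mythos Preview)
Your proof is correct and follows essentially the same approach as the paper's: bounding phase~1 by showing inductively that all inferred consequents lie in a finite subformula set (with the $\arule{\jrule{COMPOSE-}}$ case handled exactly as you do, by observing that $\theta'(\alpha')$ is a syntactic subterm of $t'[\theta']$), and bounding phase~2 by the decrease in type size. The only presentational difference is in the treatment of fresh variables for $\arule{\forall}$/$\arule{\exists}$: you fix a deterministic freshness convention, whereas the paper identifies inferred judgments up to $\alpha$-renaming of their free type variables; these are equivalent devices for the same finiteness argument.
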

\begin{proof}[Proof sketch]
  Finitely many definitions $t[\alphas] \defd A$ and $u[\betas*] \defd B$ can be drawn from a given set $\sig$ of definitions and combined with one of two variances, so the $\arule{\jrule{INIT-}}$ rule infers only finitely many judgments $\asubtype[\xi]{t[\alphas] < u[\betas*]} \entails \asubtype[\xi]{A < B}$.
  By induction, we can show that each of the other rules found in \cref{fig:decide-poly}, including the $\arule{\jrule{COMPOSE-}}$ rule, will infer a judgment $\asubtype[\xi]{t[\alphas] < u[\betas*]} \entails \asubtype[\xi']{\tau < \sigma}$ only if $\tau$ and $\sigma$ are proper subformulas of $A$ and $B$.
  Moreover, $A$ and $B$ have finitely many subformulas (without unfolding definitions).

  But special care needs to be taken with the $\arule{\forall}$ and $\arule{\exists}$ rules.
  Because there is an infinite supply of fresh variables, it might seem like these rules could be applied to a given judgment, such as $\asubtype[\xi]{t[\alphas] < u[\betas*]} \entails \asubtype[\xi']{\forall x.\tau < \forall y.\sigma}$, an infinite number of times, inferring judgments $\asubtype[\xi]{t[\alphas] < u[\betas*]} \entails \asubtype[\xi']{[z/x]\tau < [z/y]\sigma}$ and $\asubtype[\xi]{t[\alphas] < u[\betas*]} \entails \asubtype[\xi']{[z'/x]\tau < [z'/y]\sigma}$ and so on.
  However, we treat inferred judgments as equivalent up to $\alpha$-renaming of their free type variables, which are implicitly universally quantified over the judgment.
  Therefore, up to $\alpha$-renaming, the $\arule{\forall}$ and $\arule{\exists}$ rules only infer a single judgment per distinct premise.
  This kind of subsumption of $\alpha$-equivalent judgments is a standard assumption in resolution-based saturation procedures~\cite{Robinson:JACM65}.

  Because only finitely many judgments can be inferred, forward inference in phase~1 must terminate.
  Backward proof construction in phase~2 terminates because its recursion occurs at successive subformulas.
\end{proof}
\noindent
Saturating forward inference facilitates sharing of inferred subtyping constraints across branches.
Because of issues with non-regular type constructors (see \cref{sec:structural:examples}), this is essential to capture non-regularity.
It is thus unclear if the algorithm could be recast in a recursive functional way.

\subsection{Nonrecursive type abbreviations}

Consider the type constructors $t[\alpha] \defd \alpha \tensor \one$ and $u[\beta] \defd \one \tensor \beta$.
Purely structurally, $t[\one]$ would be a subtype of $u[\one]$, with both $t[\one]$ and $u[\one]$ unfolding to $\one \tensor \one$ because type constructors are treated transparently by the $\srule{\jrule{UNF-}}$ rule.
However, when constructors are treated parametrically, this relationship does \emph{not} hold: $\dsubtype{\alpha{\one/\alpha ; \subs} < \one{\one/\beta ; \subs*}}$ and $\dsubtype{\one{\one/\alpha ; \subs} < \beta{\one/\beta ; \subs*}}$ violate parametricity.

If we would like certain definitions to act as mere abbreviations that are (conceptually) always expanded, then our system can easily accommodate this as long as those definitions are not recursive.
To integrate such nonrecursive type abbreviations into the declarative characterization of parametric subtyping, we could add the following rule and restrict the $\prule{\jrule{INST-}}$ to apply only when neither type constructor is a nonrecursive abbreviation.
\begin{equation*}
  \infer[\prule{\jrule{UNF-}}]{\dsubtype{{t[\theta]}{\subs} < {u[\phi]}{\subs*}}}{
    t[\alphas] \defd A & u[\betas] \defd B &
    (\text{$t$ or $u$ is nonrecursive abbrev.}) &
    \dsubtype{{\theta(A)}{\subs} < {\phi(B)}{\subs*}}}
\end{equation*}
Forward inference in phase~1 and backward proof construction in phase~2 of the decision procedure would similarly expand nonrecursive abbreviations.
It is easy to see that both phases still terminate.

\section{Implementation}\label{sec:implementation}

Our implementation in Standard ML is available in a virtual machine image~\cite{DeYoung+:Zenodo23}, and its source is available in an online repository~\cite{DeYoung+:Bitbucket23}. It provides
a syntax for defining types, deriving parametric inference
rules, and checking subtyping.  All examples from this paper are
available in the file \verb|examples/paper.poly| in the VM image and source repository.  Because of mutual recursion, the
implementation proceeds in two phases: first checking basic
consistency and normalizing types, thereby introducing
additional, internal definitions.  The second
phase executes the saturation algorithm 
(\cref{sec:decide-poly}) and answers queries by consulting the saturated
database.
There are only the following few minor points of departure from this paper.

\subsection{Nonrecursive type abbreviations.}\label{sec:implementation:nonrecursive}
The implementation allows
\emph{type definitions} (which are always treated parametrically) but also
explicit \emph{type abbreviations} which may be parameterized but must be nonrecursive.
These abbreviations are expanded structurally, as described above.
Depending on the larger language context, an alternative would be to treat
every nonrecursive type definition as an abbreviation.

\subsection{Elaboration to normal form.}  Before elaboration, every
definition in $\Sigma$ has the form $t[\alphas] \defd A$
where $A$ is structural (which
guarantees contractiveness) but may not be in normal form.  We map each
such definition to $t[\alphas] \defd A^*$, using the auxiliary
translation $A^\dagger$ in which $A$ need not be structural.  In the
process, we may introduce further, internal definitions.
\begin{equation*}
  \begin{aligned}
    (A_1 \tensor A_2)^* &= A_1^\dagger \tensor A_2^\dagger
    \\[-\jot]
    (\one)^* &= \one
    \\[-\jot]
    \smash{(\plus*[\ell \in L]{\ell\colon A_\ell})^*}\vphantom{A_\ell^\dagger} &= \smash{\plus*[\ell \in L]{\ell\colon A_\ell^\dagger}}\vphantom{A_\ell^\dagger}
    \\[-\jot]
    (A_1 \imp A_2)^* &= A_1^\dagger \imp A_2^\dagger
    \\[-\jot]
    (\with*[\ell \in L]{\ell\colon A_\ell})^* &= \with*[\ell \in L]{\ell\colon A_\ell^\dagger}
    \\[-\jot]
    (\forall x.A)^* &=
      \forall x.\, [x/\alpha]([\alpha/x]A)^\dagger \text{, with $\alpha$ fresh}
    \\[-\jot]
    (\exists x.A)^* &=
      \exists x.\, [x/\alpha]([\alpha/x]A)^\dagger \text{, with $\alpha$ fresh}
  \end{aligned}
  \quad
  \begin{aligned}
    (t[\theta])^\dagger &= t[\theta^\dagger]
    \\[-\jot]
    (\alpha)^\dagger &= \alpha
    \\[-\jot]
    (A)^\dagger &= t[\alphas] \begin{lgathered}[t]
        \text{ for $A$ structural,} \\[-\jot]
        \text{ where $\alphas = \mathsf{free}(A)$} \\[-\jot]
        \text{ and $\Sigma \coloneqq \Sigma, t[\alphas] \defd A^*$}
      \end{lgathered}
  \end{aligned}
\end{equation*}
Here, $\theta^\dagger$ is defined pointwise.  Computing the free type
parameters $\mathsf{free}(A)$
avoids creating internal definitions with unnecessary parameters.
Also, notice that, during elaboration, quantified type variables $x$ become parameters $\alpha$ that are then instantiated with the corresponding variable $x$ after normalization.
Therefore, no case for $(x)^\dagger$ is needed.
We can also obtain additional sharing (and therefore faster convergence of the
saturation algorithm) in the clause for $(A)^\dagger$ by reusing a
definition $u[\alphas]$ if $u[\alphas] \defd A^*$ is already in the
definitions $\Sigma$ (modulo renaming of the parameters).

\subsection{Indexing in the database.}  In the implementation, we
combine all facts
$\asubtype[\xi]{t[\alphas] < u[\betas*]} \entails \dotsb$ for a given
$t$, $u$, and $\xi$ into a single entry
$\asubtype[\xi]{t[\alphas] < u[\betas*]} \entails \mathcal{C}$, where
constraints $\mathcal{C}$ are given by 
\begin{alignat*}{2}
  \text{\emph{Constraints}} & \quad &
   \mathcal{C} &\Coloneqq \bot \mid \asubtype[\zeta]{\alpha_i < \beta_j} \mid \top \mid \mathcal{C}_1 \land \mathcal{C}_2
  \,.
\end{alignat*}
We keep constraints $\mathcal{C}$ in a normal form where no entries are
repeated and any conjunction with $\bot$ is reduced to $\bot$ alone.  This
facilitates efficient lookup and detection of saturation.

\section{Examples}\label{sec:examples}

\subsection{Lists}

The polymorphic type of lists of elements of a type $\alpha$, as well as the types of empty and nonempty lists, can be defined as follows.
\begin{alignat*}{3}
  &&\qquad&&
  \elistn &\defd \plus*{\nil\colon \one}
  \\
  \listn[\alpha] &\defd \plus*{\nil\colon \one ,\, \cons\colon \alpha \tensor \listn[\alpha]}
  &&&
  \nelistn[\alpha] &\defd \plus*{\cons\colon \alpha \tensor \listn[\alpha]}
\end{alignat*}
By running our saturation algorithm and examining the atomic constraints on $\listn[\alpha]$ and itself, we can verify that $\listn[\alpha]$ is covariant in $\alpha$; we can similarly verify that $\nelistn[\alpha]$ is covariant in $\alpha$.
Moreover, our algorithm confirms the parametric rules $\dsubtype{\elistn{} < {\listn[\alpha]}{}}$, as well as $\dsubtype{{\nelistn[\alpha]}{} < {\listn[\beta]}{}}$ if $\dsubtype{\alpha{} < \beta{}}$.
And lists of nonempty lists of even natural numbers are, more generally, lists of lists of natural numbers, and our algorithm confirms that $\dsubtype{{\listn[\nelistn[\even]]}{} < {\listn[\listn[\nat]]}{}}$, for example.

\subsection{Binary trees and spines}

\subsubsection{Trees}
Similarly to lists, the polymorphic type of binary trees of $\beta$s can be defined as follows.
\begin{equation*}
  \tree[\beta] \defd \plus*{\leaf\colon \one ,\, \node\colon \beta \tensor (\tree[\beta] \tensor \tree[\beta])}
\end{equation*}
Thus, a tree of $\beta$s is either $\leaf$ with the unit value, or $\node$ with a tuple of an element of type $\beta$ and the left and right subtrees.
Our saturation algorithm verifies that $\tree[\beta]$ is covariant in $\beta$.

Similar to those for lists, the types of empty and nonempty trees of $\beta$s are subtypes of $\tree[\beta]$:
\begin{align*}
  \name{etree} &\defd \plus*{\leaf\colon \one}
  \\
  \name{netree}[\beta] &\defd \plus*{\node\colon \beta \tensor (\tree[\beta] \tensor \tree[\beta])}
\end{align*}

\subsubsection{Spines}\label{sec:examples:trees:spines}
Because the left and right spines of a tree are essentially lists, we might at first expect to have $\dsubtype{{\listn[\tau]}{} < {\tree[\sigma]}{}}$ whenever $\dsubtype{\tau{} < \sigma{}}$.
However, even if we were to coordinate the label names across the two types, that subtyping relationship would \emph{not} hold because it would require $\dsubtype{{\listn[\alpha]}{\tau/\alpha} < {(\tree[\beta] \tensor \tree[\beta])}{\sigma/\beta}}$ -- that is, it would require a (nonempty) sum type to be a subtype of a product type.
Indeed, saturation 
yields $\asubtype[\cov]{\listn[\alpha] < \tree[\beta]} \entails \bot$ for that reason.

Instead, we could define a type of left spines as follows;
right spines would be symmetric.
\begin{align*}
  \lspine[\alpha] &\defd \plus*{\leaf\colon \one ,\, \node\colon \alpha \tensor (\lspine[\alpha] \tensor \name{etree})}
\end{align*}
With this definition, we \emph{do} have $\dsubtype{{\lspine[\tau]}{} < {\tree[\sigma]}{}}$ when $\dsubtype{\tau{} < \sigma{}}$ because the product type $\lspine[\alpha] \tensor \name{etree}$ is a subtype of $\tree[\beta] \tensor \tree[\beta]$ under $\tau/\alpha$ and $\sigma/\beta$.

\subsubsection{Object-oriented lists and trees}
On a related note, we could take a more object-oriented approach to lists and trees, using record types instead of eager products:
\begin{alignat*}{2}
  \olistn[\alpha] &\defd \with*{
                           \name{out}\colon \plus*{\none\colon \one ,\, \some\colon \alpha \tensor \with*{\fst\colon \olistn[\alpha] &&}}
                           ,\,
                           \name{size}\colon \nat}
  \\
  \otree[\beta] &\defd \with*{
                         \name{out}\colon \plus*{\none\colon \one ,\, \some\colon \beta \tensor \with*{\fst\colon \otree[\beta] ,\, \snd\colon \otree[\beta] &&}}
                           ,\,
                           \name{size}\colon \nat}
\end{alignat*}
Even purely structurally, $\dsubtype{{\olistn[\tau]}{} < {\otree[\sigma]}{}}$ does not hold when $\dsubtype{\tau{} < \sigma{}}$, but $\dsubtype{{\otree[\sigma]}{} < {\olistn[\tau]}{}}$ does when $\dsubtype{\sigma{} < \tau{}}$ and is in the parametric fragment.
This is somewhat counterintuitive, but nevertheless the correct relationship: any context that expects a list can use a tree's spine instead.

\subsubsection{Perfect binary trees}

Taking advantage of the support for nested types, we can adapt \citeauthor{Bird+Meertens:MPC98}'s prototypical example of perfect binary trees~\shortcite{Bird+Meertens:MPC98}.
\begin{equation*}
  \name{perfect}[\alpha] \defd \plus*{\leaf\colon \one ,\, \node\colon \alpha \tensor \name{perfect}[\alpha \tensor \alpha]}
\end{equation*}
Our algorithm confirms that $\name{perfect}[\alpha]$ is covariant in $\alpha$.
However, even purely structurally, $\name{perfect}[\tau]$ is \emph{not} a subtype of $\tree[\sigma]$ for any $\tau$ and $\sigma$.
That would require $\name{perfect}[\tau \tensor \tau]$ to be a subtype of $\tree[\sigma] \tensor \tree[\sigma]$, which cannot be: $\name{perfect}[\tau \tensor \tau]$ is a variant record type, whereas $\tree[\sigma] \tensor \tree[\sigma]$ is a product type.
Essentially, the difference amounts to one between breadth-first and depth-first representations of trees.
However, the lack of a subtyping relationship does not mean that the type $\name{perfect}[\alpha]$ is unusable: given the support for nested types, an explicit coercion from $\name{perfect}[\alpha]$ to $\tree[\alpha]$ could still be written.

\subsection{Serialized binary trees and spines}\label{sec:examples:serialized}

\subsubsection{Serialized binary trees}
Here we adapt an example from \citet{Thiemann+Vasconcelos:ICFP16} and consider it in the context of subtyping:
We may sometimes wish to serialize a binary tree to send it across the network or write it to a file.
A type that describes serialized trees is $\stree[\alpha,\kappa]$, parameterized by both the type of data elements, $\alpha$, and a suffix (or continuation) type, $\kappa$:
\begin{equation*}
  \stree[\alpha,\kappa] \defd \plus*{\leaf\colon \kappa ,\, \node\colon \alpha \tensor \stree[\alpha, \stree[\alpha, \kappa]]}
  \,.
\end{equation*}
According to this type, a serialized tree is a list of $\leaf$ and $\node$ labels.
A $\leaf$ is followed by a suffix of type $\kappa$;
a $\node$ is followed by the pair of the tree's root element of type $\alpha$ and the serialization of the left subtree, which itself is followed by the serialization of the right subtree.\footnote{Strictly speaking, this is not a true serialization due to the inclusion of a product type.
However, as there is no uniform way of serializing polymorphic data, this is as near to a true serialization as is possible.
Moreover, for concrete instances of $\tree$, such as with $\nat$ data, it is possible to give true serializations that inline the serialization of their data elements:
  $\name{snattree}[\kappa] \defd \plus*{\leaf\colon \kappa ,\, \node\colon \name{snat}[\name{snattree}[\name{snattree}[\kappa]]]}$, where $\name{snat}[-]$ is defined as in \cref{sec:parametric:declarative}.
}
This type crucially depends on nested types to express the invariant that $\stree[\alpha,\kappa]$ describes preorder traversals of binary trees.
Our saturation algorithm verifies that $\stree[\alpha,\kappa]$ is covariant in both $\alpha$ and $\kappa$.

Although unrelated to subtyping concerns, it is interesting to observe that the above type definition can, in fact, be \emph{derived} syntactically by repeatedly applying type isomorphisms to the definition $\stree[\alpha,\kappa] \defd \tree[\alpha] \tensor \kappa$:
\begin{align*}
  \stree[\alpha,\kappa]
    &\defd \tree[\alpha] \tensor \kappa
    = \plus*{\leaf\colon \one ,\, \node\colon \alpha \tensor (\tree[\alpha] \tensor \tree[\alpha])} \tensor \kappa \\
    &\simeq \plus*{\leaf\colon \one \tensor \kappa ,\, \node\colon \alpha \tensor (\tree[\alpha] \tensor (\tree[\alpha] \tensor \kappa))} \\
    &\simeq \plus*{\leaf\colon \kappa ,\, \node\colon \alpha \tensor (\tree[\alpha] \tensor \stree[\alpha,\kappa])} \\
    &\simeq \plus*{\leaf\colon \kappa ,\, \node\colon \alpha \tensor \stree[\alpha, \stree[\alpha,\kappa]]}
      \,.
\end{align*}

None of the isomorphic types in this sequence are mutual subtypes, however.
In particular, although the types $\stree[\alpha, \kappa]$ and $\tree[\alpha] \tensor \kappa$ are isomorphic, we do \emph{not} have $\tree[\tau]$ as a subtype of $\stree[\tau, \one]$, nor vice versa, for any $\tau$.
Comparing the $\leaf$ branches of both types, we see that, for these parametric subtyping relationships to hold, $\dsubtype{\one{} < \kappa{}}$ and $\dsubtype{\kappa{} < \one{}}$ must hold for \emph{all} types $\kappa$, regardless of the fact that $\stree[\tau, \one]$ ultimately instantiates $\kappa$ with $\one$.
This is simply not true when $\kappa$ is, for example, either $\plus*{}$ or $\one \tensor \one$.

Once again, the absence of subtyping relationships does not mean that the type $\stree[\alpha,\kappa]$ cannot be related to $\tree[\alpha] \tensor \kappa$.
Given a term language with support for nested types, it would still be possible to write explicit coercions between these types to serialize and deserialize trees.

\subsubsection{Serialized spines}
As an extension of this example, we can also define a type constructor $\name{sspine}[\alpha, \kappa]$ to describe serialized (left) spines:
\begin{alignat*}{2}
  \name{sspine}[\alpha,\kappa]
    &\defd {} & \lspine[\alpha] \tensor \kappa
    &\simeq \plus*{\leaf\colon \kappa ,\, \node\colon \alpha \tensor \name{sspine}[\alpha, \name{setree}[\kappa]]}
  \\
  \name{setree}[\kappa]
    &\defd {} & \name{etree} \tensor \kappa
    &\simeq \plus*{\leaf\colon \kappa}
  \,.
\end{alignat*}
The subtyping relationship between (left) spines and trees is preserved under serialization: we have the parametric rule $\dsubtype{{\name{sspine}[\alpha, \kappa]}{} < {\stree[\beta, \kappa']}{}}$ if both $\dsubtype{\alpha{} < \beta{}}$ and $\dsubtype{\kappa{} < {\kappa'}{}}$, as our algorithm confirms.
Notice that the inclusion of $\name{setree}[\kappa]$ and its $\plus*{\leaf\colon \kappa}$ is essential here.
Had we instead used the definition $\name{sspine}[\alpha,\kappa] \defd \plus*{\leaf\colon \kappa ,\, \node\colon \alpha \tensor \name{sspine}[\alpha, \kappa]}$, there would be no subtyping relationship because $\stree[\beta, \kappa']$ would have one more $\tensor$ than $\name{sspine}[\alpha,\kappa]$.

\subsection{Total functions and generalized tries on binary trees and spines}\label{sec:examples:tries}

\subsubsection{Total functions on trees}
We can use the following type to describe \emph{total} functions from $\alpha$~trees to $\beta$s.
As for serialized trees, this definition is derivable by repeatedly applying type isomorphisms to $\tree[\alpha] \imp \beta$.
(Interestingly, this type is, in a sense, dual to that of serialized trees.)
\begin{align*}
  \trie[\alpha,\beta]
    &\defd \tree[\alpha] \imp \beta
    = \plus*{\leaf\colon \one ,\, \node\colon \alpha \tensor (\tree[\alpha] \tensor \tree[\alpha])} \imp \beta \\
    &\simeq \with*{\leaf\colon \one \imp \beta ,\, \node\colon \alpha \tensor (\tree[\alpha] \tensor \tree[\alpha]) \imp \beta} \\
    &\simeq \with*{\leaf\colon \beta ,\, \node\colon \alpha \imp (\tree[\alpha] \imp (\tree[\alpha] \imp \beta))} \\
    &\simeq \with*{\leaf\colon \beta ,\, \node\colon \alpha \imp (\tree[\alpha] \imp \trie[\alpha,\beta])} \\
    &\simeq \with*{\leaf\colon \beta ,\, \node\colon \alpha \imp \trie[\alpha, \trie[\alpha,\beta]]}
\end{align*}
Thus, an object of type $\trie[\alpha,\beta]$ offers two methods, $\leaf$ and $\node$.
To look up the value of a leaf, the $\leaf$ method is invoked, resulting in the associated value of type $\beta$.
To look up a nonempty tree, the $\node$ method is invoked with the root element of type $\alpha$, resulting in an object of type $\trie[\alpha,\trie[\alpha,\beta]]$.
Recursively, the left subtree is looked up in this object; its associated value is an object of type $\trie[\alpha,\beta]$.
Then, the right subtree is looked up in this object, and the value of type $\beta$ associated with the entire nonempty tree is ultimately returned.

This example makes essential use of a record type to represent the object's methods, but most importantly, nested types are crucial to expressing the higher-order nature of lookups.
In the usual way, our algorithm confirms that $\trie[\alpha,\beta]$ is contravariant in $\alpha$ and covariant in $\beta$.

\subsubsection{Total functions on spines}

In a similar way, we can derive a type definition for total functions on (left) spines, using the types $\lspine[\alpha]$ and $\name{etree}$ defined in \cref{sec:examples:trees:spines}.
\begin{alignat*}{2}
  \name{spinefn}[\alpha, \beta]
    &\defd {} & \lspine[\alpha] \imp \beta
    & \simeq \with*{\leaf\colon \beta ,\,
                   \node\colon \alpha \imp \name{spinefn}[\alpha, \name{etreefn}[\beta]]}
  \\
  \name{etreefn}[\beta] &\defd {} & \name{etree} \imp \beta &\simeq \with*{\leaf\colon \beta}
\end{alignat*}
Once again, the subtyping relationship between (left) spines and trees is respected: we have the rule $\dsubtype{{\trie[\alpha, \beta]}{} < {\name{spinefn}[\alpha',\beta']}{}}$ if both $\dsubtype{{\alpha'}{} < {\alpha}{}}$ and $\dsubtype{\beta{} < {\beta'}{}}$, as our algorithm confirms.
(Notice that the subtyping direction is reversed because of contravariance in the underlying function types.)

\subsubsection{Tries for trees and spines}
In prior work~\cite{Wadsworth:EATCS79,Connelly+Morris:MSCS95,Hinze:JFP00}, the trie data structure for lists and strings was generalized to represent partial (not total) functions on more complex algebraic structures, such as binary trees.
A type definition for tries from keys of type $\tree[\alpha]$ to values of type $\beta$ can be derived from $\ptrie[\alpha,\beta] \defd \tree[\alpha] \imp \option[\beta]$, where $\option[\beta] \defd \plus*{\none\colon \one ,\, \some\colon \beta}$.
\begin{align*}
  \ptrie[\alpha,\beta]
    &\defd \tree[\alpha] \imp \option[\beta]
    = \plus*{\leaf\colon \one ,\, \node\colon \alpha \tensor (\tree[\alpha] \tensor \tree[\alpha])} \imp \option[\beta] \\
    &\simeq \with*{\leaf\colon \one \imp \option[\beta] ,\, \node\colon \alpha \imp (\tree[\alpha] \imp (\tree[\alpha] \imp \option[\beta]))} \\
    &\simeq \with*{\leaf\colon \option[\beta] ,\, \node\colon \alpha \imp (\tree[\alpha] \imp \ptrie[\alpha,\beta])} \\
    &\simeq \with*{\leaf\colon \option[\beta] ,\, \node\colon \alpha \imp \trie[\alpha, \ptrie[\alpha,\beta]]}
\end{align*}
For example, tries representing sets of $\tau$ trees could be typed as $\ptrie[\tau,\one]$.
Our algorithm verifies that $\ptrie[\alpha,\beta]$ is contravariant in $\alpha$ and covariant in $\beta$.
The use of nested types here is consistent with \citeauthor{Wadsworth:EATCS79}'s observation~\shortcite{Wadsworth:EATCS79}.
It would also be possible to similarly define a type $\name{spinetrie}[\alpha,\beta]$ of tries for (left) spines; it would be equivalent to $\name{spinefn}[\alpha,\option[\beta]]$.

\subsection{Refined stacks}\label{sec:examples:stacks:refined}

The additional expressive power of nested types allows us to also define a refined type for stacks that tracks the stack's shape:
\begin{equation*}
  \rstack[\alpha,\kappa] \defd \with*{\pushn\colon \alpha \imp \rstack[\alpha, \,\some[\alpha \tensor \rstack[\alpha, \kappa]]] ,\,
                                      \popn\colon \kappa}
  \,,
\end{equation*}
where $\some[\alpha] \defd \plus*{\some\colon \alpha}$.
Here, the type parameter $\kappa$ serves as a continuation to be used when popping from the stack.
Pushing an element onto the stack extends this continuation to reflect the existence (but not the identity) of the newly pushed element.

But we do not have $\rstack[\tau,\sigma]$ as a \emph{parametric} subtype of $\stack[\tau]$ for any $\tau$, even when we are guaranteed that $\dsubtype{\sigma{} < {\option[\tau \tensor \stack[\tau]]}{}}$.
Nevertheless, it is possible to prove that $\rstack[\tau,\sigma]$ is a \emph{structural} subtype of $\stack[\tau]$ when $\dsubtype{\sigma{} < {\option[\tau \tensor \stack[\tau]]}{}}$, making this one example of how the parametric fragment is a proper fragment of structural subtyping.

\subsection{Abstract types}

We can take advantage of the $\forall$ and $\exists$ quantifiers to express ML-style module signatures as abstract types.
Here we define two types that serve as signatures for abstract lists, with constructors $x$ and $\alpha \tensor x \imp x$ (respectively, $\beta \tensor x \imp x$) for the empty list and list concatenation, parameterized by the type $\alpha$ (respectively, $\beta$) of list elements.
Both signatures include a $\fold$ function, and $\name{alist'}[\beta]$ also includes a $\size$ function.
\begin{alignat*}{2}
  \name{alist}[\alpha]
    &\defd \exists x.\,
             x \tensor (\alpha \tensor x \imp x) \tensor
             \with*{\fold\colon \forall z.\, z \imp (\alpha \tensor z \imp z) \imp x \imp z&&}
  \\
  \name{alist'}[\beta]
    &\defd \exists x.\,
             x \tensor (\beta \tensor x \imp x) \tensor
             \with*{\fold\colon \forall z.\, z \imp (\beta \tensor z \imp z) \imp x \imp z ,\,
                    \size\colon x \imp \nat&&}
\end{alignat*}
Our algorithm confirms the signature subtyping relationship that derives from the additional $\size$ function: $\dsubtype{{\name{alist'}[\beta]}{} < {\name{alist}[\alpha]}{}}$ if both $\dsubtype{\alpha{} < \beta{}}$ and $\dsubtype{\beta{} < \alpha{}}$.

\section{Related work}\label{sec:related-work}

We now give a brief overview of the related work that has not already been discussed.

\paragraph{Subtyping recursive types}

The nature and complexity of the subtyping problem vary considerably depending on whether types are interpreted \emph{nominally} or \emph{structurally}; for type equivalence, the change from structural to nominal interpretation in \emph{non-regular} types leads to a decrease in complexity from doubly-exponential to linear~\cite{Mordido+:PLDI23}. If recursive types are nominal, the subtyping problem is simpler but also very limited. However, even under a nominal interpretation of how types are defined, issues arise when datasort refinements are considered~\cite{Davies:PhD05,Dunfield+Pfenning:POPL04,Freeman+Pfenning:PLDI91}. 

Although we have a broader setting, our interest in structural subtyping was influenced by session types~\cite{Honda+:ESOP98,Caires+Pfenning:CONCUR10}, where types are traditionally interpreted structurally, equirecursively, and coinductively. Subtyping in session types has been mostly treated coinductively~\cite{Gay+Hole:Acta05,Silva+:CONCUR23} and constitutes a particular case of our system. 

Developments toward structural subtyping began much earlier, regardless of whether types were treated coinductively~\cite{Amadio+Cardelli:TOPLAS93,Gay+Hole:Acta05,Hosoya+:98} or in a mixed inductive/coinductive setting~\cite{Brandt+Henglein:FI98,Danielsson+Altenkirch:MPC10,Lakhani+:ESOP22,Ligatti+:TOPLAS17}. In fact, the notion of structural subtyping dates back to 1988, introduced by \citet{Cardelli:POPL88}, but suggested even before \cite{Cardelli:ISSDT84,Cardelli:LITP85,Reynolds:TAPSOFT85}. In this paper, we have not explored the presence of empty or full types~\cite{Ligatti+:TOPLAS17,Lakhani+:ESOP22}; we leave this analysis for future work (more details are provided in Section~\ref{subsec:remarks-forward-search}).

\paragraph{Subtyping polymorphic types}

In this paper, we chose to have a foundational approach, including the features strictly necessary to handle parametric datatypes in programming languages. \citet{Bird+Meertens:MPC98} and \citet{Hinze:JFP00} noted that implementing generalized tries required the use of nested datatypes and non-regular recursion, which is our setting for this paper. Other type systems have been developed to explore \emph{non-regular} data structures. In the context of session types, \citet{Thiemann+Vasconcelos:ICFP16} proposed context-free session types with predicative polymorphism, extended later with impredicative polymorphism~\cite{Almeida+:IC22}; subtyping was also explored~\cite{Silva+:CONCUR23}. Nested session types were proposed by~\citet{Das+:TOPLAS22} and were proved to be more expressive than context-free session types~\cite{Das+:TOPLAS22,Gay+:FoSSaCS22}.

Inspired by the structural nature of types, these works have focused on structural subtyping and equivalence relations.
For session types, the subtyping problem has been shown to be undecidable~\cite{Padovani:TOPLAS19,Silva+:CONCUR23}, even though the corresponding type equality problems are decidable~\cite{Almeida+:TACAS20,Das+:TOPLAS22,Solomon:POPL78}.
In Section \ref{sec:structural:undecidability}, we present the result more generally for type systems with record types, explicitly identifying sets of minimal features that guarantee the undecidability of subtyping.
The undecidability of the subtyping relation leads to the design of incomplete algorithms. The \emph{parametric subtyping} relation we propose allows us to both tackle the incompleteness problem and to understand exactly to what extent the previous relations are incomplete, distinguishing cases where parametricity is not satisfied from cases where types actually exhibit distinct behaviors and are therefore not in (any) subtyping relation. Parametricity materializes the idea that types behave uniformly for all possible instantiations. This notion was first proposed by \citet{Reynolds:IFIP83} for System F~\cite{Girard:PhD72}, further explored by \citet{Wadler:FPCA89} and then extended to nested types by \citet{Johann+Ghiorzi:LMCS21}. None of these works focused on the subtyping relation. The combination of parametricity and subtyping is the main contribution of our work through type constructors that map (subtyping-)related arguments to (subtyping-)related results, in a relation that we called \emph{parametric subtyping}.

Several works focus on mixing subtyping with (explicit) parametric polymorphism via \emph{bounded quantification} \cite{Cardelli+Wegner:CSUR85}. The most standard formulation is the second-order lambda-calculus with bounded quantification, $F_\leq$ \cite{Cardelli+:IC94}, but subtyping was proved to be undecidable~\cite{Pierce:IC94}, even without recursion. Several $F_\leq$ fragments have been identified as having a decidable subtyping relation \cite{Cardelli+Wegner:CSUR85,Castagna+Pierce:POPL94,Katiyar+Sankar:WML92,Mackay+:APLAS20}, also extended with recursion \cite{Abadi+:POPL96,Zhou+:POPL23} or higher-order polymorphism and polarized application~\cite{Steffen:PhD99}. System $F_{\leq}$  was the ground for many developments in OOP~\cite{Rompf+Amin:OOPSLA16}.
As none of the applications we want for our type system seem to require bounded polymorphism at its most fundamental core, we limit our setting to parametric polymorphism and explicit quantifiers.

Other work studies the interaction of subtyping with type inference and \emph{implicit} polymorphism, such as that of \citet{Dolan+Mycroft:POPL17} and \citet{Lepigre+Raffalli:TOPLAS19}.
As previously mentioned, even without recursive types or type constructors, subtyping for implicit polymorphism is already undecidable~\cite{Wells:BU95,Tiuryn+Urzyczyn:IC02}, meaning that implicit polymorphism would not be a good starting point for our study of non-regular type constructors and parametricity.
In addition to this difference, \citet{Dolan+Mycroft:POPL17} focus on generative, regular type constructors, whereas our work deals with structural, non-regular type constructors.

\paragraph{Semantic \emph{vs} algorithmic subtyping definitions}

Semantic typing and subtyping are favored in non-regular structural type systems over their declarative versions. Initially, semantic relations were motivated by the set-theoretic properties of types~\cite{Castagna+Frisch:PPDP05,Frisch+:LICS02,Lakhani+:ESOP22}, but for non-regular types the need for a semantic relation to model the behavior of types was even more natural, by means of simulations and bisimulations \cite{Gay+Hole:Acta05,Das+:TOPLAS22,Silva+:CONCUR23}.

Algorithmic approaches for \emph{monomorphic} or \emph{regular} (sub)typing systems, such as standard fixed-point algorithms \cite{Gay+Hole:Acta05}, sequent calculus \cite{Das+:TOPLAS22}, cyclic proofs \cite{Lakhani+:ESOP22,Brotherston+Simpson:JLC10}, step indexing \cite{Ahmed:PhD04,Ahmed:ESOP06,Appel+McAllester:TOPLAS01,Dreyer+:LICS09,Lakhani+:ESOP22}, sized types \cite{Abel+Pientka:JFP16} or bouncing threads \cite{Baelde+:LICS22}, are effective. However, these mechanisms are not scalable when we navigate beyond regular types. In section \ref{sec:structural:examples}, we illustrate the narrow scope of the above approaches and the limitations of their application to nested types. To limit the recursion depth, a recursion bound is usually used, leading to incompleteness \cite{Das+:TOPLAS22} or even unsoundness\footnote{\label{ft:typescript}TypeScript has an \emph{unsound} structural subtyping: \url{https://www.typescriptlang.org/docs/handbook/type-compatibility.html}}. The undecidability of structural subtyping for the non-monomorphic fragment gives us no hope of finding a sound and complete algorithm. In this paper, we free our type system from this limitation by proposing the novel notion of \emph{parametric subtyping}, which takes advantage of parametricity without completely abandoning structural subtyping. In an attempt to overcome the limitations of alternatives such as bouncing threads or cyclic proofs, we end up finding a sweet spot that takes advantage of saturation-based methods to perform forward-inference, often used in constraint solving \cite{Jaffar+Lassez:POPL87} and unification~\cite{Martelli+Montanari:TOPLAS82,Huet:PhD76}.

\section{Conclusion}

In this paper, we presented a theory of parametricity for type constructors that forms the basis for parametric subtyping, a decidable, practical, and expressive fragment of structural subtyping for parametric polymorphism.
Moreover, the saturation-based decision procedure has led to an effective implementation that performs well on a variety of practical examples.

One opportunity for future work is to extend our results to a mixed inductive/coinductive type system, as in call-by-push-value~\cite{Levy:PhD01}, that accounts for subtypings that rely on types being uninhabited or full, such as $\dsubtype{{\plus*{}}{} < \sigma{}}$ and $\dsubtype{\tau{} < {\plus*{} \imp \sigma}{}}$ for all $\tau$ and $\sigma$.
We do already have a prototype implementation that does so, but the decision procedure's theory and accompanying correctness proofs appear to be more complicated.
Other opportunities include extension to bounded quantification, integration with intersection and union types, and development of a notion of parametric subtyping for higher-order kinds.
\begin{acks}
  We wish to thank the anonymous reviewers for their very valuable feedback on our paper.
  Support for this research was provided by the Funda\c{c}\~{a}o para a Ci\^{e}ncia e a Tecnologia through the project SafeSessions (PTDC/CCI-COM/6453/2020) and the LASIGE Research Unit (UIDB/00408/2020 and UIDP/00408/2020).
\end{acks}
\bibliographystyle{ACM-Reference-Format}
\bibliography{popl24}

\clearpage
\appendix

\section{Proofs}\label{sec:appendix:proofs}

\undecidablewith*
\begin{proof}
  We will prove each direction separately, beginning with the left-to-right direction.
  Its proof is by coinduction on the (potentially infinite) derivation of $\dsubtype{{(\bpa{q}{t})}{} < {(\bpa{p}{t})}{}}$.
  We distinguish cases on the structure of $p$.
  \begin{description}
  \item[Case:]
    Consider the case in which $p = \bpaseq{X}{p_2}$, where $X \defd \bpasum*[\ell \in L]{(\bpaseq{\ell}{p'_\ell})}$ for some nonempty label set $L$ and some processes $(p'_\ell)_{\ell \in L}$ and $p_2$.
    Observe that $p \trans[\ell] \bpaseq{p'_\ell}{p_2}$ for all $\ell \in L$.
    Because $L$ is nonempty and $p$ is simulated by $q$, it follows that $q$ can make at least one transition.
    Therefore, $q = \bpaseq{Y}{q_2}$, where $Y \defd \bpasum*[k \in K]{(\bpaseq{k}{q'_k})}$ for some nonempty $K$ and some processes $(q'_k)_{k \in K}$ and $q_2$.
    The derivation of $\dsubtype{{(\bpa{q}{t})}{} < {(\bpa{p}{t})}{}}$ may therefore begin with
    \begin{equation*}
      \infer{\dsubtype{{t_Y[\bpa{q_2}{t}]}{} < {t_X[\bpa{p_2}{t}]}{}}}{
        \infer[\with]{\dsubtype{{\with*[k \in K]{k\colon \bpa{q'_k}*{\bpa{q_2}{t}}}}{} < {\with*[\ell \in L]{\ell\colon \bpa{p'_\ell}*{\bpa{p_2}{t}}}}{}}}{
          (L \subseteq K) &
          \forall \ell \in L\colon 
            \dsubtype{{(\bpa{q'_\ell}*{\bpa{q_2}{t}})}{} < {(\bpa{p'_\ell}*{\bpa{p_2}{t}})}{}}}}
    \end{equation*}
    Now we must establish the premises of the above $\with$ rule.

    Because $p$ is simulated by $q$, it follows that $L \subseteq K$ and $\bpaseq{p'_\ell}{p_2} \simu \bpaseq{q'_\ell}{q_2}$ for all $\ell \in L$.
    Appealing to the coinductive hypothesis, we have $\dsubtype{{(\bpa*{\bpaseq{q'_\ell}{q_2}}{t})}{} < {(\bpa*{\bpaseq{p'_\ell}{p_2}}{t})}{}}$ for all $\ell \in L$.
    These appeals are justified because they are guarded by the $?$ and $\with$ rules.
    Because $(\bpa*{\bpaseq{q'_\ell}{q_2}}{t}) = (\bpa{q'_\ell}*{\bpa{q_2}{t}})$ and $(\bpa*{\bpaseq{q'_\ell}{q_2}}{t}) = (\bpa{q'_\ell}*{\bpa{q_2}{t}})$ for all $\ell \in L$, this completes the required derivation of $\dsubtype{{(\bpa{q}{t})}{} < {(\bpa{p}{t})}{}}$.

  \item[Case:] 
    Consider the case in which $p = \bpaemp$.
    In this case, $(\bpa{p}{t}) = t$ and we must show that $\dsubtype{{(\bpa{q}{t})}{} < t{}}$.
    If $q = \bpaseq{Y}{q_2}$ for some $Y \defd \bpasum*[k \in K]{(\bpaseq{k}{q'_k})}$, nonempty $K$, and processes $(q'_k)_{k \in K}$ and $q_2$, then the following is a derivation of $\dsubtype{{(\bpa{q}{t})}{} < t{}}$.
    \begin{equation*}
      \infer{\dsubtype{{t_Y[\bpa{q_2}{t}]}{} < t{}}}{
        \infer[\with]{\dsubtype{{\with*[k \in K]{k\colon \bpa{q'_k}*{\bpa{q_2}{t}}}}{} < {\with*{}}{}}}{}}
    \end{equation*}
    Otherwise, if $q = \bpaemp$, then the following is a derivation of $\dsubtype{{(\bpa{q}{t})}{} < t{}}$.
    \begin{equation*}
      \infer{\dsubtype{t{} < t{}}}{
        \infer[\with]{\dsubtype{{\with*{}}{} < {\with*{}}{}}}{}}
    \end{equation*}
  \end{description}

  Next, we will prove the right-to-left direction; the proof is by coinduction on the similarity $p \simu q$.

  It suffices to show that $p \trans[a] p'$ implies $q \trans[a]\umis p'$ when $\dsubtype{{(\bpa{q}{t})}{} < {(\bpa{p}{t})}{}}$.
  Assume that $p \trans[a] p'$.
  Because $p$ has a transition, there must exist an equation $X \defd \bpasum*[\ell \in L]{(\bpaseq{\ell}{p'_\ell})}$ and a process $p_2$ such that $p = \bpaseq{X}{p_2}$ and $p' = \bpaseq{p'_a}{p_2}$, with $a \in L$.
  We will distinguish cases on the structure of $q$.
  \begin{description}
  \item[Case:] 
    Consider the case in which $q = \bpaseq{Y}{q_2}$, where $Y \defd \bpasum*[k \in K]{(\bpaseq{k}{q'_k})}$ for a nonempty label set $K$.
    Notice that $q \trans[a] \bpaseq{q'_a}{q_2}$.
    By inversion on the (potentially infinite) derivation of $\dsubtype{{(\bpa{q}{t})}{} < {(\bpa{p}{t})}{}}$, we have $L \subseteq K$ and $\dsubtype{{(\bpa{q'_\ell}*{\bpa{q_2}{t}})}{} < {(\bpa{p'_\ell}*{\bpa{p_2}{t}})}{}}$ for all $\ell \in L$.
    In particular, $\dsubtype{{(\bpa*{\bpaseq{q'_a}{q_2}}{t}) = (\bpa{q'_a}*{\bpa{q_2}{t}})}{} < {(\bpa{p'_a}*{\bpa{p_2}{t}}) = (\bpa*{\bpaseq{p'_a}{p_2}}{t})}{}}$.
    Appealing to the coinductive hypothesis, $\bpaseq{q'_a}{q_2} \simu \bpaseq{p'_a}{p_2}$.
    This appeal is valid because it is guarded by ??.

  \item[Case:]
    Consider the case in which $q = \bpaemp$.
    We have $\dsubtype{t{} < {(\bpa{p}{t})}{}}$.
    By inversion, we have $p = \bpaemp$.
    Therefore, the result is vacuously true in this case.
  \qedhere
  \end{description}
\end{proof}

\parasound*
\begin{proof}
  Using the mixed induction and coinduction proof technique described by \citet{Danielsson+Altenkirch:MPC10}.
  Specifically, the proof is by lexicographic mixed induction and coinduction, first by coinduction on the (potentially infinite) structural subtyping derivation, and then by induction on the finite substitution stack $\subs$.
  \begin{description}
  \item[Case:] 
    Consider the case in which the parametric subtyping derivation of $\dsubtype{\tau{\subs} < {\sigma}{\subs*}}$ begins with:
    \begin{equation*}
      \infer[\prule{\jrule{INST-}}]{\dsubtype{{t[\theta]}{\subs} < {u[\phi]}{\subs*}}}{
        t[\alphas] \defd A & u[\betas] \defd B &
        \dsubtype{A{\theta ; \subs} < B{\phi ; \subs*}}}
    \end{equation*}
    Because $\applysubs{\subs}{t[\theta]} = t[\subs \circ \theta]$ and $\applysubs{\subs*}{u[\phi]} = u[\subs* \circ \phi]$, the structural subtyping derivation may begin with
    \begin{equation*}
      \infer[\srule{\jrule{UNF-}}]{\dsubtype{{t[\subs \circ \theta]}{} < {u[\subs* \circ \phi]}{}}}{
        \dsubtype{{(\subs \circ \theta)(A)}{} < {(\subs* \circ \phi)(B)}{}}}
    \end{equation*}
    To construct a derivation of $\dsubtype{{(\subs \circ \theta)(A)}{} < {(\subs* \circ \phi)(B)}{}}$, we may appeal to the coinductive hypothesis.
    Despite the substitution stack becoming larger, this appeal is valid because it is guarded by the above $\srule{\jrule{UNF-}}$ rule.

  \item[Case:]
    Consider the case in which the derivation of $\dsubtype{\tau{\subs} < {\sigma}{\subs*}}$ begins with:
    \begin{equation*}
      \infer[\prule{\jrule{PARAM-}}]{\dsubtype{\alpha{\theta ; \subs'} < \beta{\phi ; \subs*'}}}{
        \dsubtype{{\theta(\alpha)}{\subs'} < {\phi(\beta)}{\subs*'}}}
    \end{equation*}
    Because $\applysubs{(\theta ; \subs')}{\alpha} = \applysubs{\subs'}{\theta(\alpha)}$ and $\applysubs{(\phi ; \subs*)}{\beta} = \applysubs{\subs*'}{\phi(\beta)}$, we must construct a derivation of the structural subtyping $\dsubtype{{\applysubs{\subs'}{\theta(\alpha)}}{} < {\applysubs{\subs*'}{\phi(\beta)}}{}}$.
    We can do so by appealing to the inductive hypothesis at the smaller substitution stack $\subs'$.
  \qedhere
  \end{description}
\end{proof}

\polysoundlem*
\begin{proof}
  By mutual coinduction on the (potentially infinite) derivations of $\dsubtype{\tau{\subs} < {_{\xi'} \sigma}{\subs*}}$ and $\dsubtype{A{\subs} < {_{\xi'} \sigma}{\subs*}}$.
  We consider only the case of $\xi' = \cov$.
  The case of $\xi' = \ctv$ is symmetric.
  \begin{enumerate}
  \item
    Assume that
    $\asubtype[\xi]{t[\alphas] < u[\betas*]} \entails \asubtype[\xi']{\tau < \sigma}$;
    $\asubtype[\xi]{t[\alphas] < u[\betas*]} \nentails \bot$; and
    $\dsubtype{\alpha{\subs} < {_\zeta \beta}{\subs*}}$ for each $\asubtype[\xi]{t[\alphas] < u[\betas*]} \entails \asubtype[\zeta]{\alpha < \beta}$.
    We will now distinguish cases on the types $\tau$ and $\sigma$.
    \begin{description}
    \item[Case:] Consider the case in which $\tau = t'[\theta']$ and $\sigma = u'[\phi']$, where $t'[\alphas'] \defd A'$ and $u'[\betas*'] \defd B'$.
      The derivation of $\dsubtype{\tau{\subs} < \sigma{\subs*}}$ can therefore begin with
      \begin{equation*}
        \infer[\prule{\jrule{INST-}}]{\dsubtype{{t'[\theta']}{\subs} < {u'[\phi']}{\subs*}}}{
          \dsubtype{{A'}{\theta' ; \subs} < {B'}{\phi' ; \subs*}}}
      \end{equation*}
      To construct a derivation of $\dsubtype{{A'}{\theta' ; \subs} < {B'}{\phi' ; \subs*}}$, we will appeal to the coinductive hypothesis.
      This appeal is valid because it will be guarded by the above rule.
      To make the appeal, we need to first establish the three preconditions.
      \begin{itemize}
      \item
        Because $t'[\alphas'] \defd A'$ and $u'[\betas*'] \defd B'$, it follows from the $\arule{\jrule{INST-}}$ rule that the saturated database contains $\asubtype[\xi']{t'[\alphas'] < u'[\betas*']} \entails \asubtype[\xi']{A' < B'}$.

      \item
        Assume, for the sake of contradiction, that $\asubtype[\xi']{t'[\alphas'] < u'[\betas*']} \entails \bot$.
        Because $\asubtype[\xi]{t[\alphas] < u[\betas*]} \entails \asubtype[\xi']{t'[\theta'] < u'[\phi']}$, it follows from the $\arule{\jrule{COMPOSE-}}_\bot$ rule that the saturated database would also contain  $\asubtype[\xi]{t[\alphas] < u[\betas*]} \entails \bot$.
        This contradicts our earlier assumption, so we conclude that $\asubtype[\xi']{t'[\alphas'] < u'[\betas*']} \nentails \bot$.

      \item 
        Choose an arbitrary $\asubtype[\xi']{t'[\alphas'] < u'[\betas*']} \entails \asubtype[\zeta']{\alpha' < \beta'}$;
        we must show that $\dsubtype{{\alpha'}{\theta' ; \subs} < {_{\zeta'} \beta'}{\phi' ; \subs*}}$.
        This derivation can begin with 
        \begin{equation*}
          \infer[\prule{\jrule{PARAM-}}]{\dsubtype{{\alpha'}{\theta' ; \subs} < {_{\zeta'} \beta'}{\phi' ; \subs*}}}{
            \dsubtype{{\theta'(\alpha')}{\subs} < {_{\zeta'} \phi'(\beta')}{\subs*}}}
        \end{equation*}
        To construct a derivation of $\dsubtype{{\theta'(\alpha')}{\subs} < {_{\zeta'} \phi'(\beta')}{\subs*}}$, we will again appeal to the coinductive hypothesis.
        This appeal is valid because it will be guarded, by this rule as well as the above rule.
        The second and third preconditions follow immediately from our earlier assumptions. 
        The first precondition, that $\asubtype[\xi]{t[\alphas] < u[\betas*]} \entails \asubtype[\zeta']{\theta'(\alpha') < \phi'(\beta')}$, follows from the $\arule{\jrule{COMPOSE-}}$ rule, given that both $\asubtype[\xi]{t[\alphas] < u[\betas*]} \entails \asubtype[\xi']{t'[\theta'] < u'[\phi']}$ and $\asubtype[\xi']{t'[\alphas'] < u'[\betas*']} \entails \asubtype[\zeta']{\alpha' < \beta'}$.
      \end{itemize}
    \end{description}

  \item
    Assume that
    $\asubtype[\xi]{t[\alphas] < u[\betas*]} \entails \asubtype[\xi']{A < B}$;
    $\asubtype[\xi]{t[\alphas] < u[\betas*]} \nentails \bot$; and
    $\dsubtype{\alpha{\subs} < {_\zeta \beta}{\subs*}}$ for each $\asubtype[\xi]{t[\alphas] < u[\betas*]} \entails \asubtype[\zeta]{\alpha < \beta}$.
    We will now distinguish cases on the types $A$ and $B$.
    \begin{description}
    \item[Case:] Consider the case in which $A = \plus*[\ell \in L]{\ell\colon \tau_\ell}$ and $B = \plus*[k \in K]{k\colon \sigma_k}$ with $L \subseteq K$.
      The derivation of $\dsubtype{A{\subs} < B{\subs*}}$ can therefore begin with
      \begin{equation*}
        \infer[\prule{\plus}]{\dsubtype{{\plus*[\ell \in L]{\ell\colon \tau_\ell}}{\subs} < {\plus*[k \in K]{k\colon \sigma_k}}{\subs*}}}{
          \forall \ell \in L\colon
            \dsubtype{{\tau_\ell}{\subs} < {\sigma_\ell}{\subs*}}}
      \end{equation*}
      To construct a derivation of $\dsubtype{{\tau_\ell}{\subs} < {\sigma_\ell}{\subs*}}$ for each $\ell \in L$, we will appeal to the coinductive hypothesis.
      This appeal is valid because it will be guarded by the above rule.
      To make the appeal, we need to first establish the three preconditions.
      The second and third preconditions were already assumed.
      The first precondition, that $\asubtype[\xi]{t[\alphas] < u[\betas*]} \entails \asubtype[\xi']{\tau_\ell < \sigma_\ell}$, follows from the $\arule{\plus}$ rule, given that $\asubtype[\xi]{t[\alphas] < u[\betas*]} \entails \asubtype[\xi']{\plus*[\ell \in L]{\ell\colon \tau_\ell} < \plus*[k \in K]{k\colon \sigma_k}}$.
    \qedhere
    \end{description}    
  \end{enumerate}
\end{proof}

\polysoundthm*
\begin{proof}
  By structural induction on the finite derivation of $\asubtype[\xi]{\tau < \sigma}$.
  \begin{description}
  \item[Case:]
    Consider the case in which the derivation of $\asubtype[\xi]{\tau < \sigma}$ begins with:
    \begin{equation*}
      \infer[\brule{\jrule{COMPOSE-}}]{\asubtype[\xi]{t[\theta] < u[\phi]}}{
        \asubtype[\xi]{t[\alphas] < u[\betas]} \nentails \bot &
        \forall (\asubtype[\xi]{t[\alphas] < u[\betas]} \entails \asubtype[\zeta]{\alpha < \beta})\colon
          \asubtype[\zeta]{\theta(\alpha) < \phi(\beta)}}
    \end{equation*}
    where $t[\alphas] \defd A$ and $u[\betas*] \defd B$ for some structural types $A$ and $B$.
    Because $t[\alphas] \defd A$ and $u[\betas*] \defd B$, it follows that the saturated database contains $\asubtype[\xi]{t[\alphas] < u[\betas*]} \entails \asubtype[\xi]{A < B}$.
    By the inductive hypothesis on the subderivations and the $\prule{\jrule{PARAM-}}$ rule, we have $\dsubtype{\alpha{\theta ; \subs} < {_\zeta \beta}{\phi ; \subs*}}$ for each $\asubtype[\xi]{t[\alphas] < u[\betas*]} \entails \asubtype[\zeta]{\alpha < \beta}$.
    Therefore, we may appeal to \cref{lem:poly-sound} to deduce that $\dsubtype{A{\theta ; \subs} < {_\xi B}{\phi ; \subs*}}$.
    Applying the $\prule{\jrule{INST-}}$ rule, $\dsubtype{{t[\theta]}{\subs} < {_\xi u[\phi]}{\subs*}}$, as required.

  \item[Case:]
    Consider the case in which the derivation of $\asubtype[\xi]{\tau < \sigma}$ is:
    \begin{equation*}
      \infer[\brule{\jrule{VAR-}}]{\asubtype[\xi]{x < x}}{}
    \end{equation*}
    It follows immediately from the $\prule{\jrule{VAR-}}$ rule that $\dsubtype{x{\subs} < {_\xi x}{\subs*}}$.
  \qedhere
  \end{description}
\end{proof}

\polycompletelem*
\begin{proof}
  Each part is proved in sequence.
  \begin{enumerate}
  \item
  Assume that $\dsubtype{{t[\theta]}{\subs} < {_\xi u[\phi]}{\subs*}}$ and that $\asubtype[\xi]{t[\alphas] < u[\betas*]} \entails \asubtype[\xi']{A < B}$.
  Because $\asubtype[\xi']{A < B}$ is a consequence of $\asubtype[\xi]{t[\alphas] < u[\betas*]}$, inversion allows us to deduce that $t[\alphas] \defd A$, $u[\betas*] \defd B$, and $\xi' = \xi$.
  By inversion on the derivation of $\dsubtype{{t[\theta]}{\subs} < {_\xi u[\phi]}{\subs*}}$, we therefore have $\dsubtype{A{\theta ; \subs} < {_\xi B}{\phi ; \subs*}}$, just as obliged.

  \item
  By structural induction on the finite derivation of $\asubtype[\xi]{t[\alphas] < u[\betas*]} \entails \asubtype[\xi']{\tau < \sigma}$.
  \begin{description}
  \item[Case:] 
    \begin{equation*}
      \infer{\asubtype[\xi]{t[\alphas] < u[\betas]} \entails \asubtype[\zeta]{\theta'(\alpha') < \phi'(\beta')}}{
        \asubtype[\xi]{t[\alphas] < u[\betas]} \entails \asubtype[\xi']{t'[\theta'] < u'[\phi']} &
        \asubtype[\xi']{t'[\alphas'] < u'[\betas']} \entails \asubtype[\zeta]{\alpha' < \beta'}}
    \end{equation*}
    We must show that $\dsubtype{{\theta'(\alpha')}{\theta ; \subs} < {_\zeta \phi'(\beta')}{\phi ; \subs*}}$.

    Appealing to the inductive hypothesis on the left-hand subderivation, we may deduce $\dsubtype{{t'[\theta']}{\theta ; \subs} < {_{\xi'} u'[\phi']}{\phi ; \subs*}}$.
    With this in hand, we then appeal to the inductive hypothesis on the right-hand subderivation to deduce $\dsubtype{{\alpha'}{\theta' ; (\theta ; \subs)} < {_\zeta \beta'}{\phi' ; (\phi ; \subs*)}}$.
    By inversion, we have $\dsubtype{{\theta'(\alpha')}{\theta ; \subs} < {_\zeta \phi'(\beta')}{\phi ; \subs*}}$, just as required.

  \item[Case:]
    \begin{equation*}
      \infer{\asubtype[\xi]{t[\alphas] < u[\betas]} \entails \asubtype[\xi']{\tau_1 < \sigma_1}}{
        \asubtype[\xi]{t[\alphas] < u[\betas]} \entails \asubtype[\xi']{\tau_1 \tensor \tau_2 < \sigma_1 \tensor \sigma_2}}
    \end{equation*}
    We must show that $\dsubtype{{\tau_1}{\theta ; \subs} < {_{\xi'} \sigma_1}{\phi ; \subs*}}$.

    Applying part (1) of \cref{lem:poly-complete} to the subderivation, we may deduce $\dsubtype{{\tau_1 \tensor \tau_2}{\theta ; \subs} < {_{\xi'} \sigma_1 \tensor \sigma_2}{\phi ; \subs*}}$.
    By inversion, $\dsubtype{{\tau_1}{\theta ; \subs} < {_{\xi'} \sigma_1}{\phi ; \subs*}}$, as obliged.

  \item[Cases:] The other cases are similar.
  \end{description}

  \item
  We will prove that $\dsubtype{{t[\theta]}{\subs} < {_\xi u[\phi]}{\subs*}}$ and $\asubtype[\xi]{t[\alphas] < u[\betas*]} \entails \bot$ together imply a meta-contradiction, by induction on the finite derivation of $\asubtype[\xi]{t[\alphas] < u[\betas*]} \entails \bot$.
  \begin{description}
  \item[Case:] 
    \begin{equation*}
      \infer{\asubtype[\xi]{t[\alphas] < u[\betas]} \entails \bot}{
        \asubtype[\xi]{t[\alphas] < u[\betas]} \entails \asubtype[\xi']{t'[\theta'] < u'[\phi']} &
        \asubtype[\xi']{t'[\alphas'] < u'[\betas']} \entails \bot}
    \end{equation*}
    Applying part (2) of \cref{lem:poly-complete} to the left-hand subderivation, we may deduce $\dsubtype{{t'[\theta']}{\theta ; \subs} < {_{\xi'} u'[\phi']}{\phi ; \subs*}}$.
    Then, appealing to the inductive hypothesis on the right-hand subderivation, we have a meta-contradiction, as required.

  \item[Case:] 
    \begin{equation*}
      \infer{\asubtype[\xi]{t[\alphas] < u[\betas]} \entails \bot}{
        \asubtype[\xi]{t[\alphas] < u[\betas]} \entails \asubtype[\xi']{\alpha < \sigma} &
        (\sigma \neq \beta)}
    \end{equation*}
    Applying part (2) of \cref{lem:poly-complete} to the left-hand subderivation, we may deduce $\dsubtype{\alpha{\theta ; \subs} < {_{\xi'} \sigma}{\phi ; \subs*}}$.
    Inversion of this derivation yields a meta-contradiction, as required, because there is no declarative subtyping rule that will conclude this judgment when the type $\sigma$ is not a type parameter.

  \item[Case:] 
    \begin{equation*}
      \infer{\asubtype[\xi]{t[\alphas] < u[\betas]} \entails \bot}{
        \asubtype[\xi]{t[\alphas] < u[\betas]} \entails \asubtype[\xi']{\plus*[\ell \in L]{\ell\colon \tau_\ell} < \plus*[k \in K]{k\colon \sigma_k}} &
        (L \nsubseteq_{\xi'} K)}
    \end{equation*}
    Applying part (1) \cref{lem:poly-complete} to the subderivation, we may deduce $\dsubtype{{\plus*[\ell \in L]{\ell\colon \tau_\ell}}{\theta ; \subs} < {_{\xi'} \plus*[k \in K]{k\colon \sigma_k}}{\phi ; \subs*}}$.
    By inversion, we see that $L \subseteq_{\xi'} K$.
    This contradicts the above $L \nsubseteq_{\xi'} K$.

  \item[Cases:] 
    Each of the other cases is similar to one of the above.
  \qedhere
  \end{description}
  \end{enumerate}
\end{proof}

\polycomplete*
\begin{proof}
  By structural induction on the named type $\tau$.
  \begin{description}
  \item[Case:] 
    Consider the case in which $\tau = t[\theta]$.
    By inversion, the derivation of $\dsubtype{\tau{\subs} < \sigma{\subs*}}$ is one of $\dsubtype{{t[\theta]}{\subs} < {u[\phi]}{\subs*}} \vof \xi$.
    Therefore, the finite derivation of $\asubtype[\xi]{\tau < \sigma}$ can begin with the following rule, so long as we can derive its premises.
    \begin{equation*}
      \infer{\asubtype[\xi]{t[\theta] < u[\phi]}}{
          \asubtype[\xi]{t[\alphas] < u[\betas]} \nentails \bot
        &
          \forall (\asubtype[\xi]{t[\alphas] < u[\betas]} \entails \asubtype[\zeta]{\alpha < \beta})\colon
            \asubtype[\zeta]{\theta(\alpha) < \phi(\beta)}}
    \end{equation*}
    These premises are indeed derivable.
    \begin{itemize}
    \item
      The first premise is satisfied by an appeal to part (3) of \cref{lem:poly-complete}.

    \item
      Let $\asubtype[\xi]{t[\alphas] < u[\betas*]} \entails \asubtype[\zeta]{\alpha < \beta}$ be an arbitrary parameter consequence of $\asubtype[\xi]{t[\alphas] < u[\betas*]}$.
      Applying part (2) of \cref{lem:poly-complete}, we have $\dsubtype{\alpha{\theta ; \subs} < \beta{\phi ; \subs*}} \vof \zeta$.
      By inversion on this derivation, we may deduce that $\dsubtype{{\theta(\alpha)}{\subs} < {\phi(\beta)}{\subs*}} \vof \zeta$.
      Appealing to the inductive hypothesis at a smaller type $\theta(\alpha)$, we may deduce that $\asubtype[\zeta]{\theta(\alpha) < \phi(\beta)}$.
    \end{itemize}
    This fills in the premise of the above rule, completing the required derivation of $\asubtype[\xi]{\tau < \sigma}$.

  \item[Case:]
    Consider the case in which $\tau = x$.
    By inversion, the derivation of $\dsubtype{\tau{\subs} < \sigma{\subs*}} \vof \xi$ is:
    \begin{equation*}
      \infer[\prule{\jrule{VAR-}}]{\dsubtype{x{\subs} < {_\xi x}{\subs*}}}{}
    \end{equation*}
    The required $\asubtype[\xi]{x < x}$ follows immediately from $\brule{\jrule{VAR-}}$ rule.
  \qedhere
  \end{description}
\end{proof}

\end{document}